\def\withcolors{1}
\def\withnotes{1}
  \newcommand{\newest}[1]{{\color{orange} {#1}}} % Even even newer stuff
  \newcommand{\newest}[1]{{{#1}}}
  \theoremstyle{definition}
  \newtheorem{definition}{Definition}
  \theoremstyle{plain}
  \newtheorem{theorem}{Theorem}
  \newtheorem{corollary}{Corollary}
  \newtheorem{lemma}{Lemma}
  \theoremstyle{remark}
  \newtheorem{remark}{Remark}
\newcommand{\ignore}[1]{}
\newcommand{\EE}{\mathbb{E}}
\newcommand{\CC}{\mathbb{C}}
\newcommand{\NN}{\mathbb{N}}
\newcommand{\RR}{\mathbb{R}}
\newcommand{\expectation}[1]{\EE\left[#1\right]}
\newcommand{\variance}[1]{\mathrm{Var}\left(#1\right)}
\def \cE     {{\cal E}}
\def \cL     {{\cal L}}
\def \cS     {{\cal S}}
\def \cX     {{\cal X}}
\newcommand{\Var}{{\rm Var}}
\def \eqed    {\eqno{\qed}}
\def \upto  {{,}\ldots{,}}
\newcommand{\absv}[1]{\left|#1\right|}
\def \Paren#1{{\left({#1}\right)}}
\def \Brack#1{{\left[{#1}\right]}}
\newcommand{\ed}{\stackrel{\text{def}}{=}}
\newcommand{\probof}[1]{\Pr\Paren{#1}}
\def\ignore#1{}
\newcommand{\bi}{\begin{itemize}}
\newcommand{\ei}{\end{itemize}}
\newcommand{\flnpwr}[2]{{#1^{\underline{#2}}}} % Falling power
\def\orpro{\mathop{\mathchoice
   {\vee\kern-.49em\raise.7ex\hbox{$\cdot$}\kern.4em}
   {\vee\kern-.45em\raise.63ex\hbox{$\cdot$}\kern.2em}
   {\vee\kern-.4em\raise.3ex\hbox{$\cdot$}\kern.1em}
   {\vee\kern-.35em\raise2.2ex\hbox{$\cdot$}\kern.1em}}\limits}
\def\andpro{\mathop{\mathchoice
 {\wedge\kern-.46em\lower.69ex\hbox{$\cdot$}\kern.3em}
 {\wedge\kern-.46em\lower.58ex\hbox{$\cdot$}\kern.25em}
 {\wedge\kern-.38em\lower.5ex\hbox{$\cdot$}\kern.1em}
 {\wedge\kern-.3em\lower.5ex\hbox{$\cdot$}\kern.1em}}\limits}
\def\simge{\mathrel{%
   \rlap{\raise 0.511ex \hbox{$>$}}{\lower 0.511ex \hbox{$\sim$}}}}
\def\simle{\mathrel{
   \rlap{\raise 0.511ex \hbox{$<$}}{\lower 0.511ex \hbox{$\sim$}}}}
\newcommand{\eig}{\eta}
\newcommand{\eigi}[1]{\eig_{#1}}
\newcommand{\eigv}{{\boldsymbol{\eig}}}
\newcommand{\eignu}{\nu}
\newcommand{\eignuv}{{\boldsymbol{\eignu}}}
\newcommand{\eiglb}{{\boldsymbol{\eignu}}}
\newcommand{\eiglbi}[1]{\eiglb_{#1}}
\newcommand{\ent}[1]{S\Paren{#1}}
\newcommand{\entmst}{\ent{\mst}}
\newcommand{\empent}{\widehat{\ent{\mst}}}
\newcommand{\empents}[1]{\widehat{\ent{#1}}}
\newcommand{\p}{p}
\newcommand{\q}{q}
\newcommand{\rentp}[1]{H_{\renprm}(p)}
\newcommand{\dtv}[2]{d_{TV}\Paren{#1,#2}}
\newcommand{\renprm}{\alpha}
\newcommand{\rent}[2]{\cS_{#1}\Paren{#2}}
\newcommand{\rentprmmst}{\rent{\renprm}{\mst}}
\newcommand{\phash}[1]{p_{#1}^\#}
\newcommand{\Map}[2]{M_{#1}\left(#2\right)}
\newcommand{\Mapeigv}[1]{\Map {#1} {\eigv}}
\newcommand{\Malamb}[1]{\Map {#1} {\lamb}}
\newcommand{\smpcmpa}[1]{S(H_{\renprm}, \dims, \eps)}
\newcommand{\smpcmph}{S(H, \dims, \eps)}
\newcommand{\copycmpa}[1]{C(S_{\renprm}, \dims, \eps)}
\newcommand{\copycmph}{C(S, \dims, \eps)}
\newcommand{\dims}{d}
\newcommand{\eps}{\varepsilon}
\newcommand{\mst}{\rho}
\newcommand{\ns}{n}
\newcommand{\schurlx}[2]{s_{#1}\Paren{#2}}
\newcommand{\chisq}[2]{\chi^2\Paren{#1, #2}}
\newcommand{\Xon}{X^{\ns}}
\newcommand{\yd}{y_1^{\dims}}
\newcommand{\ydp}{y^{\dims}_+}
\newcommand{\mltu}{\mu}
\newcommand{\proh}[2]{p_{#1}^{\#}(#2)}
\newcommand{\eighati}[1]{\widehat{\eigv_{#1}}}
\newcommand{\lamb}{{\boldsymbol{\lambda}}}
\newcommand{\lambi}[1]{{\lamb_{#1}}}
\newcommand{\eigvi}[1]{\eigv_{#1}}
\newcommand{\swdist}[1]{SW_{#1}}
\newcommand{\tr}[1]{{\rm tr}\Paren{#1}}
\newcommand{\diffmean}{B}
\newcommand{\prp}{f}
\newcommand{\fhat}{\hat{f}}
\newcommand{\unitary}{U(d)}
\title{Measuring Quantum Entropy}
\author{Jayadev Acharya\footnote{The authors are listed in alphabetical order. Part of the work performed when II was a student at Cornell University. JA was supported by a start-up grant from Cornell University. II, NVS, ABW were supported by US National Science Foundation under grants CCF-1704443 and CCF-1513858.}\\
Cornell University\\
\tt{acharya@cornell.edu}
\and
Ibrahim Issa$^*$\\
EPFL\\
\tt{ibrahim.issa@epfl.ch}
\and
Nirmal V. Shende$^*$\\
Cornell University\\
\tt{nvs25@cornell.edu}
\and 
Aaron B. Wagner$^*$\\
Cornell University\\
\tt{wagner@cornell.edu}
}
\begin{document}
\addtocounter{page}{-1}
\maketitle 
\thispagestyle{empty}
\begin{abstract}
The entropy of a quantum system is a measure of its randomness, and has applications in measuring quantum entanglement. We study the problem of measuring the von Neumann entropy, $S(\rho)$, and R\'enyi entropy, $S_\alpha(\rho)$ of an unknown mixed quantum state $\rho$ in $d$ dimensions, given access to independent copies of $\rho$. 

We provide an algorithm with copy complexity $O(d^{2/\alpha})$ for estimating $S_\renprm(\rho)$ for $\alpha<1$, and copy complexity $O(d^{2})$ for estimating $S(\rho)$, and  $S_\alpha(\rho)$ for non-integral $\alpha>1$. These bounds are at least quadratic in $d$, which is the order dependence on the number of copies required for learning the entire state $\rho$. For integral $\alpha>1$, on the other hand, we provide an algorithm for estimating $S_\alpha(\rho)$ with a sub-quadratic copy complexity of $O(d^{2-2/\alpha})$. We characterize the copy complexity for integral $\alpha>1$ up to constant factors by providing matching lower bounds. For other values of $\alpha$, and the von Neumann entropy, we show lower bounds on the algorithm that achieves the upper bound. This shows that we either need new algorithms for better upper bounds, or better lower bounds to tighten the results. 

For non-integral $\alpha$, and the von Neumann entropy, we consider the well known Empirical Young Diagram (EYD) algorithm, which is the analogue of empirical plug-in estimator in classical distribution estimation. As a corollary, we strengthen a lower bound on the copy complexity of the EYD algorithm for learning the maximally mixed state by showing that the lower bound holds with exponential probability (which was previously known to hold with a constant probability). For integral $\alpha>1$, we provide new concentration results of certain polynomials that arise in Kerov algebra of Young diagrams.
\end{abstract}

\newpage
%%%%%%%%%%%%%%%%%%%%% Pre-submission stuff
%\addtocounter{page}{-1}

\newpage
\renewcommand*{\arraystretch}{1.3}

\section{Introduction}

We consider how to estimate the mixedness or noisiness
of a quantum state using measurements of independent copies of the
state. Mixed quantum states can arise in practice in various
ways. Classical stochasticity can be intentionally introduced
when the state is originally prepared. Pure states can become
mixed by a quantum measurement. And the states of the subsystems
of bipartite states can be mixed even when the overall bipartite
state is pure, which forms the basis for purification.

In the third case, the level of mixedness of the subsystems
indicates the level of entanglement in the pure, bipartite
system. The possibility of entanglement of two separated systems
is arguably the most curious, and the most powerful, way in
which quantum systems differ from classical ones. Indeed,
entanglement has been fruitfully exploited as a resource in
a number of quantum information processing 
protocols~\cite{Bennett92,Bennett93,Bennett02,Devetak04,Hsieh10}. 
The subsystems of a pure bipartite state are pure if and only if the 
bipartite state itself is unentangled, and likewise they are 
maximally mixed if and only if the bipartite state is maximally entangled. 
Thus the mixedness of the subsystems' states can be used
as a measure of entanglement of the bipartite system.

Mixedness can be measured in multiple ways. We shall use the
von Neumann and (the family of) R\'enyi entropies, which correspond to 
the classical Shannon and (the family of) R\'enyi entropies of the 
eigenvalues of the density operator of the state, respectively. 
A density matrix (or operator) $\mst$ is a complex positive semidefinite
matrix with unit trace; thus its eigenvalues are nonnegative and
sum to one.  The von Neumann entropy of a density matrix $\mst$ is 
\begin{align}
\entmst \ed -\tr{\mst\log\mst}.\nonumber	
\end{align}
For $\renprm>0, \renprm\ne1$, the R\'enyi entropy of order $\renprm$ of $\mst$ is 
\begin{align}
S_\renprm (\mst) \ed \frac1{1-\renprm}\log \tr{\mst^{\renprm}}.\nonumber	
\end{align}
In the limit of  $\renprm\to1$,
\[
\lim_{\renprm\to1} S_\renprm (\mst) = \entmst. 
\]
The classical
Shannon and R\'enyi entropies are well-accepted measures of
randomness, and can be derived 
axiomatically~\cite[pp.~25-27]{CK:IT}.
Both the classical and quantum versions can be justified 
operationally as a measure of
compressibility~\cite{CK:IT, Schumacher95, JozsaSchumacher94, Lo95}. 
The quantum versions have been explicitly proposed for 
quantifying entanglement~\cite{CardyL12}.

In principle, both the von Neumann and R\'enyi entropies
for a quantum state $\mst$ can be computed if the state
is known. We consider how to estimate these quantities 
for an unknown state 
given independent copies of the state, to which arbitrary
quantum measurements followed by arbitrary
classical computation can be applied. This problem arises when characterizing
a completely unknown system and when one seeks to experimentally
verify that a system is behaving as desired. Since generating
independent copies of a state can be quite costly in the quantum
setting~\cite{Haeffner05,Ma12}, it is desirable to minimize the number of 
independent copies of the state that are required to estimate 
the von Neumann and R\'enyi entropies to a desired precision and 
confidence. We thus adopt this \emph{copy complexity} as our 
figure-of-merit.

Using standard results in quantum state estimation, we reduce
our problem to one that is fully classical. We first describe
this fully-classical problem, which is potentially of interest
in its own right.
\subsection{Quantum-Free Formulation}
\label{sec:qff}
Let $\p$ be a distribution over $[\dims]\ed\{1,\upto\dims\}$. A property $f(\p)$ is a mapping of distributions to real numbers. A property $f$ is said to be symmetric (or label-invariant) if it is a function of only the multiset of probability values, and not the ordering. For example, the Shannon entropy $H(\p) = -\sum_i p(i)\log \p(i)$ is symmetric, since it is only a function of the probability values. 
\paragraph{Classical symmetric property estimation.} We are given independent samples $X^{\ns} \ed X_1, \ldots, X_\ns$ from an unknown distribution $\p$, and the goal is to estimate a symmetric property $f(\p)$ up to a $\pm\eps$ factor, with probability at least 2/3.
\paragraph{Quantum state property estimation.} The problem of estimating von Neumann and R\'enyi entropies of a quantum state can be shown to be equivalent to estimating a symmetric property $f(\p)$ of some distribution. However, instead of being given independent samples $X_1, \ldots, X_\ns$ from the distribution $\p$ as in the classical case, we are given access to a function $\lamb(X^\ns)=\lambi{1}\ge\lambi{2}\ge\ldots$ of $X^{\ns}$. Here $\lambi{1}, \lambi{2},\ldots$ are integers satisfying the following property.
\begin{itemize}
	\item For any $k\ge1$, $\sum_{i=1}^{k} \lambi{i}$ is equal to the largest possible sum of the lengths of $k$ disjoint non-decreasing subsequences of $X^{\ns}$. 
\end{itemize}
Equivalently, we may view the observations as the output of 
the Robinson–Schensted–Knuth (RSK) algorithm applied to the sequence
$X^n$, instead of being $X^n$ itself. The reader is referred to~\cite{ODonnellW17} for more details on the procedure. The copy complexity of estimating quantum entropy turns out to be equivalent to the problem of estimating classical entropy when given access to $\lamb(X^\ns)$. A simple data processing of the form $p\to X^\ns\to \lamb(X^\ns)$ shows that the complexity of estimating a quantum state property is at least as hard as estimating the same property in the classical setting. 

\subsection{Organization}
The paper is organized as follows. In Section~\ref{sec:our-results}, we state our results, followed by a brief description of our tools in Section~\ref{sec:techniques}, and related work in Section~\ref{sec:related-work}. Section~\ref{sec:quant} gives a summary of the quantum set-up. Section~\ref{sec:preliminaries} provides the preliminary results needed for setting up the paper. In particular, Section~\ref{sec:wss} describes the optimal quantum measurement for the class of properties we are interested in. Section~\ref{sec:int-alpha} proves our bounds for integral order R\'enyi entropy. Section~\ref{sec:non-int} proves the upper bounds for non-integral orders, and Section~\ref{sec:lb-large} shows the lower bounds on the performance of the empirical estimator.

\subsection{Our Results}
\label{sec:our-results}

We consider the following problem.
\begin{center}
 \framebox{
\begin{minipage}{13.5cm} $\Pi(f,\dims,\eps)$: Given a property $f$, and access to independent copies of a $\dims$-dimensional mixed state $\mst$ (e.g. output of some quantum experiment), how many copies are needed to estimate $f(\mst)$ to within $\pm\eps$?\footnotemark
\end{minipage}
}\footnotetext{We seek
  success with probability at least $2/3$, which can be boosted to
  $1-\delta$ by repeating the algorithm $O(\log(1/\delta))$ times and taking the median.}
\end{center}

We study the copy complexity of estimating the entropy of a mixed state of dimension $\dims$. The copy complexity, denoted by $C(\prp, \dims, \eps)$, is the minimum number of copies required for an algorithm that solves $\Pi(f,\dims,\eps)$. Copy complexity is defined precisely in Section~\ref{sec:property-estimation}. 

We will use the standard asymptotic notations. We will be interested in characterizing the dependence of  $C(S, \dims, \eps)$, and $C(S_\renprm, \dims, \eps)$, as a function of $\dims$ and $\eps$. We assume the parameter $\renprm$ to be a constant, and focus on only the growth rate as a function of $\dims$ and $\eps$. %For example, $C(S_\renprm, \dims, \eps) = O(\dims^{\beta_1}/\eps^{\beta_2})$ denotes that for sufficiently large $\dims$, and sufficiently small $\eps$, $C(S_\renprm, \dims, \eps) < C\cdot \dims^{\beta_1}/\eps^{\beta_2}$, where $C$ may depend on $\renprm$. We can similarly define $\Theta(\cdot)$, and $\Omega(\cdot)$. 

We will now discuss our results, which are summarized in Table~\ref{tab:results-one} and Table~\ref{tab:results-two}. 
For comparison purposes, it is useful to recall the copy complexity
of quantum tomography, in which the goal is to learn the entire density matrix $\rho$. The problem has been studied in various works using various distance measures; and up to poly-logarithmic factors, for the standard distance measures, the copy complexity depends quadratically on the dimension $\dims$. Namely, it is  $\tilde{O}(\dims^2)$.\footnote{We discuss the copy complexity of some other problems in related work (Section~\ref{sec:related-work}).}
Similar to the sample complexity of estimating R\'enyi entropies of classical distributions from samples, our bounds are also dependent on whether $\renprm$ is less than one, and whether it is an integer. (See Table I of~\cite{AcharyaOST17}, and Section~\ref{sec:classical} for the sample complexity in classical settings.) We organize our results as a function of $\renprm$ as follows. 

\paragraph{Integral $\renprm>1$.} We obtain our most optimistic %\todo{Not sure I would use the word "optimistic", which seems to imply some future prediction. Not sure what to change it to though. maybe positive? Ibrahim} 
and conclusive results in this case. In Theorem~\ref{thm:alpha-int}, we show that $\copycmpa{\renprm} =\Theta\Paren{\max\left\{\frac{\dims^{1-1/\renprm}}{\eps^2},\frac{\dims^{2-2/\renprm}}{\eps^{2/\renprm}} \right\}}$. We note that the lower bounds here hold for \emph{all estimators}, not just of the estimators used in the upper bound. Furthermore, these bounds are sub-quadratic in $\dims$, namely we can estimate the R\'enyi entropy of integral orders even before we have enough copies to perform full tomography. The upper bounds are established by analyzing certain polynomials from representation theory that are related to the central characters of the symmetric group. The main contribution is to analyze the variance of these estimators, for which we draw upon various results from Kerov's algebra. For the lower bound, we design the spectrums of two mixed states such that their R\'enyi entropy differ by at least $\eps$, but require a large copy complexity to distinguish between them. We use various properties of Schur polynomials and other properties of integer partitions~\cite{Macdonald98, HardyR18}. %The lower bound of $\Omega(\dims^2/\eps)$ holds for the EYD algorithms even for integral $\renprm>1$, thereby showing that perhaps the involved tools used are necessary for this case. 

\begin{remark}
The first term in the complexity dominates when $\eps<1/\sqrt{\dims}$, and is identical to the sample complexity of estimating R\'enyi entropy in the classical setting.
\end{remark}

\paragraph{$\renprm<1$.}
We analyze the Empirical Young Diagram (EYD) algorithm~\cite{AlickiRS88, KeylW01} for estimating $\rentprmmst$ for $\renprm<1$. The EYD algorithm is similar to using a plug-in estimate of the empirical distribution to estimate properties in classical distribution property estimation. We show that $C(S_\renprm, \dims, \eps) = O(\dims^{2/\renprm}/\eps^{2/\renprm})$. Since $\renprm<1$, this growth is faster than quadratic, namely the EYD algorithm requires more copies than is required for tomography. We complement this result by showing that in fact the EYD algorithm requires $\Omega(\dims^{1+1/\renprm}/\eps^{1/\renprm})$ copies, showing that the super-quadratic dependence on $\dims$ is necessary for the EYD algorithm. The upper bound is proved in Theorem~\ref{thm:non-int-upper-small-alpha}, and the lower bound in Theorem~\ref{thm:lb-non-int-small}. In comparison, in the classical setting the exponent of $\dims$ is almost $1/\renprm$. 

\paragraph{von Neumann entropy, $\renprm=1$.} Again using the EYD algorithm, in Theorem~\ref{thm:von-neumann-empirical} we show that $\copycmph = O(\dims^2/\eps^2)$. We formulate an optimization problem whose solutions are an upper bound on the bias of the empirical estimate, and we bound the variance by proving that the estimator has a small bounded difference constant. In Theorem~\ref{thm:lb-non-int-large} we show a lower bound of $\Omega(\dims^2/\eps)$ for the EYD estimator to estimate the entropy of the maximally mixed state. This complexity is still similar to that of full quantum tomography. 

\paragraph{Non integral $\renprm>1$.}
Again using the EYD algorithm, in Theorem~\ref{thm:non-int-upper-large-alpha}, we show that $C(S_\renprm, \dims, \eps) = O(\dims^{2}/\eps^2)$. We also provide a lower bound of $\Omega(\dims^2/\eps)$ for the EYD estimator in Theorem~\ref{thm:lb-non-int-large}. 

In addition to these results, we improve the error probability of the lower bounds on the convergence of EYD algorithm to the true spectrum. In particular, for the uniform distribution~\cite{ODonnellW15} have shown that unless the number of copies is at least $\Omega(\dims^2/\eps^2)$ the EYD has a total variation distance of at least $\eps$ with probability at least 0.01. We show that in fact unless the number of copies is at least $\Omega(\dims^2/\eps^2)$ the trace distance is at least $\eps$ with probability at least $1-\exp(c\cdot \dims^2)$ for some constant $c$. 

\begin{table}[t]
\renewcommand{\arraystretch}{1.3}
\begin{minipage}{.45 \linewidth} 
\renewcommand{\arraystretch}{1.3}
\begin{tabular}{|c|c|c|}\hline
	Upper Bound & Lower Bound\\ \hline
$O\left(\max\left\{\frac{\dims^{2-\frac2\renprm}}{\eps^{\frac2\renprm}}, \frac{\dims^{1-\frac1\renprm}}{\eps^2}\right\}\right)$ & $\Omega\left(\max\left\{\frac{\dims^{2-\frac2\renprm}}{\eps^{\frac2\renprm}}, \frac{\dims^{1-\frac1\renprm}}{\eps^2}\right\}\right)$\\ \hline	
\end{tabular}
\caption{Copy complexity of $\rentprmmst$ for integral $\renprm>1$.}
\label{tab:results-one}
\end{minipage}
\ \ \ \ \ \ \ \ \ \ \ \ \ 
\begin{minipage}{.3\linewidth}
      \begin{tabular}{| c | c | c |}\hline
      $\renprm$ & Upper Bound & Lower Bound\\ \hline
$\renprm>1$ & $O(\dims^2/\eps^2)$ & $\Omega(\dims^2/\eps)$\\ \hline
$\renprm <1$ & $O(\dims^{2/\renprm}/\eps^{2/\renprm})$ & {$\Omega(\dims^{1+1/\renprm}/\eps^{1/\renprm})$}\\ \hline
$\renprm=1$ & $O(\dims^2/\eps^2)$ &  $\Omega(\dims^2/\eps)$   \\ \hline
      \end{tabular}
    \caption{Copy complexity of empirical estimators.}
          \label{tab:results-two}
   \end{minipage}
\end{table}

\subsection{Our Techniques}
\label{sec:techniques}

In this section, we provide a high level overview of the technical contributions of our paper. 

The entropy functions that we consider are unitarily invariant properties (Section~\ref{sec:unitarily}), namely they depend only on the multiset of eigenvalues of the density matrix. For example, a density matrix $\mst$ with eigenvalyes $\eigvi{1},\ldots, \eigvi{\dims}$, we have $S(\mst) = -\sum_i \eigvi{i}\log \eigvi{i}$, and $\rentprmmst = \log(\sum_i \eigvi{i}^{\renprm})/(1-\renprm)$, meaning that von Neumann, and R\'enyi entropy are unitarily invariant properties. For such properties, it is known that an optimal measurement scheme over the set of all measurements is the weak Schur Sampling (WSS) (Section~\ref{sec:wss}). The output of this measurement is a partition of $\lamb\vdash\ns$, usually denoted by a Young diagram (Section~\ref{sec:schur-power}), the number of independent copies of $\mst$ used. The goal is then to estimate the entropy from the output Young diagram supplied by WSS. 

Estimating R\'enyi entropy is equivalent to obtaining multiplicative estimates of the power sum $\Mapeigv{\renprm} \ed \sum \eigvi{i}^\renprm$. In the classical setting, it turns out that for integral $\renprm>1$, there are simple unbiased estimators of $\Mapeigv{\renprm}$. In the quantum setting, for integral $\renprm$, there are unbiased estimators for $\Mapeigv{\renprm}$. These estimators are now polynomials (called $p^{\#}$) over Young tableaus obtained from Kerov's algebra. While the estimator itself is simple to state in terms of $p^{\#}$ polynomials, bounding its variance requires a number of intricate arguments. Using results from representation theory about $p^{\#}$, we first write the variance of the estimator as a linear combination of $p^{\#}$ polynomials. We use combinatorial arguments about the cycle structure of compositions of permutations, and use that to show that only a certain subset of $p^{\#}$'s can appear in the variance expression. Moreover, the number of $p^{\#}$'s can be bounded using the Hardy-Ramanujam bounds on the partition numbers. We also provide bounds on the coefficients to finally obtain the upper bound for integral $\renprm$ (Theorem~\ref{thm:alpha-int}). %Interestingly, of the one of the two terms in the upper bound is equal to the classical sample complexity. 

For the lower bound for integral $\renprm$, one of the terms follows from the classical lower bounds, and the fact that estimation is easier in the classical setting than in the quantum setting. To prove a lower bound equal to the second term, we invoke the classical Le Cam's method combined with results on Schur polynomials and partition numbers.

Our upper bounds for von Neumann entropy and for non-integral $\alpha$ use the Empirical Young Diagram (EYD) algorithm (Section~\ref{sec:eyd}). This is akin to the empirical plug-in estimators for distribution property estimation. Our upper bounds require various bias and concentration results on the Young-tableaux. Fortunately, in the recent works of O'Donnell and Wright, a number of such bounds were proved. We build upon their results, and prove some additional results to show the copy complexity bounds for the EYD algorithm. 

To prove the lower bounds for the EYD algorithm, we design eigenvalues such that unless the number of copies is large enough, the EYD algorithm cannot concentrate around the true entropy. 

One of our contributions pertains to the convergence of the empirical Young diagram to the true distribution. A lower bound of $\dims^2/\eps^2$ was shown by~\cite{ODonnellW15}. However, their results only holds with a constant probability (with probability 0.01 to be precise). We show \emph{very sharp concentration} by invoking McDiarmid's inequality. We show that unless the number of samples is more than $\dims^2/\eps^2$ the empirical Young diagram's lower bound holds with probability $1-\exp(-c\dims)$ for some constant $c$. This exponential concentration result could be of independent interest.
\subsection{Related Work}
\label{sec:related-work}

Our work is related to symmetric distribution property estimation in classical setting, property estimation of classical distributions using quantum queries, and the property estimation of quantum states (as in the set-up of this paper). We briefly mention some closely related works. The reader is encouraged to read the survey by Montanaro and de Wolf~\cite{MontanaroW13}, and the thesis by Wright~\cite{Wright16} for more details on the recent literature. 

\subsubsection{Symmetric Property Estimation of Discrete Distributions}
\label{sec:classical}

A property of a distribution is symmetric if it is a function of only the probability multiset. A number of properties, such as the Shannon entropy, R\'enyi entropy, KL divergence, support size, distance to uniformity, are all symmetric. While there is a long literature on some of these problems, the optimal sample complexity for these problems was established only over the last decade~\cite{Valiant11b, WuY14a, JiaoVHW15, AcharyaOST17, JiaoHW16, WuY15, OrlitskySVZ04, BuZLV16, HanJW14, AcharyaDOS17}. We mention the state of the art results, and the reader can consult the related papers and references therein to learn more about the landscape of symmetric distribution property estimation problems. 
Similar to the quantum setting, let $S(f,\dims, \eps)$ be the minimum number of samples needed from a discrete distribution $p$ over $\dims$ elements to estimate a property $f(p)$ up to $\pm\eps$, again with probability at least $2/3$.

For the Shannon entropy $H(p)= -\sum_x p(x)\log p(x)$, a long line of work culminated in~\cite{Valiant11b, WuY14a, JiaoVHW15} showing that $\smpcmph=\Theta\Paren{\frac{\dims}{\eps\log\dims}+\frac{\log^2 \dims}{\eps^2}}$.  

The problem of estimating R\'enyi entropy $H_\renprm(p)= \log\Paren{\sum_x p(x)^{\renprm}}/(1-\renprm)$, was studied in \cite{AcharyaOST14, AcharyaOST17, OmbreskiS17}. The sample complexity dependence in the classical setting seems to suggest the same qualitative behavior as our results. They show that for $\renprm<1$,  $\smpcmpa{\renprm} = O\Paren{\frac{\dims^{1/\renprm}}{\eps^{1/\renprm}\log\dims}}$, and for $\renprm>1, \renprm\notin\NN$, $\smpcmpa{\renprm} = O\Paren{\frac{\dims}{\eps^{1/\renprm}\log\dims}}$. Moreover, their information theoretic lower bounds show that the exponent of $\dims$ cannot be improved by any algorithm. For the case of integral $\renprm$, larger than one, they characterize the sample complexity up to constant factors by showing that $\smpcmpa{\renprm} = \Theta\Paren{\frac{\dims^{1-1/\renprm}}{\eps^2}}$. We note that this complexity is indeed one of the terms in our copy complexity for integral $\renprm$, which happens for large $\ns$.~\cite{OmbreskiS17} provide bounds that improve the sample complexity of R\'enyi entropy estimation, for distributions with small R\'enyi entropy. 

\subsubsection{Quantum Property Estimation of Mixed States}

While we are not aware of a lot of literature on property estimation of mixed states, there are now many works on the related problem of quantum property testing, where the goal is to find the copy complexity of deciding whether a mixed state has a certain property of interest, and on the problem of quantum tomography, where the goal is to learn the entire density matrix $\mst$. 

The copy complexity of quantum tomography is quadratic in $\dims$, and the complexity for tomography in various distance measures have been studied in~\cite{HaahHWWY17, OW16, OW17}. 

Testing whether $\mst$ has a particular unitarily invariant property of interest was studied in~\cite{ODonnellW15} for a number of properties. They show that for testing whether $\mst$ is maximally mixed, namely whether all elements of $\eigv$ are $1/\dims$, requires $\Theta(\dims/\eps^2)$ copies. They also studied the problem of testing the rank of $\mst$, and also provide bounds on the performance of the EYD algorithm for estimating the spectrum. {Recently,~\cite{BadescuOW17} obtained tight bounds on the copy complexity of testing whether an unknown density matrix is equal to a known density matrix.} The optimal measurement schemes for some of these problems can be computationally expensive. Testing properties under simpler \emph{local measurements} was studied recently in~\cite{PallisterML17}.

In a personal communication, Bavarian, Mehraban, and Wright~\cite{BMW16} claim an algorithm with copy complexity $O(\dims^2/\eps)$ for the von Neumann entropy estimation, which is an $\eps$ factor improvement over our bound. 

\subsubsection{Quantum Algorithms for Classical Distribution Properties}
Testing and estimating distribution properties using quantum queries has been considered by various authors. Problems of testing properties such as uniformity, identity, closeness under the regular quantum query model, and conditional quantum query models have been studied in~\cite{BravyiHH11, ChakrabortyFMW10, SardharwallaSJ17}. 

Recently Li and Wu~\cite{LiW17} studied the quantum query complexity of estimating entropy of discrete distributions. They provide bounds on the query complexity for estimating von Neumann entropy, and R\'enyi entropy. For certain values of $\renprm$, the bounds on query complexity can in fact be at times quadratically better than the corresponding sample complexity bounds.

\section{Quantum Measurements and Property Estimation}
\label{sec:quant}
\subsection{Density Matrix and Quantum Measurement}
A quantum state is described by a \emph{density matrix} $\mst$, which is a $\dims$-dimensional positive semi-definite matrix with unit trace. The joint state of $\ns$ independent copies is given by the tensor product $\mst^{\otimes n}=\mst\otimes\cdots\otimes\mst$, which is a density matrix of dimension 
${d^n \times d^n}$.

\emph{Quantum measurements} are described by a set of matrices $\{M_m\}$ called measurement operators, where index $m$ denotes the measurement outcome. Measurement operators satisfy the completeness condition, $\sum_m{M_m^{\dagger}M_m}=I$. If the  pre-measurement state is $\mst$ then probability of measurement outcome $m$ is $\tr{M_m^{\dagger}M_m\mst}$, and the post-measurement state is $M_m\mst M_m^{\dagger}/\tr{M_m\mst M_m^{\dagger}}$. The measurement operators are also allowed to have an infinite outcome set, in which case a suitable $\sigma$-algebra on the set of outcomes and a probability measure on this space are defined. 
For a detailed discussion of these concepts see~\cite{nielsen2010quantum}.

\subsection{Property Estimation}
\label{sec:property-estimation}
A \emph{property} $\prp (\mst)$ maps a mixed state $\mst$ to $\RR$. 
Given $n$ and $d$, an \emph{estimator} is a set of measurement matrices 
$\{M_m\}_{m = 1}^\infty$ for the state space $\CC^{d^n \times d^n}$ and
a ``classical processor'' $g(\cdot)$, which maps the natural
numbers to $\RR$. Given $n$ copies of a state $\mst$, the
estimator proceeds by applying the measurement $\{M_m\}_{m = 1}^\infty$
to the state $\mst^{\otimes n}$
and then applying $g(\cdot)$ to the resulting outcome.
Given a property $\prp$, accuracy parameter $\eps$, error parameter $\delta$, and access to $\ns$ independent copies of a mixed state $\mst$, 
we seek an estimator $\fhat$ such that with probability at least $1-\delta$ 
\[
\absv{f(\mst)-\fhat(\rho^{\otimes n})}<\eps.
\] 

%Allowable $\fhat$ are of the form $g(\mathsf{M}(\rho^{\otimes n}))$, where $\mathsf{M}$ is a quantum measurement on the joint state 
%$\rho^{\otimes n}$, and $g$ is function from the set of measurement outcomes of $\mathsf{M}$ to $\RR$ . 
The \emph{copy complexity} of $\prp$ is  
\begin{align}
C(\prp, \dims, \eps, \delta) \ed \min\left\{\ns: \exists \fhat : \forall \mst, \fhat \text{ is a } \pm\eps \text{ estimate of } \prp(\mst) \text{ with probability} >1-\delta\right\},\nonumber
\end{align}
the minimum number of copies required to solve the problem. 
Throughout this paper we will consider $\delta$ to be a constant, say 1/3. We can boost the error to any $\delta$ by repeating the estimation task $O(\log(1/\delta))$ times, and taking the median of the outcomes. This causes an additional $O(\log(1/\delta))$ multiplicative cost in the copy complexity. We denote 
\begin{align}
C(\prp, \dims, \eps) \ed C(\prp, \dims, \eps, 1/3).
\end{align}

\subsection{Unitarily Invariant Properties}
\label{sec:unitarily}

Suppose $U(\dims)$ is the set of all $\dims\times\dims$ unitary matrices. 
\begin{definition}
A property $\prp(\mst)$ is called \emph{unitarily invariant}, if $\prp(U\mst U^{\dag}) = \prp(\mst)$ for all $U\in\unitary$. 
\end{definition}
Let $\eigv = \{\eigvi{1}\upto\eigvi{\dims}\}$ be the multiset the eigenvalues (also called as spectrum) of $\mst$. Two density matrices $\rho$, and $\sigma$ have the same spectrum if and only if there is a unitary matrix $U$ such that $\sigma = U\rho U^{\dag}$. Therefore, unitarily invariant properties are functions of only the spectrum of the density matrix. Since density matrices are positive semi-definite with unit trace, then $\sum_i \eigvi{i}=1$, and we can view $\eigv$ as a distribution over some set. Unitarily invariant properties are analogous to properties in classical distributions that are a function of only the multiset of probability elements, called symmetric properties. 

For a density matrix with eigenvalues $\eigvi{1},\ldots, \eigvi{\dims}$, we have $S(\mst) = -\sum_i \eigvi{i}\log \eigvi{i}$, and $\rentprmmst = \log(\sum_i \eigvi{i}^{\renprm})/(1-\renprm)$. Quantum entropy can be viewed as the classical entropy of the distributions defined by $\eigv$, and in particular they are unitarily invariant.  

Working with unitarily invariant properties is greatly simplified by the following powerful result~\cite{KeylW01, ChildsHW07, Harrow05, Christandl06} (See~\cite[Section 4.2.2]{MontanaroW13} for details). 
\begin{lemma}
A quantum measurement called \emph{weak Schur sampling} is optimal for estimating unitarily invariant properties. 
\end{lemma}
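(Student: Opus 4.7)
The plan is to show that any measurement on $\rho^{\otimes n}$ produces an outcome whose distribution is a classical post-processing of the outcome of weak Schur sampling (WSS), exhibiting WSS as a sufficient statistic for every unitarily invariant property. Two symmetries drive the argument: the permutation invariance $\pi \cdot \rho^{\otimes n} \cdot \pi^{\dagger} = \rho^{\otimes n}$ for $\pi \in S_n$ acting on $(\CC^d)^{\otimes n}$ by permuting tensor factors, and the unitary invariance $f(U\rho U^\dagger) = f(\rho)$ for all $U \in U(d)$.

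Starting from any estimator $(\{M_m\}, g)$ achieving success probability $\ge 1-\delta$ on every $\rho$, I would observe that the estimator $(\{(U^{\otimes n})^{\dagger} M_m U^{\otimes n}\}, g)$ applied to $\rho^{\otimes n}$ yields outcomes with the same distribution as the original estimator applied to $(U\rho U^\dagger)^{\otimes n}$; unitary invariance of $f$ then preserves correctness. Averaging the POVM elements against Haar measure on $U(d)$, and similarly against the uniform measure on $S_n$ (which leaves $\rho^{\otimes n}$ invariant outright), produces a symmetrized POVM $\{\tilde M_m\}$ of no worse performance whose elements commute with $U^{\otimes n}$ for every $U\in U(d)$ and with the natural $S_n$ action.

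Next, I would invoke Schur--Weyl duality to write $(\CC^d)^{\otimes n} \cong \bigoplus_{\lambda \vdash n,\, \ell(\lambda)\le d} V_\lambda \otimes W_\lambda$, where $V_\lambda$ and $W_\lambda$ are the respective irreducibles of $U(d)$ and $S_n$ indexed by $\lambda$. Any operator commuting with both actions lies in the center of their joint commutant, which is spanned by the isotypic projectors $\{P_\lambda\}$; hence $\tilde M_m = \sum_\lambda c_{m,\lambda} P_\lambda$ with $c_{m,\lambda} \ge 0$ and $\sum_m c_{m,\lambda} = 1$. Therefore $\Pr[m] = \tr{\tilde M_m \rho^{\otimes n}} = \sum_\lambda c_{m,\lambda}\,\tr{P_\lambda \rho^{\otimes n}}$, so the outcome $m$ is distributionally identical to the two-stage procedure: first sample $\lambda$ via WSS, then output $m$ from the conditional distribution $c_{\cdot,\lambda}$. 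Since $g$ can be absorbed into this classical step, WSS followed by classical post-processing matches the performance of the original estimator.

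The main obstacle I anticipate is making the averaging over the continuous group $U(d)$ fully rigorous: one must check that the Haar-averaged POVM remains a valid POVM (convex combinations preserve positivity and completeness), that the success-probability guarantee survives the exchange of expectation over $U$ with the tail probability of the estimate (via Fubini), and that Schur--Weyl duality is applied with the correct identification of the commutant algebra so that the span of $\{P_\lambda\}$ exhausts all operators invariant under both actions, ruling out any larger algebra that might support strictly more informative measurements.
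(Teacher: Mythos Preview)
The paper does not give its own proof of this lemma; it simply cites \cite{KeylW01, ChildsHW07, Harrow05, Christandl06} and directs the reader to \cite[Section~4.2.2]{MontanaroW13} for details. Your sketch is the standard argument found in those references: symmetrize the measurement over $S_n$ (harmless since $\rho^{\otimes n}$ is permutation-invariant) and over $U(d)$ via Haar measure (harmless since $f$ is unitarily invariant, so correctness on $U\rho U^\dagger$ equals correctness on $\rho$), then invoke Schur--Weyl duality to conclude the symmetrized POVM elements lie in the span of the isotypic projectors $\{P_\lambda\}$, which is exactly weak Schur sampling followed by a classical channel. The argument is correct.

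Two small remarks. First, the ``tail probability / Fubini'' obstacle you flag is not really an obstacle: the success probability of the averaged POVM on $\rho$ equals the Haar average of the success probability of the original POVM on $U\rho U^\dagger$, by linearity (everything is bounded and nonnegative, so Fubini is trivial), and each term in the average is at least $1-\delta$. Second, your phrase ``center of their joint commutant'' is slightly off: what you want is that the commutant of the $U(d)$ action is the $S_n$ algebra (and vice versa) by the double-commutant theorem, so an operator commuting with both lies in the center of either algebra, which is $\mathrm{span}\{P_\lambda\}$. The conclusion is the same.
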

\noindent Weak Schur sampling is discussed in Section~\ref{sec:wss}.

%\subsection{Measuring Quantum Entanglement.}
%We want to find the copy complexity of estimating various entropies of a mixed state. All the problems we are thinking right now are a special case of the R\'enyi entropy for $\renprm>0$:
%\begin{align}
%\rentprmmst\ed \frac1{1-\renprm}\log\Paren{\Mapeigv{\renprm}},
%\end{align}
%where 
%\begin{align}
%\Mapeigv{\renprm}\ \ed\ \sum \eigi{i}^\renprm.
%\end{align}
%\noindent The special cases we would be interested in the most are:
%\paragraph{1. Von Neumann entropy.}
%This is the analog of Shannon entropy in the quantum setting. In particular, 
%
%\paragraph{2. Collision entropy.}
%
%\paragraph{3. Integral $\renprm$.} Given copies of a state $\mst$ with spectrum $\eigv = (\eigi 1\upto, \eigi \dims)$, estimate $\Map{\renprm}{\eigv}\ed \sum_{i=1}^{\dims}\eigi i^\renprm$ to a multiplicative $1\pm\eps$. 
%
%\begin{itemize}
%\item 
%What is the copy complexity as a function of $\dims$ and $\renprm$? 
%\item
%Which $\renprm$ is easiest to estimate? 
%\end{itemize}

\section{Preliminaries}
\label{sec:preliminaries}
We list some of the definitions and results we use in the paper.

\begin{definition}
The \emph{total variation} distance, \emph{KL divergence}, and \emph{$\chi^2$ distance} between distributions $\p$, and $\q$ over $\cX$ are 
\begin{align}
\dtv{\p}{\q} \ed &\sup_{A\subset\cX} \p(A) - \q(A) = \frac12\|\p-\q\|_1, \\
d_{KL}(\p,\q) \ed &\sum_{x\in\cX} \p(x)\log\frac{\p(x)}{\q(x)},\\
\chisq{\p}{\q} \ed &\sum_{x\in\cX} \frac{(\p(x)-\q(x))^2}{\q(x)}.
\end{align}
\label{def:distances}
\end{definition}
The distance measures satisfy the following bound.
\begin{lemma}
\label{lem:distance-bounds}
\[2\dtv{\p}{\q}^2 \le d_{KL}(\p,\q) \le \chisq{\p}{\q}.\]
\end{lemma}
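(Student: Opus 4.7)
The plan is to prove the two inequalities separately: the right-hand inequality $d_{KL}(\p,\q)\le\chisq{\p}{\q}$ follows from a one-line application of $\log x\le x-1$, while the left-hand inequality (Pinsker's inequality) reduces to a scalar Bernoulli comparison via a data-processing argument.

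For the right inequality, I would apply the elementary bound $\log y\le y-1$ to $y=\p(x)/\q(x)$. Multiplying by $\p(x)$ and summing gives
\[
d_{KL}(\p,\q)=\sum_x \p(x)\log\frac{\p(x)}{\q(x)}\le\sum_x\p(x)\Paren{\frac{\p(x)}{\q(x)}-1}=\sum_x\frac{\p(x)^2}{\q(x)}-1.
\]
Then I would observe that expanding $\chisq{\p}{\q}=\sum_x(\p(x)-\q(x))^2/\q(x)=\sum_x\p(x)^2/\q(x)-2+1$, which matches the right-hand side above, yielding $d_{KL}(\p,\q)\le\chisq{\p}{\q}$.

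For Pinsker's inequality, I would let $A\ed\{x:\p(x)\ge\q(x)\}$ so that $\dtv{\p}{\q}=\p(A)-\q(A)$. Consider the two-outcome partition $\{A,A^c\}$; by the data-processing inequality for KL divergence (or equivalently, by log-sum inequality applied to the two pairs of sums $\sum_{x\in A}$ and $\sum_{x\in A^c}$),
\[
d_{KL}(\p,\q)\ge \p(A)\log\frac{\p(A)}{\q(A)}+\p(A^c)\log\frac{\p(A^c)}{\q(A^c)}\ed d_{KL}(\mathrm{Ber}(\p(A)),\mathrm{Ber}(\q(A))).
\]
Thus it suffices to prove the scalar inequality $d_{KL}(\mathrm{Ber}(r),\mathrm{Ber}(s))\ge 2(r-s)^2$ for all $r,s\in[0,1]$.

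To establish this scalar bound, I would fix $r$ and set $g(s)\ed d_{KL}(\mathrm{Ber}(r),\mathrm{Ber}(s))-2(r-s)^2$. A direct computation gives $g(r)=0$ and
\[
g'(s)=-\frac{r}{s}+\frac{1-r}{1-s}+4(r-s)=\frac{s-r}{s(1-s)}+4(r-s)=(r-s)\Paren{4-\frac{1}{s(1-s)}}.
\]
Since $s(1-s)\le 1/4$, the bracketed factor is non-positive, so $g'(s)\le 0$ for $s\le r$ and $g'(s)\ge 0$ for $s\ge r$; hence $g$ attains its minimum at $s=r$ and $g(s)\ge 0$. Combining with the data-processing step gives $d_{KL}(\p,\q)\ge 2(\p(A)-\q(A))^2=2\dtv{\p}{\q}^2$. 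The main subtlety is the scalar Bernoulli step — everything else is essentially one-line manipulation — and the sign analysis of $g'$ is what makes that step go through cleanly.
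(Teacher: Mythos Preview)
Your proof is correct. The paper itself does not prove this lemma but simply remarks that the first inequality is Pinsker's inequality and the second follows from concavity of the logarithm; your argument for the right-hand inequality via $\log y\le y-1$ is exactly the concavity bound the paper alludes to, and your data-processing-plus-Bernoulli proof of Pinsker is a standard route, so your write-up is consistent with (and more detailed than) the paper's treatment.
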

\noindent The first inequality is Pinsker's Inequality, and the second follows from concavity of logarithms. 

We now state some concentration results that we use. 

Let $f:[\dims]^{\ns}\to\RR$ be a function, such that
\begin{align}
\max_{z_1,\ldots, z_n, z_{i}^{'}\in[\dims]} \absv{f(z_1,\ldots,z_n) - f(z_1,\ldots, z_{i-1}, z_i^{'},\ldots,z_n)}\le c_i,\label{eqn:conc}
\end{align}
for some $c_1, \ldots, c_\dims$. 

The next two results show concentration results of functions $f$ that satisfy~\eqref{eqn:conc}. The following lemma is ~\cite[Corollary~3.2]{BoucheronLM13}.
\begin{lemma} 
\label{lem:bdd-diff-var}
For independent variables $Z_1,\ldots,Z_{\ns}$, 
\[
\text{Var}\Paren{{f(Z_1,\ldots,Z_{\ns})}} \le \frac14\sum_{i=1}^{\ns} c_i^2.
\]
\end{lemma}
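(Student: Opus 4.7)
The plan is to prove this via the standard martingale (Doob) decomposition, which is the sharpest textbook route to the constant $1/4$. Write $F = f(Z_1,\ldots,Z_n)$ and define the filtration $\cF_i = \sigma(Z_1,\ldots,Z_i)$ with $\cF_0$ trivial. Set the martingale differences
\[
V_i \ed \EE[F \mid \cF_i] - \EE[F \mid \cF_{i-1}], \qquad i = 1,\ldots,n,
\]
so that $F - \EE[F] = \sum_{i=1}^n V_i$. Because $\EE[V_i \mid \cF_{i-1}] = 0$, the differences are orthogonal in $L^2$, giving the identity $\Var(F) = \sum_{i=1}^n \EE[V_i^2]$. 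This step is routine; it is just the Pythagorean identity for martingale differences.

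Next I would bound each $\EE[V_i^2]$ by conditioning on $\cF_{i-1}$. Using independence of the $Z_j$, write
\[
V_i = g_i(Z_1,\ldots,Z_i) - h_i(Z_1,\ldots,Z_{i-1}),
\]
where $g_i(z_1,\ldots,z_i) = \EE[f(z_1,\ldots,z_i,Z_{i+1},\ldots,Z_n)]$ and $h_i = \EE[g_i(Z_1,\ldots,Z_{i-1},Z_i)\mid \cF_{i-1}]$. By the bounded-difference hypothesis~\eqref{eqn:conc}, for any two possible values $z_i, z_i'$ and any fixing of $z_1,\ldots,z_{i-1}$, $|g_i(z_1,\ldots,z_{i-1},z_i) - g_i(z_1,\ldots,z_{i-1},z_i')| \le c_i$ (the bound passes under the expectation over $Z_{i+1},\ldots,Z_n$). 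Hence, conditionally on $\cF_{i-1}$, the random variable $V_i$ is a shift of $g_i(Z_1,\ldots,Z_{i-1},Z_i)$ and has conditional range at most $c_i$.

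The final ingredient is the elementary fact that any random variable $X$ taking values in an interval of length $L$ satisfies $\Var(X) \le L^2/4$ (Popoviciu's inequality; the bound is achieved by a symmetric two-point distribution). Applying this conditionally gives
\[
\EE[V_i^2 \mid \cF_{i-1}] = \Var(V_i \mid \cF_{i-1}) \le \frac{c_i^2}{4},
\]
and taking expectations yields $\EE[V_i^2] \le c_i^2/4$. Summing in $i$ completes the proof. No step here is a true obstacle; the only delicate point is ensuring that the bounded-difference constant $c_i$ survives the expectation over $Z_{i+1},\ldots,Z_n$ (so that the conditional range of $V_i$ is $\le c_i$), and the $1/4$ factor, which comes cleanly from Popoviciu's inequality.
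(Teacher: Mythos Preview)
Your proof is correct. The Doob-martingale decomposition together with Popoviciu's inequality on the conditional range is exactly the right way to obtain the sharp constant $1/4$, and you correctly note that the bounded-difference constant $c_i$ passes under the expectation over $Z_{i+1},\ldots,Z_n$ so that the conditional range of $V_i$ given $\cF_{i-1}$ is at most $c_i$.

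The paper itself does not prove this lemma; it simply cites \cite[Corollary~3.2]{BoucheronLM13}. That reference derives the bound as a corollary of the Efron--Stein inequality
\[
\Var(F) \le \sum_{i=1}^{n} \EE\!\left[\bigl(F - \EE[F \mid Z_j,\, j\ne i]\bigr)^2\right],
\]
followed by the same range-to-variance bound (Popoviciu) applied to the conditional variance in the $i$th coordinate. Your martingale approach and the Efron--Stein approach are closely related but not identical: Efron--Stein conditions on \emph{all} coordinates except $Z_i$, whereas your Doob differences condition on $Z_1,\ldots,Z_{i-1}$. The Efron--Stein route is marginally more direct here (no filtration or orthogonality argument needed), while your martingale route has the advantage of reusing exactly the machinery that also yields McDiarmid's inequality (Lemma~\ref{lem:mcdiarmid}), so the two lemmas share a single proof skeleton. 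Either way, the $1/4$ comes from the same place.
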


The next result is McDiarmid's inequality~\cite[Theorem~6.2]{BoucheronLM13}. 
\begin{lemma}
\label{lem:mcdiarmid}
For independent variables $Z_1,\ldots,Z_{\ns}$, 
\[
\probof{{f(Z_1,\ldots,Z_{\ns})-\expectation{f(Z_1,\ldots,Z_{\ns})}}>t}\le  e^{-\frac{2t^2}{c_1^2+\ldots+c_{\ns}^2}}.
\]
\end{lemma}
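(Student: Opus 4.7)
The plan is to prove this by the standard martingale method, since what is stated is the classical McDiarmid inequality and the paper cites it from Boucheron--Lugosi--Massart. First I would set $Z \ed (Z_1,\ldots,Z_{\ns})$ and $W \ed f(Z) - \expectation{f(Z)}$, and introduce the Doob martingale
\[
M_i \ed \expectation{f(Z)\mid Z_1,\ldots,Z_i}, \qquad i=0,\ldots,\ns,
\]
so that $M_0 = \expectation{f(Z)}$, $M_{\ns} = f(Z)$, and $W = \sum_{i=1}^{\ns} D_i$ where $D_i \ed M_i - M_{i-1}$ is a martingale difference sequence with respect to the filtration $\cF_i = \sigma(Z_1,\ldots,Z_i)$.

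The next step is to show that each $D_i$, conditioned on $\cF_{i-1}$, is almost surely supported in an interval of length at most $c_i$. Using independence of the $Z_j$'s, write
\[
M_i = \expectation{f(Z)\mid Z_1,\ldots,Z_i} = g_i(Z_1,\ldots,Z_i),
\]
where $g_i(z_1,\ldots,z_i) = \expectation{f(z_1,\ldots,z_i,Z_{i+1},\ldots,Z_{\ns})}$. Then $D_i = g_i(Z_1,\ldots,Z_i) - \expectation{g_i(Z_1,\ldots,Z_{i-1},Z_i)\mid \cF_{i-1}}$, and define the conditional upper and lower envelopes
\[
U_i \ed \sup_{z} g_i(Z_1,\ldots,Z_{i-1},z) - M_{i-1}, \qquad L_i \ed \inf_{z} g_i(Z_1,\ldots,Z_{i-1},z) - M_{i-1}.
\]
By the bounded-differences hypothesis~\eqref{eqn:conc}, for any two values $z,z'$ one has $|g_i(Z_1,\ldots,Z_{i-1},z) - g_i(Z_1,\ldots,Z_{i-1},z')| \le c_i$, so $U_i - L_i \le c_i$ almost surely, and $D_i \in [L_i,U_i]$.

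Then I would invoke Hoeffding's lemma conditionally: for any random variable $X$ supported in $[a,b]$ with mean zero, $\expectation{e^{\lambda X}} \le e^{\lambda^2(b-a)^2/8}$. Applied to $D_i$ given $\cF_{i-1}$ this yields $\expectation{e^{\lambda D_i}\mid \cF_{i-1}} \le e^{\lambda^2 c_i^2/8}$ a.s. Iterating by the tower property,
\[
\expectation{e^{\lambda W}} = \expectation{e^{\lambda \sum_{i=1}^{\ns} D_i}} \le \prod_{i=1}^{\ns} e^{\lambda^2 c_i^2/8} = e^{\lambda^2 \sum_i c_i^2/8}.
\]
Finally, a Markov/Chernoff bound $\probof{W > t} \le e^{-\lambda t}\expectation{e^{\lambda W}}$ optimized at $\lambda = 4t/\sum_i c_i^2$ gives the stated tail bound $e^{-2t^2/\sum_i c_i^2}$.

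The only real obstacle is the conditional Hoeffding step — one must be careful that the bounded-differences hypothesis on $f$ really does transfer to a length-$c_i$ support for $D_i$ conditionally on $\cF_{i-1}$, which crucially uses the \emph{independence} of the $Z_j$'s so that conditioning on $Z_1,\ldots,Z_{i-1}$ does not alter the distribution of $(Z_{i+1},\ldots,Z_{\ns})$. Everything else is a routine Azuma-type computation.
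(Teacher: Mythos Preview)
Your proof is correct and is the standard Doob-martingale plus conditional Hoeffding argument for McDiarmid's inequality. The paper itself does not prove this lemma at all; it simply states it with a citation to \cite[Theorem~6.2]{BoucheronLM13} as a known concentration result, so there is nothing to compare your argument against beyond noting that you have reproduced the textbook proof.
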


\subsection{Schur Polynomials and Power-Sum Polynomials}
\label{sec:schur-power}

A \emph{partition} $\lamb$ of $\ns$ is a collection of non-negative 
integers $\lambi{1}\ge\lambi{2}\ge\ldots$ that sum to $\ns$. 
We write $\lamb\vdash\ns$ and we write
$\Lambda_n$ for the set of all partitions of $n$.
We denote the number of positive
integers in $\lamb$ by $\ell(\lamb)$, which we call its \emph{length}.
An partition $\lamb$ can be depicted
with an English \emph{Young diagram}, which consists of a row of 
$\lambda_1$ boxes above a row of $\lambda_2$ boxes, etc., as 
showed in Fig.~\ref{Young_Diagram}. The partition associated with a Young
diagram is called its \emph{shape}.
Note that the number of rows in the
Young diagram of $\lamb$ is $\ell(\lamb)$ and the total number
of boxes is $n$.  A \emph{Young tableau} over alphabet $[d]$ 
is a Young diagram
in which each box has been filled with an element of $[d]$. A 
Young tableau is called \emph{standard} if it is strictly 
increasing left-to-right across each row and top-to-bottom
down each column. A Young tableau is \emph{semistandard}
if it is strictly increasing top-to-bottom down each column
and nondecreasing left-to-right across each row. Given 
$\lamb\vdash\ns$ and $d$, the \emph{Schur polynomial} 
is the polynomial in the variables $x_1, x_2, \ldots, x_d$ 
defined by
\begin{equation}
\label{eq:schurdef}
s_\lamb(x) = \sum_{T} \prod_{i = 1}^d x_i^{\#(T,i)},
\end{equation}
where the sum is over the set of all semistandard Young Tableaus
over alphabet $[d]$
corresponding to the partition $\lamb$ and $\#(T,i)$ is the number
of times $i$ appears in $T$. Schur polynomials turn out to be
symmetric, meaning that they are invariant to the ordering of the
variables $x_1,\ldots,x_d$~\cite{Macdonald98, Stanley}.

\begin{figure}
\centering
\includegraphics[scale=0.25]{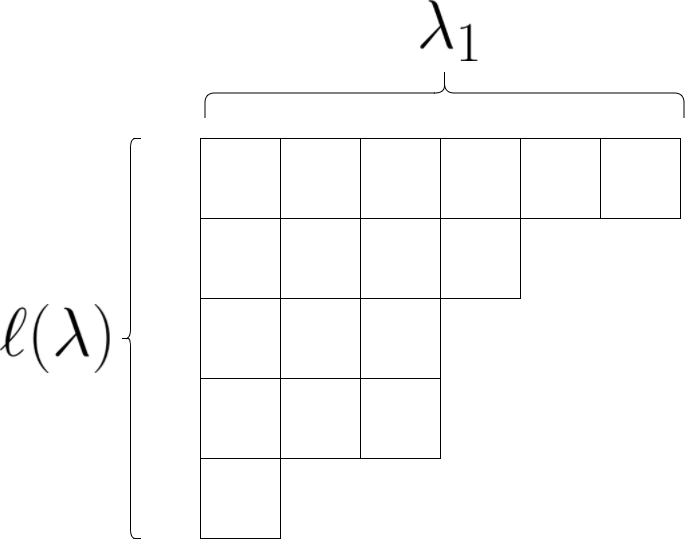}
\caption{English Young diagram for the partition $\lamb=(6,4,3,3,1)$.}
\label{Young_Diagram}
\end{figure}

We shall also consider polynomials obtained from power sums.
Given $\renprm\in\RR_{\ge0}$ and a distribution $\eigv$ on $[\dims]$,\footnote{Power sums can are usually defined for general vectors. We will consider them only for distributions in this paper.} define
\begin{align*}
\Map{\renprm}{\eigv} \ed \sum_{i=1}^{\dims} \eigv_i^{\renprm}.
\end{align*}
Given $\lamb\vdash r$, we define the \emph{power sum} polynomial
by 
$$
\Map{\lamb}{\eigv} = \prod_{i = 1}^{\ell(\lamb)} \Map{\lambda_i}{\eigv}
$$
The following is Lemma 1 in~\cite{AcharyaOST17}, which describes a number of inequalities that hold for the power sums of distributions.
\begin{lemma}\label{lem:bnd_moments}
Suppose $\eigv$ is a distribution over $\dims$ elements, then
\begin{enumerate}[label=(\roman*)]
\item
For $\renprm <1$,
\[
1\leq \Mapeigv\renprm \leq \dims^{1-\renprm}, 
\]
and for $\renprm>1$,
\[
\dims^{1-\renprm} \leq \Mapeigv \renprm \leq 1.
\]
\item
For every $\renprm\geq 0$ and $\beta \geq 0$,
\[
\Mapeigv {{\renprm+\beta}}\le \Mapeigv{\renprm}^{{\frac{\renprm+\beta}{\renprm}}}.
\]
\item
 For every $\renprm\geq 0$,
\[
\Mapeigv{2\renprm} \leq {\Mapeigv{\renprm}}^2.
\]
\item
For $\renprm>0$ and $ 0\leq \beta \leq \renprm$,
\[
\Mapeigv{\renprm - \beta} \leq \dims^{\frac \beta\renprm}\, \Mapeigv{\renprm}^{\frac{\renprm -\beta}\renprm}.
\]
\item
For $\renprm \geq 1$ and $ 0\leq \beta \leq \renprm$,
\[
\Mapeigv {\renprm+\beta}\le \dims^{(\renprm-1)(\renprm-\beta)/\renprm}\, \Mapeigv {\renprm}^2,
\]
and
\[
\Mapeigv {\renprm -\beta}\le \dims^\beta\, \Mapeigv {\renprm}.
\]
\end{enumerate}
\end{lemma}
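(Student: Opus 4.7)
The plan is to prove the five parts in sequence, using only Jensen's inequality, H\"older's inequality, and basic norm-monotonicity for nonnegative sequences. All five are classical facts about $L^\renprm$-norms of probability vectors, with items (i)--(iv) being essentially one-line arguments and item (v) being a short combination of them.

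For (i), I would use concavity/convexity of $x\mapsto x^\renprm$ on $[0,1]$. For $\renprm<1$ concavity together with $\frac{1}{\dims}\sum_i \eigvi{i}=1/\dims$ gives $\frac{1}{\dims}\sum_i \eigvi{i}^\renprm \le (1/\dims)^\renprm$, hence $\Mapeigv\renprm\le\dims^{1-\renprm}$; the lower bound $\Mapeigv\renprm\ge 1$ follows from the pointwise inequality $\eigvi{i}^\renprm\ge\eigvi{i}$ valid for $\eigvi{i}\in[0,1]$ and $\renprm<1$. For $\renprm>1$ the same arguments reverse. Item (ii) follows from the elementary inequality $\sum_i a_i^p \le (\sum_i a_i)^p$ for nonnegative $a_i$ and $p\ge 1$ (using $a_j^p\le a_j\cdot(\sum_i a_i)^{p-1}$), applied with $a_i=\eigvi{i}^\renprm$ and $p=(\renprm+\beta)/\renprm$. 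Item (iii) is the case $\beta=\renprm$ of (ii). Item (iv) is a single application of H\"older's inequality with conjugate exponents $p=\renprm/(\renprm-\beta)$ and $q=\renprm/\beta$ to the product $\eigvi{i}^{\renprm-\beta}\cdot 1$, which produces the factors $\Mapeigv\renprm^{(\renprm-\beta)/\renprm}$ and $\dims^{\beta/\renprm}$ respectively.

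For (v) the strategy is to start from (ii) and (iv) and use the lower bound $\Mapeigv\renprm\ge\dims^{1-\renprm}$ from (i) (valid since $\renprm\ge 1$) to convert excess fractional powers of $\Mapeigv\renprm$ into powers of $\dims$. Concretely, (ii) gives $\Mapeigv{\renprm+\beta}\le\Mapeigv\renprm^{(\renprm+\beta)/\renprm} = \Mapeigv\renprm^2\cdot\Mapeigv\renprm^{(\beta-\renprm)/\renprm}$; since $(\beta-\renprm)/\renprm\le 0$, substituting $\Mapeigv\renprm\ge\dims^{1-\renprm}$ yields $\Mapeigv\renprm^{(\beta-\renprm)/\renprm}\le\dims^{(\renprm-1)(\renprm-\beta)/\renprm}$, which is exactly the first inequality of (v). The second inequality follows similarly from (iv): combining $\Mapeigv{\renprm-\beta}\le\dims^{\beta/\renprm}\Mapeigv\renprm^{(\renprm-\beta)/\renprm}$ with $(\dims^{\renprm-1}\Mapeigv\renprm)^{\beta/\renprm}\ge 1$ rearranges to $\dims^\beta\,\Mapeigv\renprm$.

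I expect the only nontrivial step to be the exponent bookkeeping in (v); the bounds and substitutions themselves are routine, but it is easy to misplace a sign or a factor of $\renprm$ when trading $\Mapeigv\renprm$ for $\dims$, so I would double-check each exponent computation against the target form. Beyond that, none of the five parts requires anything beyond textbook inequalities, and the proofs of (i)--(iv) can be written as a handful of lines each.
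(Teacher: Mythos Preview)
Your proposal is correct. The paper does not actually prove this lemma; it simply cites it as Lemma~1 of \cite{AcharyaOST17}, so there is no in-paper argument to compare against. Your proofs of (i)--(iv) via Jensen, the subadditivity inequality $\sum a_i^p\le(\sum a_i)^p$ for $p\ge 1$, and H\"older are the standard ones, and your derivation of (v) by combining (ii) and (iv) with the bound $\Mapeigv\renprm\ge\dims^{1-\renprm}$ from (i) is clean and the exponent bookkeeping checks out.
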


Schur polynomials and power-sum polynomials are related through 
a change of basis.
There exists a function $\chi_\cdot(\cdot):
\Lambda_n^2 \mapsto \mathbb{R}$ such that~\cite[Theorem~7.17.3]{Stanley}
\begin{align}
\label{eqn:power-central}	
M_{\boldsymbol{\mu}}(\cdot) = 
  \sum_{\lamb} \chi_{\lamb}(\boldsymbol{\mu}) s_\lamb(\cdot).
\end{align}
The $\chi_\cdot(\cdot)$ function in fact comprises the characters of
the irreducible representations of the symmetric group on
$[n] = \{1,\ldots,n\}$~\cite[Sec.~7.18]{Stanley}, 
although this fact is not needed. The 
$\chi_\cdot(\cdot)$ function can also be defined 
combinatorially~\cite{Stanley}. The quantity
$\chi_{\lamb}(\boldsymbol{\mu})$ is difficult
to compute in general~\cite{Hepler94}, although we shall only
be interested in particular $\boldsymbol{\mu}$, as follows.
Let $\text{dim}(\lamb)$ denote the number of standard Young
tableaus over alphabet $[n]$ with shape $\lamb$. For 
$\lamb\vdash\ns$ and $\boldsymbol{\mu}\vdash r$ define
\[
\proh{\boldsymbol{\mu}}{\lamb} \ed\begin{cases}
\flnpwr{n}{r}\cdot \frac{ \chi_{\lamb} ({\boldsymbol{\mu}\cup1^{n-r}}) }{\text{dim}(\lamb)} & \text{ if } n\ge r,\\
0 & \text{otherwise}.
\end{cases}
\]  
where $\flnpwr{n}{r}$ is the \emph{falling power}, i.e.,
$\flnpwr{n}{r} = n\cdot(n-1)\cdot(n-2)\cdot(n - (r-1))$ and
$\boldsymbol{\mu}\cup1^{n-r}$ denotes the partition of $[n]$ 
consisting of $\boldsymbol{\mu}$ followed by $n-r$ ones.
%The $\proh{\mu}{\lambda}$ polynomial is useful to us due to the
%following lemma.
%\begin{lemma} For any $r \in \mathbb{N}$ and any partition $\mu$ of $r$,
%\begin{align}
%\EE\Brack{\proh{\mu}{\lambda}} = \flnpwr{n}{r} \cdot \Map{\mu}{\eigv} =\flnpwr{n
%}{r} \cdot \prod_{i} {\Map{\mu_i}{\eigv}}.\end{align}
%In the special case when $\mu = (r)$, a partition with only one part, we have
%\begin{align}
%\expectation{\phash{(r)} (\lambda) } &= \flnpwr{\ns}{r}\Map{r}{\eigv}.
%\end{align}
%\label{lem:power-sums}
%\end{lemma}
%Hence, the (normalized) polynomial $\phash{(r)} (\lambda)$ is an unbiased estima
%tor of the $r$th moment of $\eigv$.

%These characters are hard to compute in general (they are \#P complete). However, we are interested in the special case in which $\mu$ has a single non-zero component, namely it is $(r)$. In this case, the character value we are interested in is $\chi_{(r, 1^{n-r})}^{\lambda}$, which can be efficiently computed using the following formula~\cite[page 117]{Macdonald98}:
%\[
%\chi_{\lambda} ({(r, 1^{n-r})}) = \sum_{i=1}^n\frac{(n-r)!\Delta(\beta_1, \ldots, \beta_i-r, \ldots, \beta_n)}{\beta_1!\ldots(\beta_i-r)!\ldots\beta_n!},
%\]
%where $\beta_i = \lambda_i+n-i$, and $\Delta(\beta_1\upto\beta_n) = \prod_{i<j}(\beta_i-\beta_j)$. 
%Therefore, in this case, we can indeed come up with a scheme to estimate $\proh{(r)}{\lambda}$.
%A very useful property of these polynomials is given in Lemma~\ref{lem:power-sums} below, for which we need the following notation.

\subsubsection{Weak Schur Sampling (WSS)}
\label{sec:wss}

We describe some of the key results about weak Schur sampling (WSS) that we will use in this paper. The readers can refer to~\cite[Section 4.2.2]{MontanaroW13},~\cite[Chapter 3]{Wright16}, and references therein for further details. 

Weak Schur Sampling is a measurement that takes $n$ independent copies of a mixed state $\mst$ (denoted  $\mst^{\otimes\ns}$), and outputs a $\lamb\vdash\ns$. The output distribution over partitions is called Schur-Weyl distribution, denoted $SW_{\eigv}$, and the probability of $\lamb\vdash\ns$ is given by
\begin{equation} 
\label{eq:WSSdist}
SW_{\eigv}(\lamb) = \text{dim}(\lamb)\cdot \schurlx{\lamb}{\eigv},
\end{equation}
where, recall from the previous section that $\text{dim}(\lamb)$ is the number of Standard Young Tableaux of shape $\lamb$, and $\schurlx{\lamb}{\eigv}$ is the Schur polynomial with variables $\eigv$, and shape $\lamb$. Since Schur polynomials are symmetric, this probability is only a function of the multiset of eigenvalues, namely a function of the eigenvalue spectrum. 

An alternate combinatorial characterization of the output of WSS is given next. Some of the intermediate steps involving the Robinson-Schensted-Knuth (RSK) correspondences, and Green's theorem are not invoked later in the paper, and are omitted. We simply describe the method by which the final diagram is obtained. The reader can refer to the short survey~\cite{ODonnellW17} for details on the combinatorial procedure. 

\noindent Suppose $\mst$ is a mixed state with the multiset of eigenvalues $\{\eigvi{1}, \ldots, \eigvi{\dims}\}$. 
\begin{enumerate}
\item 
Consider a distribution over $[\dims]$, where $i$ has probability $\eigvi{i}$. 
\item
Draw $\Xon$ independently from this distribution. 
\item
Let $\lamb = \lambi1\ge \lambi2\ge \ldots$, be such that for any $k>0$, $\lambi1+\ldots+\lambi k$ is equal to the \emph{largest sum of lengths} of $k$ disjoint non-decreasing subsequences of $\Xon$. 
\end{enumerate}
The output distribution of this process is the same as that of weak Schur Sampling~\cite{Wright16}. Furthermore, {one of the results proved in~\cite{Wright16} is that the output distribution of the procedure above is \emph{independent} of the ordering of $\eigvi{i}$'s, and only depends on the multiset of the eigenvalues. For example, when $\dims=2$, the distributions $\eigvi{1}=0.2, \eigvi{2}=0.8$, and the distribution $\eigvi{1}=0.8, \eigvi{2}=0.2$ have the same output distributions over Young tableaux generated by the procedure above.}

Since the Young tableaux is a function of the sequence generated by the spectrum distribution, 
\begin{lemma}
\label{lem:class-quant}
The copy complexity of estimating a unitarily invariant property of a mixed state is at least the sample complexity of estimating the same symmetric property of the spectrum distribution. 
\end{lemma}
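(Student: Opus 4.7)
The plan is to establish the lemma as an almost-direct consequence of two ingredients already in place: (i) the optimality of weak Schur sampling for unitarily invariant properties (the displayed lemma right before Section~\ref{sec:wss}), and (ii) the combinatorial characterization of the WSS output as the shape obtained by applying the ``largest sum of lengths of disjoint non-decreasing subsequences'' procedure to an i.i.d.\ sample from the spectrum distribution (the three-step recipe just stated in Section~\ref{sec:wss}). The strategy is a reduction: I will show that any quantum estimator with copy complexity $n$ can be simulated on the classical side using $n$ i.i.d.\ samples from the spectrum distribution, so the classical sample complexity is upper-bounded by the quantum copy complexity.

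Concretely, fix a unitarily invariant property $f$, and let $\hat f$ be any estimator achieving the quantum task $\Pi(f,\dims,\eps)$ with $n$ copies and success probability $\ge 2/3$ uniformly over $\mst$. Because $f$ is unitarily invariant, by the WSS-optimality lemma we may assume without loss of generality that $\hat f$ consists of weak Schur sampling applied to $\mst^{\otimes n}$, followed by a deterministic (or randomized) classical map $g\colon \Lambda_n \to \RR$ applied to the resulting partition. Thus for every $\mst$ with spectrum $\eigv$,
\[
\probof{\,|f(\mst)-g(\lamb)|\le\eps\,}\ge 2/3,\qquad \lamb\sim SW_{\eigv}.
\]

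Next, take any distribution $\p$ on $[\dims]$ and let $\mst$ be any density matrix with spectrum $\p$ (e.g., the diagonal matrix $\mathrm{diag}(\p)$). Given $n$ classical i.i.d.\ samples $X^n \sim \p^{\otimes n}$, the classical estimator I construct computes the partition $\lamb(X^n)$ by the combinatorial rule from Section~\ref{sec:wss} (the $k$-th partial sum equals the maximum total length of $k$ disjoint non-decreasing subsequences of $X^n$), and then outputs $g(\lamb(X^n))$. By the characterization recalled above, $\lamb(X^n)$ has distribution exactly $SW_{\p}$, so this classical estimator inherits the $2/3$ success probability of $\hat f$. Finally, since $f$ is unitarily invariant, $f(\mst)$ depends only on $\p$; viewing $f$ as a symmetric property of distributions, this shows that $n$ classical samples suffice to estimate $f(\p)$ within $\pm\eps$ with probability $\ge 2/3$ for every $\p$. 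Hence the classical sample complexity of $f$ is at most $n$, and minimizing over all quantum estimators gives the claimed inequality.

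The main ``obstacle'' is essentially bookkeeping: making sure the reduction is valid for every distribution $\p$ (since the classical sample complexity is a worst-case quantity over distributions, and the quantum copy complexity is a worst-case quantity over density matrices), which is handled by the fact that any $\p$ is the spectrum of some $\mst$. No new technical machinery is needed — the lemma is really a corollary of WSS-optimality together with the RSK-based description of the Schur--Weyl distribution.
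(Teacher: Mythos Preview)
Your proposal is correct and takes essentially the same approach as the paper. The paper's own justification is the single sentence ``Since the Young tableaux is a function of the sequence generated by the spectrum distribution,'' which is exactly the data-processing reduction you spell out (WSS optimality reduces the quantum estimator to $g(\lamb)$, and $\lamb=\lamb(X^n)$ can be computed from the classical sample $X^n$); you have simply made each step explicit.
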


The $\proh{\lamb}{\boldsymbol\mu}$ polynomial defined in the last section is useful to us due to the
following lemma, which states that the (normalized) polynomial $\phash{\lamb} (r)$ is an unbiased estimator of the $r$th moment of $\eigv$. The lemma follows from the definitions and results already mentioned, and is implicit in~\cite{meliot2010kerov,IvanovK02}, and explicit in~\cite[Proposition 3.8.3]{Wright16}.  
\begin{lemma} Fix a distribution $\eigv$, a natural number $r$, and
any partition $\boldsymbol{\mu}$ of $r$. If $\lamb$ is randomly
generated according to the distribution in~(\ref{eq:WSSdist}) then
\begin{align}
\EE\Brack{\proh{\boldsymbol{\mu}}{\lamb}} = \flnpwr{n}{r} \cdot \Map{\boldsymbol{\mu}}{\eigv} =\flnpwr{n
}{r} \cdot \prod_{i} {\Map{\mu_i}{\eigv}}.\end{align}
In the special case when $\boldsymbol{\mu} = (r)$, a partition with only one part, we have
\begin{align}
\expectation{\phash{(r)} (\lamb) } &= \flnpwr{\ns}{r}\Map{r}{\eigv}.
\end{align}
\label{lem:power-sums}
\end{lemma}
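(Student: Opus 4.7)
The plan is to compute $\EE[\proh{\boldsymbol{\mu}}{\lamb}]$ directly by unwinding the definitions and using the change-of-basis identity~\eqref{eqn:power-central} between power-sum polynomials and Schur polynomials. Assuming $n \ge r$ (otherwise the statement is vacuous since both sides vanish by definition of $p^{\#}$), I would write
\begin{equation*}
\EE[\proh{\boldsymbol{\mu}}{\lamb}] = \sum_{\lamb \vdash n} SW_{\eigv}(\lamb) \cdot \proh{\boldsymbol{\mu}}{\lamb} = \flnpwr{n}{r} \sum_{\lamb \vdash n} \dim(\lamb)\, \schurlx{\lamb}{\eigv} \cdot \frac{\chi_\lamb(\boldsymbol{\mu}\cup 1^{n-r})}{\dim(\lamb)},
\end{equation*}
using the Schur-Weyl distribution~\eqref{eq:WSSdist} and the definition of $\phash{\boldsymbol{\mu}}$. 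The $\dim(\lamb)$ factor cancels cleanly.

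Next, I would recognize the remaining sum as the expansion of a power-sum polynomial in the Schur basis. Applying~\eqref{eqn:power-central} with the partition $\boldsymbol{\mu}\cup 1^{n-r} \vdash n$, we have
\begin{equation*}
\sum_{\lamb \vdash n} \chi_\lamb(\boldsymbol{\mu}\cup 1^{n-r})\, \schurlx{\lamb}{\eigv} = \Map{\boldsymbol{\mu}\cup 1^{n-r}}{\eigv} = \Paren{\prod_i \Map{\mu_i}{\eigv}} \cdot \Map{1}{\eigv}^{n-r}.
\end{equation*}
Since $\eigv$ is a probability distribution on $[\dims]$, $\Map{1}{\eigv} = \sum_i \eigvi{i} = 1$, so the trailing factor is $1$ and the remaining product is exactly $\Map{\boldsymbol{\mu}}{\eigv} = \prod_i \Map{\mu_i}{\eigv}$. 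Combining everything yields
\begin{equation*}
\EE[\proh{\boldsymbol{\mu}}{\lamb}] = \flnpwr{n}{r} \cdot \Map{\boldsymbol{\mu}}{\eigv},
\end{equation*}
which is the first claimed identity. The special case $\boldsymbol{\mu} = (r)$ is then immediate because $\Map{(r)}{\eigv} = \Map{r}{\eigv}$ by the single-part convention for power sums.

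There is no real obstacle: the entire proof is a two-line bookkeeping argument once one notices that augmenting $\boldsymbol{\mu}$ by $n-r$ parts equal to $1$ (to reach a partition of $n$, which is what the character table requires) is harmless because $\Map{1}{\eigv} = 1$ for probability distributions. The only point worth stating carefully is this normalization of $\eigv$, since the identity~\eqref{eqn:power-central} is an identity of symmetric functions evaluated at arbitrary variables and does not by itself eliminate the $1$-parts. Everything else is a citation of the definitions already collected in Section~\ref{sec:schur-power}.
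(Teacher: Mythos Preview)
Your proof is correct and follows exactly the approach the paper uses: plug in the Schur--Weyl distribution~\eqref{eq:WSSdist} and the definition of $p^{\#}$, cancel $\dim(\lamb)$, and apply the change-of-basis identity~\eqref{eqn:power-central}, using $\Map{1}{\eigv}=1$ to dispose of the $1^{n-r}$ parts. The paper compresses all of this into a single sentence, but your expansion is precisely what that sentence means.
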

\begin{proof}
Plugging in the probability of $\lamb$ from~\eqref{eq:WSSdist}, and the definition of $\phash{(r)} (\lamb) $ from Section~\ref{sec:schur-power}, and finally using~\eqref{eqn:power-central} gives the lemma.
\end{proof}

\subsubsection{The EYD algorithm, and classical plug-in estimation}
\label{sec:eyd}
The EYD algorithm is a simple algorithm for estimating $f(\mst)$. The algorithm works in two steps. 
\begin{itemize}
\item 
Compute the empirical distribution, which assigns probability $\lambi{i}/\ns$ to the symbol $i$. 
\item
Output the property $f$ of a mixed state with eigenvalues equal to $\lambi{i}/\ns$. 
\end{itemize}

The EYD algorithm is a quantum analogue of the classical empirical/plug-in estimator, which works as follows. Consider the step 2 of the weak Schur sampling procedure explained in Section~\ref{sec:wss}, which generates $X^\ns$,  $\ns$ \textit{i.i.d.} samples from the distribution $\eigv$ over $\{1,\upto\dims\}$. Let $\hat{p}$ be the empirical distribution of $X^n$, which assigns a probability $\hat{p}(i)=N_i/\ns$ to a symbol $x$, where $N_i$ is the number of times symbol $i$ appears in $X^n$. The plug-in estimator, upon observing $X^\ns$, outputs $f(\hat{p})$. The plug-in estimator has been widely studied in statistics literature.

An observation from the non-decreasing subsequence interpretation of the weak-Schur sampling is that for any sequence $X^\ns$, the distribution $\lambi{i}/\ns$ majorizes the corresponding empirical distribution. This follows from the fact that the length of longest $k$ disjoint non-decreasing sub-sequences is always at least the sum of the $k$ largest $N_i$'s. In particular, we can state the following result.
\begin{lemma}
\label{lem:emp-eyd}
Consider the sorted plug-in distribution $\hat{p}$ of $X^n$, and the distribution $\lambi{i}/\ns$ obtained from $X^\ns$ by the WSS procedure. $\lamb/\ns$ majorizes $\hat{p}$, namely, for all $j$, $\sum_{i=1}^{j} \lambi{i}/\ns\ge \sum_{i=1}^{j} \hat {p}(i)$.
\end{lemma}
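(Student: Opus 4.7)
The plan is to exploit the variational characterization of the partition output by WSS that was recalled just before the lemma: for every $k \ge 1$, the partial sum $\lambi{1}+\cdots+\lambi{k}$ equals the \emph{largest} possible total length of $k$ disjoint non-decreasing subsequences of $X^{\ns}$. Therefore, to prove the majorization inequality, it suffices to exhibit, for each $j$, a specific collection of $j$ disjoint non-decreasing subsequences of $X^{\ns}$ whose total length is at least $\ns\cdot\sum_{i=1}^{j}\hat{p}(i)$.

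First I would fix the notation: let $\hat{p}$ denote the sorted empirical distribution, so that if $N_i$ is the number of occurrences of symbol $i\in[\dims]$ in $X^\ns$, we may relabel the alphabet so that $N_{i_1}\ge N_{i_2}\ge\cdots\ge N_{i_\dims}$ and $\hat{p}(k) = N_{i_k}/\ns$. The key construction is the following: for each $k\in\{1,\ldots,j\}$, let $\sigma_k$ be the sub-sequence of $X^\ns$ consisting of all positions at which symbol $i_k$ appears, in their natural left-to-right order. Each $\sigma_k$ is a constant sequence (every entry equals $i_k$), hence trivially non-decreasing, and has length exactly $N_{i_k}$. Moreover, since each position of $X^\ns$ carries exactly one symbol, the index sets of $\sigma_1,\ldots,\sigma_j$ are pairwise disjoint.

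Combining the construction with the variational formula gives
\begin{align*}
\sum_{i=1}^{j}\lambi{i} \;\ge\; \sum_{k=1}^{j}\text{length}(\sigma_k) \;=\; \sum_{k=1}^{j}N_{i_k} \;=\; \ns\sum_{i=1}^{j}\hat{p}(i).
\end{align*}
Dividing both sides by $\ns$ yields $\sum_{i=1}^{j}\lambi{i}/\ns \ge \sum_{i=1}^{j}\hat{p}(i)$, which is the claimed majorization.

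There is no real obstacle here: the argument is essentially a single observation, namely that runs of a common symbol are themselves non-decreasing subsequences, so the optimum over arbitrary disjoint non-decreasing subsequences must dominate the configuration that just groups positions by symbol identity. The only minor care needed is to note that $\lamb/\ns$ is automatically sorted in non-increasing order by the definition of a partition, and $\hat{p}$ is sorted by convention, so the comparison of partial sums is the correct form of majorization.
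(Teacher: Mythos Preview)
Your proof is correct and is precisely the argument the paper gives: the paper states (just before the lemma) that the result ``follows from the fact that the length of longest $k$ disjoint non-decreasing sub-sequences is always at least the sum of the $k$ largest $N_i$'s,'' and your proposal simply writes out this observation in full by exhibiting the constant subsequences consisting of all occurrences of each of the $j$ most frequent symbols.
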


%\subsection{Measuring Quantum Entanglement.}
%We want to find the copy complexity of estimating various entropies of a mixed state. All the problems we are thinking right now are a special case of the R\'enyi entropy for $\renprm>0$:
%\begin{align}
%\rentprmmst\ed \frac1{1-\renprm}\log\Paren{\Mapeigv{\renprm}},
%\end{align}
%where 
%\begin{align}
%\Mapeigv{\renprm}\ \ed\ \sum \eigi{i}^\renprm.
%\end{align}
%\noindent The special cases we would be interested in the most are:
%\paragraph{1. Von Neumann entropy.}
%This is the analog of Shannon entropy in the quantum setting. In particular, 
%
%\paragraph{2. Collision entropy.}
%
%\paragraph{3. Integral $\renprm$.} Given copies of a state $\mst$ with spectrum $\eigv = (\eigi 1\upto, \eigi \dims)$, estimate $\Map{\renprm}{\eigv}\ed \sum_{i=1}^{\dims}\eigi i^\renprm$ to a multiplicative $1\pm\eps$. 
%
%\begin{itemize}
%\item 
%What is the copy complexity as a function of $\dims$ and $\renprm$? 
%\item
%Which $\renprm$ is easiest to estimate? 
%\end{itemize}

\subsection{Proving Upper Bounds on Copy Complexity}

%Suppose $\renprm\ne 1$, and $\eps \in (0,1/2)$. Then  
%\[
%\absv{\log(1\pm\eps)} <2 \eps.
%\]
%Suppose $\widehat{\Mapeigv{\renprm}}$ is an estimator of $\Mapeigv{\renprm}$ such that with probability at least 0.9, 
%\begin{align}
%\absv{\widehat{\Mapeigv{\renprm}}-\Mapeigv{\renprm}}\le\eps \Mapeigv{\renprm}.\nonumber
%\end{align}
%Then, with probability at least 0.9, 
%\begin{align}
%\absv{\frac1{1-\renprm}\log \widehat{\Mapeigv{\renprm}}-\rentprmmst}
%=& \absv{\frac1{1-\renprm}\log \frac{\widehat{\Mapeigv{\renprm}}}{{\Mapeigv{\renprm}}}}\nonumber\\
%=& \absv{\frac1{1-\renprm}\log \Paren{1+\frac{\widehat{\Mapeigv{\renprm}}-\Mapeigv{\renprm}}{{\Mapeigv{\renprm}}}}}\nonumber\\
%&\le \absv{\frac{2 \eps}{1-\renprm}}.
%\end{align}
%Therefore if we obtain a $1 + \eps/2$ multiplicative estimate of $\Mapeigv{\renprm}$ then we obtain an $\pm \absv{\eps/(\renprm-1)}$ estimate to $\rent{\renprm}{\eigv}$. 
%
%Chebychev's inequality implies that for any $X$, 
%\[
%\probof{\absv{X-\EE[X]}^2> 9 \cdot \Var(X)}<\frac 19,
%\]
%and therefore, if 
%\begin{align}
%\Paren{\frac{\eps}2\cdot\EE\Brack{\proh{(\renprm)}{\lambda}}}^2\ge 9\cdot \Var\Paren{\proh{(\renprm)}{\lambda}},\label{eqn-var}
%\end{align}
%we obtain a $1+\eps/2$ multiplicative estimate of $\Mapeigv{\renprm}$ with probability at least 8/9.

Consider $\renprm\ne 1$ and $\hat{\eps} \in (0,1)$. Suppose $\widehat{\Mapeigv{\renprm}}$ satisfies %is an estimator of $\Mapeigv{\renprm}$ such that with probability at least 0.9, 
\begin{align}
\absv{\widehat{\Mapeigv{\renprm}}-\Mapeigv{\renprm}}\le \hat{\eps} \Mapeigv{\renprm}.\nonumber
\end{align}
Then %, with probability at least 0.9, 
\begin{align}
\absv{\frac1{1-\renprm}\log \widehat{\Mapeigv{\renprm}}-\rentprmmst}
& = \absv{\frac1{1-\renprm}\log \frac{\widehat{\Mapeigv{\renprm}}}{{\Mapeigv{\renprm}}}}\nonumber\\
& \leq \absv{ \frac1{1-\renprm} \max \left\lbrace \log (1+\hat{\eps}), \absv{\log(1-\hat{\eps})} \right\rbrace } \nonumber \\
& = \absv{ \frac{\log(1-\hat{\eps})}{1-\renprm}}.
% & = \absv {\frac{1}{1-\renprm}} \log \left(\frac{1}{1-\hat{\eps}}\right) \nonumber \\
% & \leq \frac{1}{\absv {1-\renprm}}  \frac{\hat{\eps}}{1-\hat{\eps}},
\end{align}
Therefore, to obtain a $\pm \eps$ estimate of $\rent{\renprm}{\eigv}$, it suffices to derive a $1-e^{-\eps \absv{1-\renprm}}$ multiplicative estimate of $\Mapeigv{\renprm}$. Note that $1-e^{-\eps \absv{1-\renprm}} \geq \frac{\eps \absv{1-\renprm}}{1+\eps \absv{1-\renprm}}$ since $e^{-x} \leq \frac{1}{x+1}$ for $x > -1$. Moreover, in the regime in which $\epsilon$ does not grow with $\dims$, $\frac{\eps \absv{1-\renprm}}{1+\eps \absv{1-\renprm}} = \theta(\epsilon)$. Therefore, in the remainder of the paper, we will be interested in $1+\epsilon$ multiplicative estimators.

Finally note that for any $X$, 
\[
\probof{\absv{X-\EE[X]}^2> 9 \cdot \Var(X)}<\frac 19
\]
by Markov's inequality. Since $\EE\Brack{\proh{(\renprm)}{\lambda}} = \flnpwr{\ns}{\renprm} \Mapeigv{\renprm}$ (by Lemma~\ref{lem:power-sums}), then we get an $1+\eps$ multiplicative estimator of $\Mapeigv{\renprm}$ with probability at least 8/9 if
\begin{align}
\Paren{ \eps \cdot\EE\Brack{\proh{(\renprm)}{\lambda}}}^2\ge 9\cdot \Var\Paren{\proh{(\renprm)}{\lambda}}.\label{eqn-var}
\end{align}

\section{Measuring $\rentprmmst$ for integral $\renprm$}
\label{sec:int-alpha}
Our main result for integral $\renprm \ge 1$ is the following tight bound (up to constant factors)  on the copy complexity of estimating $\rentprmmst$. 
\begin{theorem} For $\renprm \in \mathbb{N}\backslash{\{1\}}$,
\label{thm:alpha-int}
\[
\copycmpa{\renprm}= \Theta\Paren{\max\left\{\frac{\dims^{1-1/\renprm}}{\eps^2},\frac{\dims^{2-2/\renprm}}{\eps^{2/\renprm}} \right\}},
\]
where the hidden constants depend only on $\renprm$.
\end{theorem}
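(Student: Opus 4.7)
The plan is to prove the upper bound and lower bound separately, following the two-term structure that appears in the statement. I will use the unbiased estimator $\hat{M}_\alpha = p^{\#}_{(\alpha)}(\lambda) / n^{\underline{\alpha}}$ for $M_\alpha(\eta)$ guaranteed by Lemma~\ref{lem:power-sums}, and the reduction described just before the theorem which says that if the variance condition $(\epsilon \cdot n^{\underline{\alpha}} M_\alpha(\eta))^2 \ge 9\,\mathrm{Var}(p^{\#}_{(\alpha)}(\lambda))$ holds, then $S_\alpha$ can be estimated to within $\pm\epsilon$ with probability $8/9$.

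For the upper bound, the main task is therefore to control $\mathrm{Var}(p^{\#}_{(\alpha)}(\lambda)) = \EE[(p^{\#}_{(\alpha)}(\lambda))^2] - (n^{\underline{\alpha}} M_\alpha(\eta))^2$. Kerov's algebra gives a product rule expressing $p^{\#}_{(\alpha)} \cdot p^{\#}_{(\alpha)}$ as a linear combination $\sum_\mu c_\mu \, p^{\#}_\mu$, where each $\mu$ arises as the cycle type of a composition of two $\alpha$-cycles in $S_{2\alpha}$. Combinatorially, when two $\alpha$-cycles are multiplied over $\{1,\ldots,2\alpha\}$ sharing $j$ common symbols, the resulting cycle type has total size $2\alpha - j$; so only partitions with $|\mu| \le 2\alpha$ contribute, and by Hardy--Ramanujan there are $\exp(O(\sqrt{\alpha}))$ of them, a constant in $d$. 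Taking expectation and using $\EE[p^{\#}_\mu(\lambda)] = n^{\underline{|\mu|}} M_\mu(\eta)$, the variance becomes a bounded linear combination of $n^{\underline{|\mu|}} \prod_i M_{\mu_i}(\eta)$, with the dominant contribution coming from the cancellation of the leading $(n^{\underline{\alpha}})^2 M_\alpha(\eta)^2$ term. After cancellation the leading surviving terms scale like $n^{2\alpha-j} M_{(\alpha,\alpha)\setminus\text{shared}}(\eta)$, which are bounded via Lemma~\ref{lem:bnd_moments} by quantities of the form $n^{2\alpha-j} d^{j(\alpha-1)/\alpha} M_\alpha(\eta)^{2-j/\alpha}$. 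Combining with $M_\alpha(\eta) \ge d^{1-\alpha}$ and plugging into the variance condition yields, after balancing the different values of $j$, the two regimes $n \gtrsim d^{1-1/\alpha}/\epsilon^2$ (from the $j$ giving the classical-like term) and $n \gtrsim d^{2-2/\alpha}/\epsilon^{2/\alpha}$ (the genuinely quantum term). The principal obstacle here is keeping track of the coefficients $c_\mu$ carefully enough that their magnitudes do not blow up the variance; this is the part that requires real work, and I would handle it by identifying that each $c_\mu$ counts a bounded number of factorizations of a cycle.

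For the lower bound, the term $\Omega(d^{1-1/\alpha}/\epsilon^2)$ is immediate from Lemma~\ref{lem:class-quant} together with the classical result $S(H_\alpha,d,\epsilon) = \Omega(d^{1-1/\alpha}/\epsilon^2)$ cited in Section~\ref{sec:classical}. The substantive work is the quantum term $\Omega(d^{2-2/\alpha}/\epsilon^{2/\alpha})$, which I would obtain via a Le Cam two-point argument. I would construct two spectra $\eta^{(0)}, \eta^{(1)}$ on $[d]$ whose Rényi entropies differ by at least $2\epsilon$, for example by a small perturbation of the uniform distribution supported on two distinct ``profiles'' chosen so that $M_\alpha(\eta^{(0)}) - M_\alpha(\eta^{(1)}) = \Theta(\epsilon \cdot d^{1-\alpha})$. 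Using Lemma~\ref{lem:distance-bounds}, it suffices to upper-bound $\chi^2(SW_{\eta^{(0)}}^{\otimes n}, SW_{\eta^{(1)}}^{\otimes n})$ after $n$ copies. Because $SW_\eta(\lambda) = \dim(\lambda)\, s_\lambda(\eta)$, the chi-squared quantity reduces to $\sum_\lambda \dim(\lambda) s_\lambda(\eta^{(0)})^2 / s_\lambda(\eta^{(1)}) - 1$, which can be expanded using identities for Schur polynomials (from~\cite{Macdonald98}) and bounded by counting partitions (Hardy--Ramanujan). Balancing this against $\epsilon$ forces $n = \Omega(d^{2-2/\alpha}/\epsilon^{2/\alpha})$.

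The main obstacle is the variance computation in the upper bound: the $p^{\#}$ product expansion produces many cross-terms, and one must both identify exactly which partitions $\mu$ arise and bound the associated $M_\mu(\eta)$ sharply using Lemma~\ref{lem:bnd_moments} in order to get the precise exponents $2 - 2/\alpha$ and $2/\alpha$. The lower bound, while technical, reduces to Schur polynomial manipulations of a standard flavor once the correct two-point construction is in place.
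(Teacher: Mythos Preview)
Your overall strategy matches the paper's: the unbiased estimator $p^{\#}_{(\alpha)}/n^{\underline{\alpha}}$, Kerov's product expansion for the variance, and a Le Cam two-point argument against the maximally mixed state for the lower bound. However, you have misidentified the crux of the upper bound, and the intermediate variance bound you write down is too weak to recover the theorem.

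In the upper bound, the coefficients $c_\mu$ are \emph{not} the obstacle; the paper simply uses the crude bound $C_\mu\le(\alpha!)^2$. The real work is controlling $M_\mu(\eigv)$ for the partitions $\mu$ that appear, and the key fact you are missing is a combinatorial bound on the \emph{length} of the cycle type: when two $\alpha$-cycles share $j$ symbols, the composition has cycle type $\mu$ with $|\mu|=2\alpha-j$ \emph{and} $\ell(\mu)\le j$. Together with majorization (the paper's Lemma~\ref{lem:majorization}) this yields $M_\mu(\eigv)\le M_{2\alpha-2j+1}(\eigv)$, and then a log-convexity argument shows the maximum over $j$ is attained at an endpoint, producing the sharp two-term variance bound $C_\alpha n^{\alpha}\bigl(1+n^{\alpha-1}M_{2\alpha-1}(\eigv)\bigr)$. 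Your proposed bound $n^{2\alpha-j}d^{j(\alpha-1)/\alpha}M_\alpha(\eigv)^{2-j/\alpha}$ does not do this: at $j=1$, substituting the worst case $M_\alpha=d^{1-\alpha}$ into the variance condition forces $n\gtrsim d^{2-2/\alpha}/\eps^{2}$, which is strictly worse in $\eps$ than the claimed $d^{2-2/\alpha}/\eps^{2/\alpha}$. Without the length bound (the paper's Lemma~\ref{lem:lengthpartition}) the argument does not close.

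For the lower bound your plan is essentially the paper's, with one correction: $SW_{\eigv}$ already encodes all $n$ copies (weak Schur sampling on $\rho^{\otimes n}$ outputs a single $\lamb\vdash n$), so there is no further tensor power to take. The paper uses the specific one-spike perturbation $\eigvi{1}=(1+(\eps d)^{1/\alpha})/d$ versus uniform, together with a closed-form $\chi^2$ formula against the uniform Schur--Weyl distribution from~\cite{Wright16}; the resulting sum over partitions is then controlled via the Schur branching rule and Hardy--Ramanujan, rather than the generic ratio $\sum_\lambda \dim(\lambda)s_\lambda(\eta^{(0)})^2/s_\lambda(\eta^{(1)})$ you suggest.
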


\subsection{Achievability}

Our Renyi entropy estimator is simple, and is described in Algorithm~\ref{alg:renyi}.
\begin{algorithm}[htb]	

  	\begin{algorithmic}[1] 
	\State \textbf{Input:} $\ns$ independent copies of the state $\rho$, and $\renprm\in\NN$
	\State Run weak Schur sampling to obtain $\lamb\vdash\ns$. 
	\State Let $(\renprm)$ be the partition of $\renprm$ with one part.
	\State Compute $\proh{(\renprm)}{\lambda} = \flnpwr{n}{\renprm}\cdot \frac{\chi_{(\renprm)\cup1^{n-\renprm}}^{\lambda}}{\text{dim}(\lambda)}$.
	
	\State\textbf{Output:} $\frac1{1-\renprm}\log\Paren{\frac{\proh{(\renprm)}{\lambda}}{\flnpwr{n}{\renprm}}}$.
  	\end{algorithmic}

	\caption{Estimating Renyi entropy for integral $\renprm$'s.}
	\label{alg:renyi}	
\end{algorithm}

Note that we could have simply removed the $\flnpwr{n}{\renprm}$ terms from the algorithm's description, but these polynomials have a number of applications in representation theory to study the Symmetric group, and we simply keep the notation and definitions intact. 

To prove the theorem, we bound the expectation and concentration of $\proh{(\renprm)}{\lambda}$.
\begin{lemma} There is a constant $C_\renprm$ depending only on $\renprm$ such that
\begin{align}
\expectation{\phash{(\renprm)}(\lamb)} &= \flnpwr{\ns}{\renprm}\Map{\renprm}{\eigv}\label{eqn:expectation}\\
\variance{\phash{(\renprm)}(\lamb)} &\le C_\renprm\cdot \ns^{\renprm}\Paren{1+ \ns^{\renprm-1}\Mapeigv {2\renprm-1}}\label{eqn:hope-variance}.
\end{align} \label{lem:expectation-variance}
\end{lemma}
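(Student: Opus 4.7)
The expectation identity~\eqref{eqn:expectation} follows immediately by applying Lemma~\ref{lem:power-sums} with $\boldsymbol{\mu} = (\renprm)$, so the remaining work concerns the variance bound~\eqref{eqn:hope-variance}.

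My plan is to expand $\phash{(\renprm)}(\lamb)^2$ in the basis $\{\phash{\boldsymbol{\mu}}(\lamb)\}_{\boldsymbol{\mu}}$ using the multiplication rule from Kerov's algebra, which mirrors the product of conjugacy class sums in the center of $\CC[S_\ns]$. This gives
\[
\phash{(\renprm)}(\lamb) \cdot \phash{(\renprm)}(\lamb)
= \phash{(\renprm,\renprm)}(\lamb) + \sum_{\boldsymbol{\mu}\,:\,|\boldsymbol{\mu}| \le 2\renprm-1} c_{\boldsymbol{\mu}} \phash{\boldsymbol{\mu}}(\lamb),
\]
where the $c_{\boldsymbol{\mu}}$ are structure constants depending only on $\renprm$, and the set of partitions $\boldsymbol{\mu}$ appearing with nonzero coefficient is finite, with cardinality bounded by the Hardy--Ramanujan count for partitions of size at most $2\renprm$. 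The observation that $(\renprm,\renprm)$ is the \emph{only} partition of size $2\renprm$ to appear rests on the combinatorial fact that when two $\renprm$-cycles in $S_\ns$ act on disjoint supports their composition has cycle type $(\renprm,\renprm)$, whereas any sharing of even a single element produces a permutation supported on at most $2\renprm-1$ points.

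Taking expectations via Lemma~\ref{lem:power-sums} and subtracting $(\expectation{\phash{(\renprm)}(\lamb)})^2 = (\flnpwr{\ns}{\renprm})^2 \Mapeigv{\renprm}^2$ yields
\[
\variance{\phash{(\renprm)}(\lamb)}
= \Bigl(\flnpwr{\ns}{2\renprm} - (\flnpwr{\ns}{\renprm})^2\Bigr) \Mapeigv{\renprm}^2
+ \sum_{\boldsymbol{\mu}\,:\,|\boldsymbol{\mu}| \le 2\renprm-1} c_{\boldsymbol{\mu}}\, \flnpwr{\ns}{|\boldsymbol{\mu}|}\, \Map{\boldsymbol{\mu}}{\eigv}.
\]
The leading term is $O(\ns^{2\renprm-1})\Mapeigv{\renprm}^2$ because $\flnpwr{\ns}{2\renprm}$ and $(\flnpwr{\ns}{\renprm})^2$ are both monic polynomials in $\ns$ of degree $2\renprm$; combining this with the Cauchy--Schwarz estimate $\Mapeigv{\renprm}^2 = \bigl(\sum_i \eigvi{i}^{1/2}\cdot \eigvi{i}^{\renprm - 1/2}\bigr)^2 \le \Mapeigv{2\renprm-1}$ absorbs it into the desired $\ns^{2\renprm-1} \Mapeigv{2\renprm-1}$ piece. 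In the residual sum, partitions of size $|\boldsymbol{\mu}| \le \renprm$ contribute at most $O(\ns^\renprm)$ since $\Map{\boldsymbol{\mu}}{\eigv} = \prod_i \Map{\mu_i}{\eigv} \le 1$ by Lemma~\ref{lem:bnd_moments}(i). For the intermediate sizes $\renprm < |\boldsymbol{\mu}| \le 2\renprm-1$, I would iteratively apply the inequalities of Lemma~\ref{lem:bnd_moments}(ii)--(iii) to the individual parts $\mu_i$ to reduce $\prod_i \Map{\mu_i}{\eigv}$ to a bound in terms of a single power sum, and then pass to $\Mapeigv{2\renprm-1}$ using another Lemma~\ref{lem:bnd_moments} step; these contributions fit within $\ns^{2\renprm-1}\Mapeigv{2\renprm-1}$.

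The main obstacle is securing the expansion $\phash{(\renprm)}^2 = \phash{(\renprm,\renprm)} + \text{lower-order}$ and verifying that $(\renprm,\renprm)$ is the unique partition of size $2\renprm$ to appear with a nonzero coefficient. Although this is a standard consequence of the Ivanov--Kerov theory of normalized characters (cited via~\cite{IvanovK02,meliot2010kerov}), extracting the structure constants $c_{\boldsymbol{\mu}}$ (or even crude $\renprm$-dependent bounds on them) requires careful bookkeeping of how two $\renprm$-cycles can overlap on $k$ common points for $k=1,\dots,\renprm$ and what resulting cycle structures arise. A secondary difficulty is the mixed-partition estimate $\prod_i \Map{\mu_i}{\eigv} \lesssim \Mapeigv{2\renprm-1}$ for intermediate $|\boldsymbol{\mu}|$: Lemma~\ref{lem:bnd_moments} applies cleanly to pure powers, so converting it into usable bounds for general $\boldsymbol{\mu}$ will need some case analysis on the length and part-sizes of $\boldsymbol{\mu}$.
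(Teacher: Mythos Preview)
Your outline matches the paper's approach: expand $\phash{(\renprm)}(\lamb)^2$ in the $\phash{\boldsymbol{\mu}}$ basis via Ivanov--Kerov, take expectations, and control the residual sum term by term. The paper even handles the leading $(\renprm,\renprm)$ contribution more simply than you do, observing that $\flnpwr{\ns}{2\renprm} < (\flnpwr{\ns}{\renprm})^2$, so that entire term is negative and may be discarded.

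The genuine gap is the ``secondary difficulty'' you flag. The tools you name, Lemma~\ref{lem:bnd_moments}(ii)--(iii), go the wrong way: they bound a large-index power sum by powers of smaller-index ones, whereas here you must bound $\prod_i \Map{\mu_i}{\eigv}$ \emph{above} by a single $\Map{k}{\eigv}$ with $k$ as large as possible. Knowing only $|\boldsymbol{\mu}|=\renprm+j$ is not enough. For example, $\boldsymbol{\mu}=(2,1^{\renprm-1})$ has $|\boldsymbol{\mu}|=\renprm+1$ and $\Map{\boldsymbol{\mu}}{\eigv}=\Map{2}{\eigv}$; on the uniform spectrum with $\ns\asymp \dims^{2-2/\renprm}$ the term $\ns^{\renprm+1}\Map{2}{\eigv}$ already dominates $\ns^{\renprm}\bigl(1+\ns^{\renprm-1}\Mapeigv{2\renprm-1}\bigr)$ once $\renprm\ge 3$. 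So an argument that does not identify \emph{which} partitions actually occur cannot close.

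What the paper supplies is precisely the cycle-overlap bookkeeping you allude to but do not carry out. Parameterizing the overlap by $j\in\{0,\ldots,\renprm-1\}$ (so $|\boldsymbol{\mu}|=\renprm+j$ for $\boldsymbol{\mu}\in\cS_j$), the key combinatorial fact is Lemma~\ref{lem:lengthpartition}: every cycle of $\sigma_1\circ\sigma_2$ must pass through the common support $\{j+1,\ldots,\renprm\}$, forcing $\ell(\boldsymbol{\mu})\le \renprm-j$. This length constraint, combined with the majorization inequality $\Map{\boldsymbol{\mu}}{\eigv}\le \Map{|\boldsymbol{\mu}|-\ell(\boldsymbol{\mu})+1}{\eigv}$ of Lemma~\ref{lem:majorization} (comparing $\boldsymbol{\mu}$ to $[(|\boldsymbol{\mu}|-\ell(\boldsymbol{\mu})+1)\cup 1^{\ell(\boldsymbol{\mu})-1}]$), gives $\Map{\boldsymbol{\mu}}{\eigv}\le \Map{2j+1}{\eigv}$. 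A second application of Lemma~\ref{lem:majorization} to $(2j+1,2j+1)$ versus $(2j-1,2j+3)$ shows $j\mapsto \ns^{j}\Map{2j+1}{\eigv}$ is log-convex, hence bounded by its endpoint values $1$ and $\ns^{\renprm-1}\Mapeigv{2\renprm-1}$. The offending partition $(2,1^{\renprm-1})$ above has length $\renprm>\renprm-1$ and is thereby excluded from $\cS_1$; the length lemma is exactly what rules such terms out.
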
 
\subsubsection{Proof of Theorem~\ref{thm:alpha-int} using Lemma~\ref{lem:expectation-variance}}

We want~\eqref{eqn-var} to hold, which happens if 
\begin{align} \label{eqn-var-2}
\Paren{\eps \flnpwr{\ns}{\renprm}\Map{\renprm}{\eigv}}^2\ge 9 C_\renprm\cdot \ns^{\renprm}\Paren{1+ \ns^{\renprm-1}\Mapeigv {2\renprm-1}}.
\end{align}
We claim that $ n=\Theta\Paren{\max\left\{\frac{\dims^{1-1/\renprm}}{\eps^2},\frac{\dims^{2-2/\renprm}}{\eps^{2/\renprm}} \right\}}$ is sufficient for~\eqref{eqn-var-2} to hold. Note that $\flnpwr{\ns}{\renprm}=\Theta(\ns^\renprm)$ and let $\tilde{c}_{\renprm} > 0$ be such that $\flnpwr{\ns}{\renprm} \geq \sqrt{\tilde{c}_{\renprm}} \ns^{\renprm}$. Now suppose $n \geq c_\renprm \max\left\{\frac{\dims^{1-1/\renprm}}{\eps^2},\frac{\dims^{2-2/\renprm}}{\eps^{2/\renprm}} \right\}$ for some constant $c_\renprm \geq \max \{ (18C_\renprm/ \tilde{c}_{\renprm} )^{1/\renprm}, 18 C_\renprm/ \tilde{c}_{\renprm} \} $. Then 
\begin{align}
\Paren{ \eps \flnpwr{\ns}{\renprm}\Map{\renprm}{\eigv}}^2 
& \geq \tilde{c}_{\renprm} \eps^2 \ns^{2 \renprm} \Map{\renprm}{\eigv}^2 \notag \\
&  \geq \frac{ \tilde{c}_{\renprm} }{2} \eps^2 \ns^{2 \renprm} \left( \frac{1}{\dims^{2\renprm-2}} +\frac{\Mapeigv{2\renprm-1}}{d^{1-1/\renprm}} \right) \label{eqn:stepone} \\
&  \geq  \frac{ \tilde{c}_{\renprm} }{2} \eps^2 \ns^{2 \renprm} \left( \frac{c_\renprm^\renprm}{\eps^2 \ns^\renprm} + \frac{c_\renprm \Mapeigv{2\renprm-1}}{\ns \eps^2}\right) \label{eqn:steptwo} \\
& = \frac{ \tilde{c}_{\renprm} }{2}  \ns^{ \renprm} \left( c_\renprm^{\renprm} + c_\renprm \ns^{\renprm-1}\Mapeigv{2 \renprm-1} \right)\nonumber \\
& \geq 9 C_\renprm\cdot \ns^{\renprm}\Paren{1+ \ns^{\renprm-1}\Mapeigv {2\renprm-1}},\nonumber
\end{align}
where~\eqref{eqn:stepone} follows from the fact that $\Mapeigv{\renprm}\ge\dims^{1-\renprm}$ and $\Mapeigv{2\renprm-1}\le \dims^{1-1/\alpha}\Mapeigv{\renprm}^2$ (by Lemma~\ref{lem:bnd_moments} ($i$) and ($v$)), and~\eqref{eqn:steptwo} follows from the assumption that $ \ns \geq c_\renprm \frac{\dims^{1-1/\renprm}}{\eps^2}$ and $ \ns \geq c_\renprm \frac{\dims^{2-2/\renprm}}{\eps^{2/\renprm}}$.

%For fixed $\renprm$ and $n\ge \renprm^2$, $\flnpwr{\ns}{\renprm}=\Theta(\ns^\renprm)$. We therefore replace $\flnpwr{\ns}{\renprm}$ with $\ns^\renprm$ in the equation. Now, we want the left hand side to be larger than each term on the right hand side. 

%\begin{align}
%\ns^{2\renprm}\eps^2\Mapeigv{\renprm}^2\ge \frac1{12}C_\renprm\ns^{\renprm},
%\nonumber
%\end{align}
%and since $\Mapeigv{\renprm}\ge\dims^{1-\renprm}$, this happens when
%\begin{align}
%\ns^{\renprm}\ge \Omega\Paren{\frac{\dims^{2\renprm-2}}{\eps^2}},\label{eqn:second-term}
%\end{align}
%giving the second term of the theorem. 
%
%We also want
%\begin{align}
%\label{eqn:cond-int-mom}
%\ns^{2\renprm}\eps^2\Mapeigv{\renprm}^2\ge \frac1{12}C_\renprm\ns^{2\renprm-1}\Mapeigv{2\renprm-1}.
%\nonumber
%\end{align}
%From Lemma~\ref{lem:bnd_moments}, we have
%\[
%\Mapeigv{2\renprm-1}\le \dims^{1-1/\alpha}\Mapeigv{\renprm}^2
%\]
%For the inequality to hold, it will suffice if 
%\begin{align}
%	n \ge \frac{C_\renprm}{12}\cdot \frac{\dims^{1-1/\renprm}}{\eps^2}.\label{eqn:first-term}
%\end{align}
%Combining~\eqref{eqn:second-term}, and~\eqref{eqn:first-term} proves Theorem~\ref{thm:alpha-int}.

\subsubsection{Proof of Lemma~\ref{lem:expectation-variance}}
Equation \eqref{eqn:expectation} has already been established (cf. Lemma~\ref{lem:power-sums}). It remains to bound the variance of the estimator. 
\begin{align}
\Var\Paren{\proh{(\renprm)}{\lambda}} = \expectation{\proh{(\renprm)}{\lambda}^2} -\expectation{\proh{(\renprm)}{\lambda}}^2.\nonumber
\end{align}
The second term is evaluated from the means of the $\proh{\renprm}{\lambda}$ polynomials, which we know. For the first term, we need to bound the expectation of the products of such polynomials. In fact, there is a general result~\cite[Proposition 4.5]{IvanovK02}\cite[Corollary 3.8.8]{Wright16} that states that for any $\mu_1, \mu_2$, 
\[
\proh{\mu_1}{\lambda}\cdot \proh{\mu_2}{\lambda} = \proh{\mu_1\cup\mu_2}{\lambda} + \text{linear combination of $p^{\#}$'s for partitions of size at most $|\mu_1\cup\mu_2|-1$.}
\]
In our case, both the partitions are $(\renprm)$. So we can write 
\[
\proh{(\renprm)}{\lambda}\cdot \proh{(\renprm)}{\lambda} = \proh{(\renprm)\cup(\renprm)}{\lambda} + \sum_{\mu \in \cS} C_{\mu}\proh{\mu}{\lambda}, 
\]
where $C_{\mu}$ is at most $(\alpha!)^{2} < \exp(O(\renprm \log\renprm))$, and $\cS$ is the set of all partitions $\mu$ that can be obtained through the following procedure: 
\begin{enumerate}
\item 
Let $j$ be an integer in the set $\{0\upto\renprm-1\}$. 
\item
Let $\sigma_1$ be a permutation over $[\renprm+j]$ that has a cycle over the elements $\{1\upto\renprm\}$, and all the remaining elements are fixed points (the set $\{\renprm+1\upto\renprm+j\}$ for $j \geq 1$).
\item
Let $\sigma_2$ be a permutation over $[\renprm+j]$ that has a cycle over the elements $\{j+1\upto j+\renprm\}$, and all the remaining elements are fixed points (the set $\{1 \upto j\}$ for $j \geq 1$).
\item 
Let $\mu$ be the cycle structure of $\sigma_1 \circ \sigma_2$.
\end{enumerate}

The set of partitions that can be obtained through the above procedure for a \emph{fixed} $j \in \{0 \upto \renprm-1\}$ will be denoted by $\cS_j$. Now consider,
%Note that for any partition $\mu$,  $[(\alpha+j-l(\mu)+1) \cup 1^{l(\mu)-1}] \unrhd \mu $. 
%Thus  $\Mapeigv{\mu} \le \Mapeigv{\alpha+j-l(\mu)+1}$. Moreover, $\Mapeigv{r}$ is a non-increasing function in $r$ for fixed $\eigv$, which implies  $\Mapeigv{\alpha+j-l(\mu)+1} \le \Mapeigv{2j+1} $ since $l(\mu_j)\le\renprm-j$.
%
%
%\begin{remark}
%In fact, one can obtain an exact expression for the values of the coefficients in terms of the probability that choosing $\sigma_1$ and $\sigma_2$ randomly gives rise to $\mu$. However, for our purposes, we only require a bound on the probabilities. 
%\end{remark}
\begin{align}
\Var\Paren{\proh{(\renprm)}{\lambda}} 
=& \expectation{\proh{(\renprm)}{\lambda}^2} -\expectation{\proh{(\renprm)}{\lambda}}^2\nonumber\\
=& \expectation{ \proh{(\renprm,\renprm)}{\lambda} + \sum_{\mu\in\cS}C_{\mu}\proh{\mu}{\lambda}}-
\expectation{\proh{(\renprm)}{\lambda}}^2\nonumber\\
=& \flnpwr{n}{2\renprm}\Mapeigv{(\renprm,\renprm)} + \sum_{\mu\in\cS}C_{\mu}\flnpwr{n}{\absv{\mu}} \Mapeigv{\mu}-
\Paren{\flnpwr{n}{\renprm}\Mapeigv{\renprm}}^2\nonumber\\
=& \Paren{\flnpwr{n}{2\renprm}-(\flnpwr{n}{\renprm})^2} \Mapeigv{\renprm}^2 + \sum_{\mu\in\cS}C_{\mu}\cdot \flnpwr{n}{\absv{\mu}} \Mapeigv{\mu},\nonumber
\end{align}
where we have used that $\Mapeigv{\renprm}^2 = \Mapeigv{(\renprm,\renprm)}$. To bound $\Mapeigv{\mu}$ for $\mu \in \cS$, we use the following two lemmas. Lemma~\ref{lem:lengthpartition} is proved in Appendix~\ref{app:lemma-partition} and Lemma~\ref{lem:majorization} is proved in Appendix~\ref{app:lemma-majorization}. Recall that for a partition $\mu$, $\ell(\mu)$ denotes the length of the partition.

\begin{lemma} \label{lem:lengthpartition}
For all $j \in \{0 \upto \renprm-1\}$ and $\mu \in \cS_j$, $\ell(\mu)\le\renprm-j$.
\end{lemma}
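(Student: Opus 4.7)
My plan is to identify $\ell(\mu)$ with the number of cycles of $\sigma_1 \circ \sigma_2$ viewed as a permutation of $[\renprm+j]$, and then invoke the classical Hurwitz (genus) inequality for permutations: if two permutations $\tau_1, \tau_2 \in S_n$ generate a subgroup that acts transitively on $[n]$, then
$$c(\tau_1) + c(\tau_2) + c(\tau_1 \tau_2) \le n + 2,$$
where $c(\cdot)$ denotes the number of cycles including fixed points. Equivalently, the associated branched cover of the sphere with monodromies $(\tau_1,\tau_2,(\tau_1\tau_2)^{-1})$ has non-negative genus; this is a standard fact from the combinatorics of permutation factorizations.

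To apply this with $\tau_1 = \sigma_1$, $\tau_2 = \sigma_2$, and $n = \renprm + j$, I will first verify transitivity. The supports of the single non-trivial cycles of $\sigma_1$ and $\sigma_2$ are $\{1, \ldots, \renprm\}$ and $\{j+1, \ldots, \renprm+j\}$ respectively; they overlap on $\{j+1, \ldots, \renprm\}$, which is non-empty because $j \le \renprm - 1$. Hence starting from any element of this overlap, iterating $\sigma_1$ reaches every element of $\{1, \ldots, \renprm\}$ and iterating $\sigma_2$ reaches every element of $\{j+1, \ldots, \renprm+j\}$, producing a single $\langle \sigma_1, \sigma_2 \rangle$-orbit on $[\renprm+j]$. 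Next I count $c(\sigma_1) = c(\sigma_2) = 1 + j$, since each consists of one $\renprm$-cycle together with $j$ fixed points. Plugging into the Hurwitz inequality,
$$\ell(\mu) = c(\sigma_1 \sigma_2) \le (\renprm + j) + 2 - 2(1 + j) = \renprm - j,$$
which is exactly the claim.

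The main step requiring an external fact is the Hurwitz inequality itself; I would cite it rather than reprove it, since the paper already draws on other classical results from representation theory and combinatorics. If a self-contained derivation is preferred, it can be obtained by induction on $n - c(\sigma_2)$, tracking how right-multiplication of $\sigma_2$ by a transposition $(a\,b)$ changes $c(\sigma_2)$ by $\pm 1$ and $c(\sigma_1 \sigma_2)$ by $\mp 1$ (according to whether $a,b$ lie in the same or different cycles), so the quantity $c(\sigma_1)+c(\sigma_2)+c(\sigma_1\sigma_2)-n$ is non-increasing along a reduction of $\sigma_2$ to the identity; the base case $\sigma_2 = e$ forces $\sigma_1$ to be an $n$-cycle by transitivity, and the inequality reduces to an equality.
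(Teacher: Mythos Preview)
Your main argument via the Hurwitz genus inequality is correct and gives the bound cleanly. The paper takes a different, entirely elementary route: it shows directly that every cycle of $\sigma_1\circ\sigma_2$ must contain at least one element of the overlap set $S=\{j+1,\ldots,\alpha\}$, by tracking what happens under iteration of $\sigma$ starting from a point in $F_1=\{\alpha+1,\ldots,\alpha+j\}$ or $F_2=\{1,\ldots,j\}$; since $|S|=\alpha-j$, this immediately bounds the number of cycles. That argument is completely self-contained and uses nothing beyond the definition of $\sigma_1,\sigma_2$. Your approach is shorter and more conceptual, but it imports a nontrivial external fact; the paper's approach is longer but keeps the proof at the level of direct cycle-chasing, which matches the elementary flavor of the rest of that section.

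One caveat on your closing aside: the sketched self-contained proof of the Hurwitz inequality is not quite right as stated. Right-multiplying $\sigma_2$ by a transposition $(a\,b)$ changes $c(\sigma_2)$ by $\pm1$ and $c(\sigma_1\sigma_2)$ by $\pm1$, but these signs are \emph{not} automatically opposite; they depend independently on whether $a,b$ share a cycle in $\sigma_2$ and in $\sigma_1\sigma_2$. So the sum $c(\sigma_1)+c(\sigma_2)+c(\sigma_1\sigma_2)$ can go up by $2$, stay fixed, or go down by $2$, and one needs a more careful choice of transpositions (or a different induction) to push the argument through while preserving transitivity. Since you propose to cite the inequality rather than reprove it, this does not affect the validity of your main proof, but the sketch should be dropped or repaired.
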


\begin{definition}
Let $\boldsymbol\mu$ and $\boldsymbol\mu'$ be partitions of the same integer $r$. 
Then $\boldsymbol\mu$ is said to majorize $\boldsymbol\mu'$, denoted $\boldsymbol\mu\unrhd\boldsymbol\mu'$, if for all $j\ge1$, 
\[
\sum_{i=1}^j \boldsymbol\mu_i \ \ge \ \sum_{i=1}^j \boldsymbol\mu{'}_i
\] 
\end{definition}

\begin{lemma} \label{lem:majorization}
Let $\boldsymbol\mu\unrhd\boldsymbol\mu'$. Then for any distribution $\eigv$, $\Mapeigv{\boldsymbol\mu} \geq \Mapeigv{\boldsymbol\mu'}$.
\end{lemma}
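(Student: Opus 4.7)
The plan is to reduce the claim to a single two-variable inequality via the classical characterization of dominance order, and then to prove that inequality by Cauchy--Schwarz.

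First I would invoke the well-known result of Muirhead / Hardy--Littlewood--P\'olya that the dominance order on partitions of $r$ is generated by elementary ``Robin Hood'' transfers: whenever $\boldsymbol\mu \unrhd \boldsymbol\mu'$ there is a finite chain $\boldsymbol\mu = \boldsymbol\mu^{(0)} \unrhd \boldsymbol\mu^{(1)} \unrhd \cdots \unrhd \boldsymbol\mu^{(k)} = \boldsymbol\mu'$ in which each consecutive pair differs in only two positions, specifically a part of size $a$ decrements to $a-1$ and a part of size $b$ increments to $b+1$, with $a > b$. By telescoping it suffices to prove $M_{\boldsymbol\mu^{(t)}}(\eigv) \geq M_{\boldsymbol\mu^{(t+1)}}(\eigv)$ for a single such step; since all other parts are unchanged, this reduces to the two-variable inequality
\[
M_a(\eigv)\, M_b(\eigv) \;\geq\; M_{a-1}(\eigv)\, M_{b+1}(\eigv) \qquad \text{for integers } a > b \geq 1.
\]

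To establish this inequality I would show that $k \mapsto \log M_k(\eigv)$ is convex on the positive integers. Cauchy--Schwarz applied to the vectors $(\eigvi{i}^{(k-1)/2})_i$ and $(\eigvi{i}^{(k+1)/2})_i$ yields
\[
M_k(\eigv)^2 \;=\; \Bigl(\sum_i \eigvi{i}^{(k-1)/2} \cdot \eigvi{i}^{(k+1)/2}\Bigr)^2 \;\leq\; M_{k-1}(\eigv)\, M_{k+1}(\eigv),
\]
so the successive ratios $M_{k+1}(\eigv)/M_k(\eigv)$ are non-decreasing in $k$. Consequently, whenever $a - 1 \geq b$ we have $M_a/M_{a-1} \geq M_{b+1}/M_b$, which rearranges to the required inequality and closes the induction.

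The only delicate point I foresee is the bookkeeping when a transfer changes the number of positive parts (a new part is created or an existing part is killed), which corresponds to a transfer into or out of a position of value zero in a padded representation. I would handle it by padding both partitions with trailing zeros to a common length, extending the base inequality to the boundary using $M_0(\eigv) = d$ and $M_1(\eigv) = 1$, and again invoking the same log-convexity chain for the extended range. I expect this accounting to be the most tedious part of writing out the argument, but no new idea is required beyond the single Cauchy--Schwarz estimate above.
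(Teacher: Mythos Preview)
Your approach is correct and genuinely different from the paper's. The paper expands $M_{\boldsymbol\mu}(\eigv)=\sum_{j_1,\ldots,j_\ell}\eta_{j_1}^{\mu_1}\cdots\eta_{j_\ell}^{\mu_\ell}$, groups the index tuples into orbits under the symmetric group on $[\ell]$, and then applies Muirhead's inequality orbit by orbit. You instead decompose the dominance relation into a chain of single-box Robin Hood transfers and reduce to the two-part estimate $M_a M_b\ge M_{a-1}M_{b+1}$, which you obtain from the log-convexity of $k\mapsto M_k$ via Cauchy--Schwarz. Your route is more elementary in that it uses only Cauchy--Schwarz rather than Muirhead as a black box; the paper's route, on the other hand, makes the symmetric-function structure explicit and does not need the chain lemma for dominance order.

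One caution about your last paragraph: the boundary case you flag (a transfer into a zero part) is not merely tedious---the lemma is actually \emph{false} when $\ell(\boldsymbol\mu)\neq\ell(\boldsymbol\mu')$. For the uniform spectrum on $[d]$ with $d\ge 2$ one has $M_{(3)}=d^{-2}<d^{-1}=M_2\cdot M_1=M_{(2,1)}$, even though $(3)\unrhd(2,1)$. So no padding argument with $M_0=d$ can work there. This is harmless for you: the paper's own proof also tacitly assumes equal length (it opens with ``Let $\ell=\ell(\mu_1)=\ell(\mu_2)$''), and every invocation of the lemma in the paper compares partitions of the same length. In that equal-length regime the standard Robin Hood chain never drives a part to zero (each transfer has $b\ge 1$), so your Cauchy--Schwarz step suffices with no boundary case to handle at all. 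Simply drop the last paragraph and note the equal-length hypothesis.
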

Noting that $\flnpwr{n}{2\renprm}<(\flnpwr{n}{\renprm})^2$, we obtain 
\begin{align}
\Var\Paren{\proh{(\renprm)}{\lambda}} 
<&\sum_{\mu\in\cS}C_{\mu}\cdot \flnpwr{n}{\absv{\mu}} \Mapeigv{\mu}\nonumber\\
\stackrel{\text{(a)}} \le& c_{\alpha}\sum_{j=0}^{\renprm-1} \sum_{\mu \in \cS_j} \flnpwr{n}{\absv{\mu}} \Mapeigv{\mu}\nonumber\\
\stackrel{\text{(b)}} \le & c_{\alpha}\sum_{j=0}^{\renprm-1} \sum_{\mu \in \cS_j} \flnpwr{n}{\absv{\mu}} \Mapeigv{\renprm+j-\ell(\mu)+1}\nonumber\\
\stackrel{\text{(c)}} \le& c_{\alpha}\sum_{j=0}^{\renprm-1} \sum_{\mu \in \cS_j} \flnpwr{n}{\renprm+j} \Mapeigv{2j+1}\nonumber\\
\le& c_{\alpha}n^{\renprm}\sum_{j=0}^{\renprm-1} \sum_{\mu \in \cS_j} {n}^{j} \Mapeigv{2j+1}\nonumber\\
\stackrel{\text{(d)}} \le& c_{\alpha}n^{\renprm} |\cS| \max\{1,{n}^{\renprm-1} \Mapeigv{2\renprm-1}\},
\end{align} 
where (a) follows from the fact that $C_{\mu} \leq (\renprm !)^2 := c_{\alpha}$, (b) follows from Lemma~\ref{lem:majorization} and the fact that $[(\alpha+j-l(\mu)+1) \cup 1^{l(\mu)-1}] \unrhd \mu$, (c) follows from Lemma~\ref{lem:lengthpartition} and the fact that $\Mapeigv{r}$ is a non-increasing function in $r$ for fixed $\eigv$, and (d) follows from the fact that for $j\in\{1\upto\renprm-2\}$,\begin{align*}
{n}^{j-1} \Mapeigv{2j-1} \leq {n}^{j} \Mapeigv{2j+1} \Rightarrow {n}^{j} \Mapeigv{2j+1}  \le {n}^{j+1} \Mapeigv{2j+3}.
\end{align*}
Note that the above implication follows from Lemma~\ref{lem:majorization} applied to the partitions $(2j+1,2j+1)$ and $(2j-1,2j+3)$:
\begin{align*}
({n}^{j} \Mapeigv{2j+1})^2 \le {n}^{j-1} \Mapeigv{2j-1}\cdot{n}^{j+1} \Mapeigv{2j+3}.
\end{align*}
Finally, note that $|\cS|$ depends only on $\renprm$. Hence, the lemma follows by setting $C_{\renprm} = c_{\alpha} |\cS|$.

\subsection{Converse}

Notice that there are two terms in the copy complexity in Theorem~\ref{thm:alpha-int}. The first term is $\dims^{1-1/\renprm}/\eps^2$. \cite{AcharyaOST17} showed that even in the classical setting, a lower bound of $\Omega(\dims^{1-1/\renprm}/\eps^2)$ holds. Invoking Lemma~\ref{lem:class-quant} gives the first term.

We use the classical Le Cam's method to prove the lower bound. We define a hypothesis testing problem below.
 
\paragraph{Two Point Testing.} Given density matrices $\rho$ and $\sigma$ with spectrums $\eigv$ and $\eignuv$, respectively. Let $\ns$ be given.
\begin{itemize}
\item 
Let $X$ be a uniform random variable over $\{0,1\}$. 
\item
If $X=0$, generate a Young tableau $\lamb\sim\swdist{\eigv}$.
\item
If $X=1$, generate a Young tableau $\lamb\sim\swdist{\eignuv}$. 
\item
Given $\lamb$, predict $X$ with $\hat X$.  
\end{itemize}
Let $P_e = \min_{\hat X} \probof{\hat X \ne X}$. 
From basic hypothesis testing results, we can deduce that 
\[
P_e = \frac12-\frac12\dtv{\swdist{\eigv}}{\swdist{\eignuv}}.
\]

We construct two spectrums $\eigv$ and $\eignuv$, such that $\rent{\renprm}{\eigv}- \rent{\renprm}{\eignuv}=\Theta(\eps)$, and  
\[
\frac12\dtv{\swdist{\eigv}}{\swdist{\eignuv}}<0.05,
\]
unless $\ns=\Omega(\dims^{2-2/\renprm}/\eps^{2/\renprm})$. This will prove that unless $\ns$ is large enough, there is no classifier that can test between the spectrums $\eigv$ and $\eignuv$ with probability greater than $2/3$, implying our lower bound.

Note that the second term in the complexity expression of Theorem~\ref{thm:alpha-int} dominates when $\eps>1/\sqrt{\dims}$. We henceforth assume in the remainder of this section that $\eps>1/\sqrt{\dims}$. 

Consider the following two spectrums:
\begin{align}
\eigv &= \Paren{\frac{1+(\eps\dims)^{1/\renprm}}{\dims}, \frac{1-\frac{(\eps\dims)^{1/\renprm}}{\dims-1}}{\dims},\ldots,\frac{1-\frac{(\eps\dims)^{1/\renprm}}{\dims-1}}{\dims}},\label{dist-lb-int} \\
 \eignuv &=\Paren{\frac{1}{\dims},\ldots,\frac{1}{\dims}}.\label{dist-unif}
\end{align}
Note that for any $d > 2$, assuming that\footnote{If $\eps \geq \log d$, then $\hat{S}_{\renprm}=0$ is a valid estimate and the problem becomes trivial.}   $\eps < \log d$, we have
\begin{equation}
\label{eq:depscond}
(\eps \dims)^{1/\alpha} < d-1.
\end{equation}
Thus $\eigv$ is a valid distribution.
$\eignuv$ corresponds to the maximally-mixed state.

\begin{lemma} 
\label{clm:difference}
Suppose $\eps>1/\sqrt{d}$ and $\dims>(3 \renprm)^{\frac{2\renprm}{\renprm-1}} $. Then
\[
%|\rent{\renprm}{\eigv}-\rent{\renprm}{\eignuv}| =\Theta(\eps),
| \rent{\renprm}{\eignuv}-\rent{\renprm}{\eigv}  | \geq \frac{1}{\renprm -1 } \log \left(1 + \frac{2\eps}{3}\right). 
\]
\end{lemma}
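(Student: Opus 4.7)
The plan is to show directly that $\Mapeigv{\renprm}/\Map{\renprm}{\eignuv} \geq 1 + \tfrac{2\eps}{3}$. Since $\renprm>1$ makes the power sum Schur-convex and $\eignuv$ is uniform, we have $\Map{\renprm}{\eignuv} \leq \Mapeigv{\renprm}$, so $\rent{\renprm}{\eignuv}\ge\rent{\renprm}{\eigv}$ and
\[
|\rent{\renprm}{\eignuv}-\rent{\renprm}{\eigv}| \;=\; \frac{1}{\renprm-1}\log\frac{\Mapeigv{\renprm}}{\Map{\renprm}{\eignuv}}.
\]
Thus a lower bound on the ratio inside the log yields the lemma.

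First I would introduce the convenient variable $u \ed (\eps\dims)^{1/\renprm}$, so the ``spike'' of $\eigv$ equals $(1+u)/\dims$ and each of the other $\dims-1$ entries equals $(1-u/(\dims-1))/\dims$. A direct calculation, together with $\Map{\renprm}{\eignuv}=\dims^{1-\renprm}$, gives
\[
\frac{\Mapeigv{\renprm}}{\Map{\renprm}{\eignuv}} \;=\; \frac{1}{\dims}\left[(1+u)^\renprm + (\dims-1)\Paren{1-\frac{u}{\dims-1}}^\renprm\right].
\]
The inequality $u<\dims-1$ (which the excerpt already verifies in~\eqref{eq:depscond}) ensures the second base is positive.

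Next I would exploit that $\renprm\in\NN$ to expand both powers by the binomial theorem. The $k=0$ terms sum to $\dims$, the $k=1$ terms cancel, and for each $k\ge 2$ the combined coefficient of $u^k$ is $\binom{\renprm}{k}\bigl[1+(-1)^k(\dims-1)^{1-k}\bigr]$. Hence
\[
\frac{\Mapeigv{\renprm}}{\Map{\renprm}{\eignuv}} \;=\; 1 + \frac{1}{\dims}\sum_{k=2}^{\renprm}\binom{\renprm}{k}u^k\bigl[1+(-1)^k(\dims-1)^{1-k}\bigr].
\]
For $k\ge 2$ and $\dims\ge 3$ we have $(\dims-1)^{1-k}\le 1/(\dims-1)<1$, so every bracket is strictly positive and every summand is non-negative. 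The key step is then to discard all summands except $k=\renprm$, giving the lower bound
\[
\frac{\Mapeigv{\renprm}}{\Map{\renprm}{\eignuv}} \;\ge\; 1 + \frac{u^{\renprm}}{\dims}\bigl[1+(-1)^{\renprm}(\dims-1)^{1-\renprm}\bigr] \;\ge\; 1 + \eps\Paren{1-\frac{1}{(\dims-1)^{\renprm-1}}},
\]
where I used $u^{\renprm}=\eps\dims$ and the worst-case sign (odd $\renprm$).

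Finally I would invoke the largeness hypothesis: $\dims > (3\renprm)^{2\renprm/(\renprm-1)}$ comfortably forces $(\dims-1)^{\renprm-1}\ge 3$, so the parenthesis is at least $2/3$ and $\Mapeigv{\renprm}/\Map{\renprm}{\eignuv}\ge 1+2\eps/3$. Taking $\tfrac{1}{\renprm-1}\log(\cdot)$ of both sides yields the claim. The only real obstacle is the bookkeeping in checking that every $k\ge 2$ binomial term is non-negative (so that keeping only $k=\renprm$ is a legitimate lower bound); the assumption $\eps>1/\sqrt{\dims}$ is not needed for this specific inequality, but it fixes the regime in which this two-point construction is used in the converse that follows.
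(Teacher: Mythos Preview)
Your argument is correct, and it differs from the paper's in a useful way. The paper bounds the two power terms separately via the real-analysis inequalities $(1+x)^{\renprm} > 1+x^{\renprm}$ and $(1-x)^{\renprm} > 1-\renprm x$, which leaves a residual term $-\renprm(\eps\dims)^{1/\renprm}$ that must be absorbed into $\eps\dims$; this is precisely where \emph{both} hypotheses $\eps>1/\sqrt{\dims}$ and $\dims>(3\renprm)^{2\renprm/(\renprm-1)}$ are used (together they give $\eps\dims>(3\renprm)^{\renprm/(\renprm-1)}$). You instead expand exactly via the integer binomial theorem, observe that the $k=0,1$ contributions combine to $1$ while every $k\ge 2$ contribution is nonnegative, and then keep only $k=\renprm$. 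That yields the cleaner lower bound $1+\eps\bigl(1-(\dims-1)^{1-\renprm}\bigr)$, and the condition $(\dims-1)^{\renprm-1}\ge 3$ needed to finish follows from the $\dims$-hypothesis alone; you are right that $\eps>1/\sqrt{\dims}$ is not needed for your inequality. The trade-off is that your expansion is only valid for integer $\renprm$ (which is the setting of this section), whereas the paper's two inequalities hold for any real $\renprm\ge 1$, so its proof would transfer verbatim to the non-integer case should one want it.
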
 % \le \frac{1}{1 - \alpha} \log (1 + 2^\alpha \eps).
\begin{proof}
Computing the moments of $\eigv$, we have
\begin{align}
\Map{\renprm}{\eigv} &= \frac1{\dims^{\renprm}}\Paren{\Paren{1+(\eps\dims)^{1/\renprm}}^{\renprm}+(\dims-1)\Paren{1-\frac{(\eps\dims)^{1/\renprm}}{\dims-1}}^{\renprm}}.\nonumber
\end{align}
For $\renprm\ge1$  and $x\ge0$, note that $(1+x)^\renprm>1+x^{\renprm}$, and, if $x \le 1$, $(1-x)^\renprm>1-\renprm x$. Using these two inequalities above with $x=(\eps\dims)^{1/\renprm}$ in the first term, and with $x = (\eps\dims)^{1/\renprm}/(\dims-1)$ in the second term (and using (\ref{eq:depscond})), we obtain 
\begin{align}
\Map{\renprm}{\eigv} &= \frac1{\dims^{\renprm}}\Paren{\Paren{1+(\eps\dims)^{1/\renprm}}^{\renprm}+(\dims-1)\cdot\Paren{1-\frac{(\eps\dims)^{1/\renprm}}{\dims-1}}^{\renprm}}\nonumber\\
&\ge \frac1{\dims^{\renprm}}\Paren{1+\eps\dims+(\dims-1)\cdot\Paren{1-\frac{\renprm(\eps\dims)^{1/\renprm}}{\dims-1}}}\nonumber\\
&\ge \frac1{\dims^{\renprm}}\Paren{\dims+\eps\dims-{{\renprm(\eps\dims)^{1/\renprm}}}}\nonumber\\
& \ge\frac{d}{\dims^{\renprm}}\Paren{1+\frac23\eps}\nonumber\\
& = {\Map{\renprm}{\eignuv}}\cdot\Paren{1+\frac23\eps},\nonumber
\end{align}
whenever $d>\frac{(3 \renprm)^{\frac{\renprm}{\renprm-1}}}{\eps}$, which is implied by the conditions
$\eps > 1/\sqrt{d}$ and $\dims > (3 \renprm)^{\frac{2\renprm}{\renprm-1}}$.
%Taking logarithms on both sides prove the lemma.
%On the other hand, 
%\begin{align*}
%\Map{\renprm}{\eigv} &= \frac1{\dims^{\renprm}}\Paren{\Paren{1+(\eps\dims)^{1/\renprm}}^{\renprm}+(\dims-1)\cdot\Paren{1-\frac{(\eps\dims)^{1/\renprm}}{\dims-1}}^{\renprm}} \\
%& \leq \frac1{\dims^{\renprm}} \left(2^{\renprm} \eps \dims +\dims-1 \right) \\
%& \leq \Map{\renprm}{\eignuv} (1+2^\alpha \epsilon),
%\end{align*}
%where the inequality follows from the fact that $(1+x)^\renprm \leq 2^\renprm x^\renprm$ for $x \geq 1$ and $\renprm \geq 1$.
\end{proof}

\begin{lemma}
\label{thm:all-dim} %Suppose $\eps>1/\sqrt{d}$, and $\dims>81{\renprm}^4$. 
Any algorithm that can test between $\eigv$ and $\eignuv$ with probability at least $2/3$ requires at least $\Omega\Paren{\frac{\dims^{2-2/\renprm}}{\eps^{2/\renprm}}}$ copies.
\end{lemma}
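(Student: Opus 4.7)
The plan is to apply the classical Le Cam two-point method to the spectra $\eigv$ and $\eignuv$ constructed in~\eqref{dist-lb-int}--\eqref{dist-unif}. By Lemma~\ref{clm:difference} their R\'enyi entropies differ by $\Omega(\eps)$, so any estimator accurate to within a small constant times $\eps$ with success probability $\geq 2/3$ would distinguish the induced Schur--Weyl distributions $\swdist{\eigv}$ and $\swdist{\eignuv}$ with error $\leq 1/3$. Combined with the standard two-point identity $P_e = \tfrac12 - \tfrac12\,d_{TV}(\swdist{\eigv}, \swdist{\eignuv})$ recalled just before the lemma, this forces $d_{TV}(\swdist{\eigv}, \swdist{\eignuv}) \geq 1/3$. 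Hence it suffices to show that $d_{TV}(\swdist{\eigv}, \swdist{\eignuv}) < 1/3$ whenever $\ns$ is asymptotically smaller than $\dims^{2-2/\renprm}/\eps^{2/\renprm}$.

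To bound this TV distance, I would pass to the $\chi^2$ divergence via Lemma~\ref{lem:distance-bounds}:
$$2\,d_{TV}(\swdist{\eigv}, \swdist{\eignuv})^2 \;\leq\; \chi^2(\swdist{\eigv}, \swdist{\eignuv}) \;=\; \sum_{\lamb \vdash \ns} \text{dim}(\lamb)\,\frac{\schurlx{\lamb}{\eigv}^2}{\schurlx{\lamb}{\eignuv}} \;-\; 1,$$
using the Schur--Weyl formula from~\eqref{eq:WSSdist}. The central task is then to upper-bound this Schur-polynomial sum by a constant in the targeted regime.

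My approach would be to expand each $\schurlx{\lamb}{\eigv}$ in the power-sum basis via~\eqref{eqn:power-central}, square, and re-sum over $\lamb$; using the orthogonality of irreducible characters (equivalently, the relations satisfied by the normalized characters $p^{\#}_\mu$ introduced in Section~\ref{sec:schur-power}), the $\lamb$-sum should collapse into a combination of ratios of power-sum products $\Mapeigv{\mu}/\Map{\mu}{\eignuv}$ indexed by partitions $\mu$. Because $\eigv - \eignuv$ has zero coordinate-sum, the first-order Taylor contributions to $\Mapeigv{r} - \Map{r}{\eignuv}$ cancel, giving the refined estimate $|\Mapeigv{r} - \Map{r}{\eignuv}| = O(c^r/\dims^{r-1})$ with $c = (\eps\dims)^{1/\renprm}$. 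Feeding this decay into the $\chi^2$ expansion and using the Hardy--Ramanujan partition-number bound $|\Lambda_r| = e^{O(\sqrt r)}$ from~\cite{HardyR18} to control the combinatorial proliferation of cross-terms, the dominant contribution should be of order $(\ns c^2/\dims^2)^{\renprm}/(\renprm!)^2$, which is $O(1)$ exactly when $\ns = O(\dims^{2-2/\renprm}/\eps^{2/\renprm})$.

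The main obstacle is extracting the extra factor of $\dims^{\renprm}$ over what the naive classical $\chi^2(\eigv^{\otimes \ns},\eignuv^{\otimes \ns})$ bound gives; that classical bound would, via data-processing, only yield the weaker threshold $\ns = \Omega(\dims^{1-2/\renprm}/\eps^{2/\renprm})$. The extra factor must come from cancellations specific to the Schur--Weyl pushforward---intuitively, weak Schur sampling is far less sensitive to a single-coordinate perturbation of the spectrum than a raw classical i.i.d.\ sample, because the first-order character contribution in the zero-sum perturbation direction vanishes by symmetry. Locating and quantifying these cancellations, via the character-theoretic identities together with the controlled decay of $|\Mapeigv{r} - \Map{r}{\eignuv}|$ and partition-counting bounds from~\cite{HardyR18,Macdonald98}, will be the core technical work of the proof.
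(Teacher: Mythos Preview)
Your high-level strategy---Le Cam's two-point method, then bounding $d_{TV}(\swdist{\eigv},\swdist{\eignuv})$ via the $\chi^2$ divergence---matches the paper exactly. The divergence is at the technical core.

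The paper does \emph{not} expand $\schurlx{\lamb}{\eigv}$ in the power-sum basis and appeal to character orthogonality. Instead it invokes a closed-form identity (Corollary~6.2.4 of~\cite{Wright16}, built on shifted Schur/Okounkov--Olshanski machinery) valid precisely when $\eignuv$ is the uniform spectrum: writing $x_i = \dims\,\eigvi{i}-1$, one has
\[
\chisq{\swdist{\eigv}}{\swdist{\eignuv}}
=\sum_{\mu:\,1\le\ell(\mu)\le\dims}\frac{\schurlx{\mu}{x}^2}{\dims^{\overline{\mu}}\,\dims^{|\mu|}}\,\ns^{\underline{|\mu|}}.
\]
This formula already encodes the ``extra factor of $\dims$'' you correctly flag as the crux: the sum is over \emph{auxiliary} partitions $\mu$ (not over $\lamb\vdash\ns$), the $|\mu|=1$ term vanishes because $\sum_i x_i=0$, and the combination $\dims^{\overline{\mu}}\dims^{|\mu|}\ge(\dims^2/e)^{|\mu|}$ supplies the $\dims^2$ in the denominator. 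From there the paper bounds $\schurlx{\mu}{x}$ by factoring out $(\eps\dims)^{|\mu|/\renprm}$, applying the Schur branching rule~\cite[Eq.~5.10]{Macdonald98} to peel off the single large coordinate of $x$, trivially bounding the remaining Schur polynomial at $1^{\dims-1}$, and summing the resulting geometric-type series via the Hardy--Ramanujan bound.

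Your proposed route has a genuine gap. Character orthogonality collapses sums of the form $\sum_\lamb \chi_\lamb(\mu)\chi_\lamb(\mu')$, but the $\chi^2$ expression has a \emph{ratio} $\schurlx{\lamb}{\eigv}^2/\schurlx{\lamb}{\eignuv}$, and it is not at all clear how your power-sum expansion would handle the denominator; you would effectively need to rederive the Okounkov--Olshanski identity. Your heuristic that the dominant contribution sits at degree $\renprm$ (giving $(\ns c^2/\dims^2)^{\renprm}$) is also off: in the paper's formula the leading nonzero term is at $|\mu|=2$, and every term $(\text{const}\cdot\ns\eps^{2/\renprm}/\dims^{2-2/\renprm})^m$ for $m\ge2$ must be controlled. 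Finally, the improvement over the classical data-processing bound is a single factor of $\dims$ in the threshold (from $\dims^{1-2/\renprm}$ to $\dims^{2-2/\renprm}$), not $\dims^{\renprm}$.
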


\begin{proof}
We prove that $\dtv{\swdist{\eigv}}{\swdist{\eignuv}}<0.05$. Bounding the total variation distance is hard to handle, and therefore other distance measures are used to bound the total variation distance. By Lemma~\ref{lem:distance-bounds}, we know that
\[
2\dtv{\swdist{\eigv}}{\swdist{\eignuv}}^2\le \chisq{\swdist{\eigv}}{\swdist{\eignuv}}.
\]
The objective is to bound the $\chi^2$ distance between the SW distributions for the two states with $\ns$ copies. We use the following formula, derived in~\cite[Corollary~6.2.4]{Wright16}. The result in this form was obtained from related results on Schur functions~\cite{OkounkovO98}. 
\begin{lemma}%[OW16]
Let $x_1,\ldots,x_\dims$ be such that $\sum x_i = 0$, and $x_i\ge-1$. Let $\eigv$ be the spectrum with $\eigvi{i} =(1+x_i)/\dims$, and $\eignuv$ be the spectrum of the maximally mixed state, namely $\eignuv_i = 1/\dims$. Then,
\[
\chisq{\swdist{\eigv}}{\swdist{\eignuv}}=\sum_{\mu:1\le\ell(\mu)\le\dims}\frac{\schurlx{\mu}{x}^2}{\dims^{\overline{\mu}}\dims^{|\mu|}}\ns^{\underline{|\mu|}},
\] 
where \newest{for a partition $\mu$, $\dims^{\overline{\mu}}$ is defined below.} 
\end{lemma}

\newest{\begin{definition}\label{def:falling}
Let $\mu$ be a partition. Index each box in the Young tableaux for $\mu$ with an entry $(i,j)$, where $i$ the row number and $j$ is the column number of the box. For each box $\square$ in the tableaux, let $c(\square) = j-i$ be the content of $\square$. Then for a real number $z\in\RR$,
\[
z^{\overline{\mu}} = \prod_{\square} (z+c(\square)).
\]
\end{definition}}

\newest{We will use the following bound on these falling powers of partitions to prove our lower bound. 
\begin{lemma}\label{lem:falling}
Let $\mu$ be a partition such that $\ell(\mu)\le d$, where $\ell(\mu)$ is the number of non-zero entries of $\mu$ (which is also the number of non-empty rows in the Young tableaux. Then
\[
\dims^{\overline{\mu}} \ge \Paren{\frac{\dims}{e}}^{\absv{\mu}}.
\]
\end{lemma}}
This result is proved in Appendix~\ref{app:lemma-falling}, and we now prove our result using this lemma. 

The distribution $\eignuv$ corresponds to the spectrum defined in~\eqref{dist-unif}, and we choose the $x_i$'s to make the spectrum $\eigv$ equal to~\eqref{dist-lb-int}. In particular, let $x_1 = (\eps\dims)^{1/\renprm}$, and $x_i = - \frac{(\eps\dims)^{1/\renprm}}{\dims-1}$ for $i=2,\ldots,\dims$. Let $y_1 = 1$, and $y_i = -1/(\dims-1)$ for $i = 2,\ldots, \dims$. Then, 

\[
x^{\dims}_1 = (\eps\dims)^{1/\renprm}\cdot \Paren{1, \frac{-1}{\dims-1}, \ldots, \frac{-1}{\dims-1}} = (\eps\dims)^{1/\renprm} \cdot y^{\dims}_1.
\]

Recall that the Schur polynomial $\schurlx{\mu}{x_1^d}$ is a homogeneous symmetric polynomial of degree $\absv{\mu}$. This implies, 
\begin{align}
	\schurlx{\mu}{x_1^{\dims}} = \Paren{\eps\dims}^{\frac{\absv{\mu}}{\renprm}}\schurlx{\mu}{y^{\dims}_1}\label{eqn:schur-xy}.
\end{align}
Let $\ydp$ be the vector of absolute values of $\yd$, namely
\[
\ydp =\Paren{1, \frac1{\dims-1}, \ldots, \frac1{\dims-1}}.
\]
Then, $\absv{\schurlx{\mu}{\yd}}\le\absv{\schurlx{\mu}{\ydp}}$. Using the fact that $\dims^{\overline{\mu}}\ge (\dims/e)^{|\mu|}$, and $\ns^{\underline{m}} \le \ns^{m}$,
\begin{align}
\chisq{\swdist{\eigv}}{\swdist{\eignuv}}
=&\sum_{{\mu:1\le\ell(\mu)\le\dims}}\frac{\schurlx{\mu}{x}^2}{\dims^{\overline{\absv\mu}}\dims^{|\mu|}}\ns^{\underline{|\mu|}} \nonumber\\ 
= & \sum_{ \substack{ {\mu:1\le\ell(\mu)\le\dims} \\ \newest{ 1 < \absv{\mu} \leq n } } }\frac{\schurlx{\mu}{x}^2}{\dims^{\overline{\absv\mu}}\dims^{|\mu|}}\ns^{\underline{|\mu|}} \nonumber\\ 
\le & \sum_{ \substack{ {\mu:1\le\ell(\mu)\le\dims} \\ \newest{ 1 < \absv{\mu} \leq n } }  }\schurlx{\mu}{\ydp}^2\cdot\Paren{\frac{\ns(\eps\dims)^{2/\renprm}}{\newest{(\dims/e)}\cdot\dims}}^{\absv{\mu}} \nonumber\\
= & \sum_{ \substack{ {\mu:1\le\ell(\mu)\le\dims} \\ \newest{ 1 < \absv{\mu} \leq n } }  }\schurlx{\mu}{\ydp}^2\cdot\Paren{\frac{\newest{e}\cdot\ns\eps^{2/\renprm}}{\dims^{2-2/\renprm}}}^{\absv{\mu}}\nonumber\\
\le & \sum_{m=2}^{\ns}\Paren{\Paren{\frac{\newest{e}\ns\eps^{2/\renprm}}{\dims^{2-2/\renprm}}}^{m}\cdot\Paren{\sum_{\mu:|\mu|=m}\schurlx{\mu}{\ydp}^2}},\nonumber
\end{align}
\newest{where the second equality follows from the fact that $\ns^{\underline{|\mu|}} = 0$ for $\absv{\mu} > n$.} Let $p(m)$ denote the partition number of $m$, the number of unordered partitions of $m$. Bounds on the growth of partition numbers are well established~\cite{HardyR18}. We only require the following loose upper bound that holds for all $m$
\[
p(m)<e^{3\sqrt{m}}. 
\]
This gives
\begin{align}
\chisq{\swdist{\eigv}}{\swdist{\eignuv}}\le \sum_{m=2}^{\ns}\Paren{e^{3\sqrt{m}}\Paren{\frac{\ns\eps^{2/\renprm}}{\dims^{2-2/\renprm}}}^{m}\cdot \max_{\mu:|\mu|=m}\schurlx{\mu}{\ydp}^2}.\label{eqn:chi-main}
\end{align}
%\todo{Aaron set the lower index in the previous eq. to be 2}

The entries of $\ydp$ have the following structure. The first entry is 1, and all other entries are $1/(\dims-1)$. This allows us to use the ``branching rule'' of Schur polynomials. The general form can be found in~\cite[Eq 5.10]{Macdonald98}. A special case appears in the following form in~\cite[Eq. 1.4]{LascouxW11b}.
\begin{lemma} The Schur polynomial $\schurlx{\mu}{z^d_1}$ can be decomposed as:
\begin{align}
\schurlx{\mu}{z^d_1} = \sum_{\lambda\prec \mu} (z_1)^{\absv{\mu} - \absv{\lambda}} \schurlx{\lambda}{z_2^{\dims}},\label{eqn:branching-rule}
\end{align}
%\todo{Aaron replaced $z_2^{\dims-1}$ in the statement of the lemma with $z_2^{\dims}$}
where the summation is over all partitions $\lambda$ such that and $\mu_1\ge \lambda_1\ge\mu_2\ge\lambda_2\ge\mu_3\ge\ldots$,
\end{lemma}

Applying this with $z^\dims=\ydp$, 
\begin{align}
\schurlx{\mu}{\ydp} 
= \sum_{\lambda\prec \mu}\Paren{\frac{1}{\dims-1}}^{|\lambda|}\schurlx{\lambda}{1^{\dims-1}}.
\end{align}
From (\ref{eq:schurdef}), we see that 
$\schurlx{\lambda}{1^{\dims-1}}$ is the number of semistandard Young tableaux with shape $\lambda$ and entries from $[\dims-1]$. 
%It is known that 
%\[
%\schurlx{\lambda}{1^{\dims-1}} = \prod_{1\leq i< j \leq n} \frac{\lambda_i - \lambda_j + j-i}{j-i}.
%\]
%However, 
We can trivially bound $\schurlx{\lambda}{1^{\dims}}\le(\dims-1)^{\absv{\lambda}}$, the total number of ways of filling the Young tableaux with entries from $[d]$, \emph{without any regard to ordering}. 

We need one final definition.
\begin{definition}
For a partition $\mu$, let $\mathrm{prec}(\mu)$ be the number of partitions $\lambda$ such that $\lambda\prec\mu$. 
\end{definition}

\begin{lemma}
\label{lem:prec}
%Let $\absv{\mu}=m \le d$. Then, 
\begin{align*}
\mathrm{prec}(\mu) = \prod_{i=1}^{\infty}\Paren{\mu_i-\mu_{i+1}+1}< m^{\sqrt {2 m}}.
\end{align*}
\end{lemma}
\begin{proof}
The equality is due to a simple counting argument. For the inequality,
let $\mltu_{i_1}>\mltu_{i_2}>\ldots>\mltu_{i_k}\ge1$ be the distinct elements in $\mltu$. If $k=1$, the inequality is easy to show, so assume that $k > 1$. 
Then, $k(k+1)/2\le \mltu_{i_1}+\ldots+\mltu_{i_k} \leq m$, implying that $k <\sqrt {2 m}$. Moreover, $\mltu_{i_1}-\mltu_{i_k}\le \absv{\mltu}-1=m-1$ since $\mltu_{i_k} > 1$.
\begin{align}
\text{prec}(\mu)\le \prod_{j=1}^{k} \Paren{1+\mltu_{i_j}-\mltu_{i_{j+1}}}
\le m^k<m^{\sqrt {2m}}.\nonumber
\end{align}
\end{proof}
Therefore, 
\begin{align}
\schurlx{\mu}{y_+^d} 
= \sum_{\lambda\prec \mu}\Paren{\frac{1}{\dims-1}}^{|\lambda|}\schurlx{\lambda}{1^{\dims-1}} \le \sum_{\lambda\prec \mu}\Paren{\frac{1}{\dims-1}}^{|\lambda|}\Paren{\dims-1}^{|\lambda|}= \text{prec}(\mu)\le \absv{\mu}^{\sqrt {2 \absv{\mu} }}.
\label{bound-schur-sp}
\end{align}

%\begin{align}
%\schurlx{\mu}{y_+} 
%&= \sum_{\lambda\prec \mu}\Paren{\frac{1}{\dims-1}}^{|\lambda|}\schurlx{\lambda}{1^{\dims-1}}\nonumber\\
%& \le \sum_{\lambda\prec \mu}\Paren{\frac{1}{\dims-1}}^{|\lambda|}\Paren{\dims-1}^{|\lambda|}\nonumber\\
%& = \text{prec}(\mu)\nonumber\\
%&\le \ind^{\sqrt {2\ind}}
%\label{bound-schur-sp}
%\end{align}

Plugging~\eqref{bound-schur-sp} in~\eqref{eqn:chi-main}, 
\begin{align}
\chisq{\swdist{\eigv}}{\swdist{\eignuv}}&\le \sum_{m=2}^{\ns}\Paren{e^{3\sqrt m}\Paren{\frac{\newest{e}\ns\eps^{2/\renprm}}{\dims^{2-2/\renprm}}}^{m}\cdot \max_{\mu:|\mu|=m}\schurlx{\mu}{\ydp}^2}\nonumber\\
&\le \sum_{m=2}^{\ns}\Paren{e^{3\sqrt m}\Paren{\frac{\newest{e}\ns\eps^{2/\renprm}}{\dims^{2-2/\renprm}}}^{m}\cdot m^{2\sqrt{2m}}}\nonumber\\
&\le \sum_{m=2}^{\ns}\Paren{(em)^{3\sqrt m}\Paren{\frac{\newest{e}\ns\eps^{2/\renprm}}{\dims^{2-2/\renprm}}}^{m}}.\nonumber
\end{align}
Finally note that $m^{\sqrt m} <2\cdot 2^{m}$ for all $m>1$. Therefore, 
\begin{align}
\chisq{\swdist{\eigv}}{\swdist{\eignuv}}
\le \sum_{m=2}^{\ns}8\Paren{\frac{(2e)^4\ns\eps^{2/\renprm}}{\dims^{2-2/\renprm}}}^{m}.\nonumber
\end{align}
Therefore, unless $n\ge\Omega\Paren{\frac{\dims^{2-\frac2{\renprm}}}{\eps^{\frac{2}{\renprm}}}}$, the $\chi^2$ distance is small, proving the result.
\end{proof}

\section{von Neumann Entropy}

\subsection{Empirical Entropy Upper Bound}

Analogous to the classical setting, the empirical distribution is
\[
\eighati i \ed \frac{\lambi{i}}{\ns}.
\]
The empirical estimate of $\entmst$ is
\[
\empents{\lamb} \ed \sum_{i=1}^{\dims} \frac{\lambi{i}}{\ns}\log\frac{\ns}{\lambi{i}} = \sum_{i=1}^{\dims} \eighati{i}\log\frac{1}{\eighati{i}}.
\]

We prove the following bound on the mean squared error of this estimator.
\begin{theorem}
\label{thm:von-neumann-empirical}
The empirical entropy estimate satisfies:
\[
\expectation{\Paren{\empents{\lamb}-\entmst}^2} \le O\Paren{\frac{\dims^4}{\ns^2}+\frac{\dims^2}{\ns}+\frac{\log^2\ns}{\ns}}.
\]
\end{theorem}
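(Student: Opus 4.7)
The proof uses the standard bias--variance decomposition
\[
\EE\bigl[(\empents{\lamb} - \entmst)^2\bigr] \;=\; \bigl(\EE[\empents{\lamb}] - \entmst\bigr)^2 \;+\; \variance{\empents{\lamb}},
\]
and my plan is to produce separate estimates: a squared-bias bound of $O(\dims^4/\ns^2)$ and a variance bound of $O(\dims^2/\ns + \log^2\ns/\ns)$.

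For the bias I first pin down its sign via majorization. Invoking the combinatorial description of weak Schur sampling from Section~\ref{sec:wss}, realize the random $\lamb$ through i.i.d.\ samples $X^\ns \sim \eigv$, and let $\hat p$ denote their sorted empirical distribution. Lemma~\ref{lem:emp-eyd} gives that $\lamb/\ns$ majorizes $\hat p$, and Schur-concavity of Shannon entropy then forces $\empents{\lamb} = H(\lamb/\ns) \le H(\hat p)$. Jensen's inequality (applied to the unsorted empirical distribution, whose coordinates have mean $\eigv$, and using symmetry of $H$) yields $\EE[H(\hat p)] \le H(\eigv) = \entmst$. Thus $\EE[\empents{\lamb}] \le \entmst$ and the bias is non-positive. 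For the quantitative direction I set up the constrained optimization problem
\[
\sup_{\eigv}\ \Bigl\{ H(\eigv) - \EE_{\lamb \sim \swdist{\eigv}}\bigl[H(\lamb/\ns)\bigr]\Bigr\},
\]
ranging over all spectra $\eigv$ on $[\dims]$. Combining the per-entry concentration of $\lamb/\ns$ around $\eigv$ (at scale $O(\sqrt{\dims/\ns})$, available from the O'Donnell--Wright analysis of the Schur--Weyl distribution) with a Taylor expansion of $H$ around $\eigv$, and treating carefully the coordinates of $\eigv$ close to the boundary of the simplex, this supremum is bounded by $O(\dims^2/\ns)$; squaring produces the $\dims^4/\ns^2$ term.

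For the variance I apply Lemma~\ref{lem:bdd-diff-var} to the function $f(X^\ns) \ed \empents{\lamb(X^\ns)}$, where $\lamb(X^\ns)$ is the Young diagram output by the subsequence characterization of WSS. The central combinatorial fact is that each partial sum $T_k(X^\ns) \ed \lamb_1 + \cdots + \lamb_k$, being the maximum total length of $k$ disjoint non-decreasing subsequences, changes by at most one under a single-coordinate perturbation of $X^\ns$. Telescoping $\lamb_k = T_k - T_{k-1}$ gives $|\lamb_k - \lamb_k'| \le 2$ for every $k$, with the differences vanishing outside the first $\max(\ell(\lamb),\ell(\lamb')) \le \dims$ indices. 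Writing $H(\lamb/\ns) = \log\ns - \tfrac1\ns \sum_k \lamb_k\log\lamb_k$ and using the elementary bound $|x\log x - y\log y| \le |x-y|(|\log\max(x,y)|+1)$ on the ``large'' rows together with the trivial bound $x\log x \le \log\ns$ on the ``small'' rows, a careful accounting produces a bounded-differences constant of the form $c_i = O((\dims + \log\ns)/\ns)$. Substituting into Lemma~\ref{lem:bdd-diff-var} gives $\variance{f} \le \tfrac14 \sum_{i=1}^\ns c_i^2 = O(\dims^2/\ns + \log^2\ns/\ns)$.

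The main obstacle is the bias analysis. The variance bound is essentially mechanical once the bounded-differences constant is extracted, whereas Schur-concavity plus Jensen's inequality only supply the easy direction of the bias. Quantitatively controlling $H(\eigv) - \EE[H(\lamb/\ns)]$ requires exploiting the per-entry concentration of $\lamb/\ns$ around $\eigv$ simultaneously with the local Lipschitz behavior of $H$ away from the boundary of the simplex, and this is precisely the content of the optimization-problem reformulation.
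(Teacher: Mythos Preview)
Your overall strategy --- bias--variance decomposition, bounded differences for the variance, and O'Donnell--Wright concentration for the bias --- is the same as the paper's.  But you allocate the three terms of the target bound differently, and this difference exposes a gap.

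\textbf{Bias.} You claim the bias itself is $O(\dims^2/\ns)$, so that the squared bias contributes only the $\dims^4/\ns^2$ term.  The ``Taylor expansion plus careful boundary treatment'' sketch does not deliver this.  The first-order term of that expansion is exactly $\sum_i (\eigv_i - \EE[\hat\eta_i])\log(1/\eigv_i)$, and the O'Donnell--Wright per-entry bound $|\eigv_i - \EE[\hat\eta_i]| \le 2\sqrt{\min(1,\eigv_i\dims)/\ns}$ only controls this to order $\dims/\sqrt{\ns}$, not $\dims^2/\ns$.  The paper formalizes this via an optimization problem (maximize $|\sum_i c_i\sqrt{\eigv_i}\log(1/\eigv_i)|$ subject to $|c_i|\le 2$ and $\sum_i c_i\sqrt{\eigv_i}=0$) whose value is $\Theta(\sqrt{\dims})$, yielding bias $\le \dims^2/\ns + 9\dims/\sqrt{\ns}$.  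Squared, the bias therefore contributes both $\dims^4/\ns^2$ \emph{and} $\dims^2/\ns$.  Your sketch does not explain how to beat this.

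\textbf{Variance.} The paper proves the much sharper $\Var(\empents{\lamb})=O(\log^2\ns/\ns)$, i.e.\ $c_i=O(\log\ns/\ns)$ with no $\dims$ dependence at all.  This requires more than the per-row bound you sketch: using only $|\lamb_k-\lamb_k'|\le 2$ for each of $\dims$ rows gives $c_i=O(\dims\log\ns/\ns)$, which is too weak even for your claimed $O((\dims+\log\ns)/\ns)$.  One needs an Abel-summation argument on the first-order part (exploiting $|\sum_{k\le j}\Delta_k|\le 1$ and monotonicity of $\log\lamb_k$), and for the higher-order remainder the paper observes that for each distinct row-length value at most four of the $\Delta_k$ are nonzero, with at most $\sqrt{2\ns}$ distinct values appearing; summing $\Delta_k^2/\lamb_k$ over these yields $O(\log\ns)$.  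Your ``careful accounting'' would need at least the Abel step to reach $O((\dims+\log\ns)/\ns)$.

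In short: the paper places the $\dims^2/\ns$ term in the squared bias and gets a $\dims$-free variance; you try to place it in the variance and claim a sharper bias, but the Taylor-expansion route naturally produces a $\dims/\sqrt{\ns}$ bias term that you have not shown how to eliminate.  The final theorem survives either way, but your stated bias bound is unsupported as written.
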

An immediate corollary is the following sample complexity bound. 
\begin{corollary}
\[
\copycmph= O\Paren{\frac{\dims^2}{\eps^2}+\frac{\log^2(1/\eps)}{\eps^2}}.
\]
\end{corollary}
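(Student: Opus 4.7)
The plan is to decompose
\[
\expectation{\Paren{\empents{\lamb}-\entmst}^2} = \Paren{\expectation{\empents{\lamb}}-\entmst}^2 + \variance{\empents{\lamb}},
\]
and bound the two pieces separately. I expect the squared bias to contribute the $\dims^4/\ns^2 + \dims^2/\ns$ terms and the variance to contribute the $\log^2\ns/\ns$ term.

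For the variance, my plan is to invoke the combinatorial description of weak Schur sampling from Section~\ref{sec:wss}: the Young diagram $\lamb$ is a deterministic function of the i.i.d.\ sequence $X^\ns$ drawn from the spectrum distribution $\eigv$. The key stability property is that a single RSK insertion alters the Young diagram by exactly one box, so swapping one coordinate of $X^\ns$ changes $\lamb$ by at most a constant number of boxes in symmetric difference; hence $\|\lamb/\ns-\lamb'/\ns\|_1 = O(1/\ns)$. Combined with the Fannes-style Lipschitz bound $|H(p)-H(q)| \le \tfrac{1}{2}\|p-q\|_1 \log\dims + h(\|p-q\|_1/2)$ (with $h$ the binary entropy), the empirical entropy $\empents{\lamb}$, viewed as a function of $X^\ns$, satisfies the bounded-differences condition~\eqref{eqn:conc} with $c_i = O((\log\ns+\log\dims)/\ns)$. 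Lemma~\ref{lem:bdd-diff-var} then yields $\variance{\empents{\lamb}} = O(\log^2\ns/\ns)$ in the relevant regime $\dims \le \mathrm{poly}(\ns)$.

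For the bias, I would first apply Lemma~\ref{lem:emp-eyd}: since $\lamb/\ns$ majorizes the classical empirical distribution $\hat p$ of $X^\ns$ and Shannon entropy is Schur-concave, $\empents{\lamb} \le H(\hat p)$ pointwise. This permits the decomposition
\[
\entmst - \expectation{\empents{\lamb}} = \Paren{H(\eigv)-\expectation{H(\hat p)}} + \Paren{\expectation{H(\hat p)}-\expectation{\empents{\lamb}}},
\]
whose first summand is the classical empirical entropy bias, known to be $O(\dims/\ns)$ via a Taylor expansion of $x\log x$. The second summand — the extra loss caused by replacing $\hat p$ by its majorizing Young profile $\lamb/\ns$ — is the quantum-specific piece. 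Here I would set up a constrained optimization over the sorted spectrum $\eigv$: using coordinate-wise concentration of $\lambda_i/\ns$ around $\eigv_i$ from O'Donnell and Wright~\cite{ODonnellW15, Wright16}, together with a second-order expansion of $x\log(1/x)$, bound the maximum expected gap. I expect this to produce a bias of the form $O(\dims^2/\ns + \dims/\sqrt{\ns})$, whose square contributes both $\dims^4/\ns^2$ and $\dims^2/\ns$.

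The main obstacle is the bias analysis. A naive application of Fannes' inequality to the total-variation concentration $\expectation{\|\lamb/\ns-\eigv\|_1}=O(\dims/\sqrt{\ns})$ only yields an $O(\dims\log\dims/\sqrt{\ns})$ bias, a spurious $\log\dims$ factor away from what is needed. Removing that logarithm, and separately extracting the sharper $\dims^2/\ns$ contribution that dominates for small $\ns$, requires exploiting coordinate-wise concentration together with the normalization constraint $\sum_i\lambda_i/\ns=1$ — precisely what the optimization formulation is designed to do. The variance step, by contrast, is a routine application of bounded differences.
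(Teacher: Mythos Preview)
Your overall architecture — bias-variance decomposition, bounded differences for the variance, coordinate-wise O'Donnell--Wright concentration plus an optimization for the bias — is exactly the paper's. The substantive differences, and one real gap, are as follows.

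\textbf{Variance.} Your key step asserts that swapping one coordinate of $X^\ns$ changes $\lamb$ by $O(1)$ boxes in symmetric difference, hence $\|\lamb/\ns-\lamb'/\ns\|_1=O(1/\ns)$, and then applies Fannes. The justification you give (``a single RSK insertion alters the Young diagram by exactly one box'') does not establish this: that fact concerns appending at the end, whereas altering a middle coordinate is not a composition of an RSK deletion and insertion. Indeed, removing an interior element can change the shape to a non-subshape — e.g.\ deleting the middle entry of $(2,1,1.5,2,1)$ takes the shape from $(3,1,1)$ to $(2,2)$ — so the ``delete then insert, two boxes total'' picture breaks down. What \emph{is} known (and what the paper uses, via \cite[Prop.~2.2]{OW17}) is only the partial-sum bound $\bigl|\sum_{i\le k}(\lambi{i}-\lamb'_i)\bigr|\le 1$ for every $k$; this is compatible with $\Theta(\sqrt{\ns})$ nonzero differences $\Delta_i$, so a naive Fannes bound would give a useless $c_i=O(n^{-1/2}\log\ns)$. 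The paper's Lemma~\ref{lem:lipschitz} sidesteps this entirely: it Taylor-expands $x\log x$ at each $\lambi{i}/\ns$ and shows that the first-order contributions $\sum_i\Delta_i(1+\log(\lambi{i}/\ns))$ telescope against the monotone sequence $\log(\lambi{i}/\ns)$ under the partial-sum constraint, while the higher-order terms are controlled because at most $O(1)$ of the $\Delta_i$'s are nonzero per distinct value of $\lambi{i}$. The output is a direct bound $|\empents{\lamb}-\empents{\lamb'}|\le 15\log\ns/\ns$ without ever controlling $\|\lamb-\lamb'\|_1$. This cancellation argument is the piece your plan is missing.

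\textbf{Bias.} Your decomposition $\entmst-\expectation{\empents{\lamb}}=(H(\eigv)-\expectation{H(\hat p)})+(\expectation{H(\hat p)}-\expectation{\empents{\lamb}})$ via Schur-concavity is a legitimate alternative, but the second piece is not obviously easier than the original bias: coordinate-wise concentration of $\lambi{i}/\ns$ around $\eigvi{i}$ says nothing directly about the gap between $\lamb/\ns$ and $\hat p$. The paper instead writes
\[
\bigl|\entmst-\expectation{\empents{\lamb}}\bigr|\le \Bigl|\sum_i(\eigvi{i}-\expectation{\eighati{i}})\log\tfrac1{\eigvi{i}}\Bigr|+\expectation{\chisq{\hateigv}{\eigv}},
\]
bounds the $\chi^2$ term by $\dims^2/\ns$ via \cite[Thm.~4.7]{OW17}, and handles the linear term by inserting the OW bound $|\eigvi{i}-\expectation{\eighati{i}}|\le 2\sqrt{\dims\eigvi{i}/\ns}$ together with the exact constraint $\sum_i(\eigvi{i}-\expectation{\eighati{i}})=0$ into a small optimization problem (their \textbf{P1}/\textbf{P2}), yielding $O(\dims/\sqrt{\ns})$ with no $\log\dims$ loss. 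Your diagnosis of the obstacle — that a blunt TV-Fannes bound leaves a spurious logarithm, and that the normalization constraint is what removes it — is correct, and this optimization is precisely where it gets used.
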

\begin{proof}
By Markov's Inequality on Theorem~\ref{thm:von-neumann-empirical}, there is a constant $C$ such that with probability at least $0.9$, 
\[
\absv{\Paren{\empent-\entmst}}< C \sqrt{{\frac{\dims^4}{\ns^2}+\frac{\dims^2}{\ns}+\frac{\log^2\ns}{\ns}}}
< C \Paren{{{\frac{\dims^2}{\ns}}}+{\frac{\dims}{\sqrt{\ns}}}+{\frac{\log\ns}{\sqrt\ns}}}.
\]
Bounding each term to at most $\eps/3C$ gives the sample complexity bound. 
\end{proof}

\begin{proof}[Proof of Theorem~\ref{thm:von-neumann-empirical}]
	
\end{proof}

For an estimator $\hat X$ of a parameter $x$, the mean-squared error can be decomposed as
\begin{align}
\expectation{(x-\hat X)^2} = \expectation{\Paren{x-\expectation{\hat X}}^2}+\expectation{\Paren{\hat X-\expectation{\hat X}}^2},\nonumber
\end{align}
where the first term is the squared bias, and the second term is the variance. In particular, 
\begin{align}
\expectation{\Paren{\empents{\lamb}-\entmst}^2} = \Paren{\entmst-\expectation{\empents{\lamb}}}^2+\Var\Paren{\empents{\lamb}}.\label{eqn:bias-var}
\end{align}

The theorem follows by plugging the following two bounds on the bias and variance into~\eqref{eqn:bias-var}. 
\begin{lemma}
\label{lem:bias-von-neumann}
\[
\absv{\entmst -\expectation{ \empents{\lamb}}} \le \frac{\dims^2}{n} + 9\frac{\dims}{\sqrt \ns}.
\]
\end{lemma}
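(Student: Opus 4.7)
The plan is to split the bias through the pointwise mean distribution $p^*_i := \EE[\eighati{i}] = \EE[\lambi{i}]/n$, writing
\[
\entmst - \EE[\empents{\lamb}] \;=\; \bigl(\entmst - H(p^*)\bigr) + \bigl(H(p^*) - \EE[H(\hateigv)]\bigr) \;=:\; T_1 + T_2.
\]
Both pieces are nonnegative. For $T_2$, this is Jensen's inequality applied to the concave Shannon entropy. For $T_1$, the RSK / longest non-decreasing subsequence description of weak Schur sampling together with Lemma~\ref{lem:emp-eyd} yields $\sum_{i\le k}\lambi{i}/n \ge \sum_{i\le k}\hat p_{(i)}$ for every $k$; taking expectations shows that $p^*$ majorizes $\eigv$, and Schur-concavity of Shannon entropy then gives $H(p^*)\le \entmst$. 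Thus the bias is a sum of two nonnegative terms that I can bound separately.

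For $T_1$, I would set up and analyze the optimization
\[
\max\bigl\{\, H(\eigv) - H(q)\;:\; q\succeq \eigv,\ \textstyle\sum_i q_i = 1,\ q_i\ge 0 \,\bigr\},
\]
augmented by a quantitative closeness constraint $\|q-\eigv\|_1 \le C d/\sqrt n$ coming from known $\ell_1$-convergence rates of the EYD spectrum. The objective is convex in $q$ and the feasible region polyhedral, so the maximizer is an extreme point of the majorization polytope; identifying this vertex and then expanding $-x\log x$ around each $\eigvi{i}$ (a local rather than global Lipschitz estimate) yields the $9d/\sqrt n$ contribution without the $\log d$ slack that a direct Fannes-type bound would introduce. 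The $d^2/n$ correction then arises as a discrete ``rounding'' term reflecting that $\lambi{i}$ is integer-valued, so the achievable $p^*$ cannot exactly equal $\eigv$.

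For $T_2$, I would use the identity $T_2 = \EE[D(\hateigv\,\|\,p^*)]$ together with the observation that each coordinate $\lambi{i}$ is a bounded-difference function of the underlying i.i.d.\ sample $X_1,\ldots,X_n$ with constants at most $1$; by Lemma~\ref{lem:bdd-diff-var} this gives $\Var(\eighati{i})=O(1/n)$, which combined with a localized second-order expansion of $-x\log x$ around $p^*_i$ controls $T_2$ within the claimed budget. The main obstacle is the optimization in $T_1$: the majorization polytope has many facets and Shannon entropy is only Schur-concave (and flat along permutation orbits), so a naive Lagrangian relaxation loses a spurious $\log d$ factor; the key trick is to exploit the closeness constraint to replace global Lipschitz estimates of $-x\log x$ by local ones whose moduli are controlled by the coordinates of $\eigv$ themselves, which is what enables the stated bound with no $\log d$ factor and with a small explicit constant in front of $d/\sqrt n$.
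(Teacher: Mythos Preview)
Your decomposition $T_1+T_2$ with $p^*_i:=\EE[\eighati{i}]$ is sound, and the nonnegativity arguments (Schur--concavity for $T_1$, Jensen for $T_2$) are correct. The route, however, differs from the paper's, and the sketched bounds contain a genuine gap.

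\textbf{The gap in $T_1$.} You propose to bound $H(\eigv)-H(q)$ over $q\succeq\eigv$ subject only to an aggregate constraint $\|q-\eigv\|_1\le C\dims/\sqrt{\ns}$. This constraint is too weak to avoid a $\log\dims$ factor. Take $\eigv=(1/2,\,1/(2(\dims-1)),\ldots,1/(2(\dims-1)))$ and let $q$ move mass $\delta$ from the last coordinate to the first; then $q\succeq\eigv$, $\|q-\eigv\|_1=2\delta$, and the entropy drop is $\Theta(\delta\log\dims)$ because the local slope of $-x\log x$ at the small coordinate is $\Theta(\log\dims)$. Your appeal to ``local Lipschitz constants controlled by the coordinates of $\eigv$'' does not help: the $\ell_1$ budget can be spent entirely on a tiny coordinate, where the local modulus \emph{is} $\log\dims$. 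What actually prevents this is the \emph{coordinate-wise} estimate $|\eigvi{i}-p^*_i|\le 2\sqrt{\dims\eigvi{i}/\ns}$ from~\cite{OW17} (Lemma~\ref{lem:conc-emp}), which forces small coordinates to receive small perturbations. The paper's proof uses precisely this: writing $\eigvi{i}-p^*_i=c_i\sqrt{\dims\eigvi{i}/\ns}$ with $|c_i|\le2$ and $\sum_i c_i\sqrt{\eigvi{i}}=0$, it solves the resulting constrained optimization (Lemma~\ref{lem:opt-problem}) to get the $9\dims/\sqrt{\ns}$ term with no $\log\dims$.

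\textbf{The source of $\dims^2/\ns$.} In the paper this term is not an integer-rounding artifact; it is $\EE\bigl[\sum_i(\eighati{i}-\eigvi{i})^2/\eigvi{i}\bigr]\le \dims^2/\ns$, a $\chi^2$ bound from~\cite{OW17}. Your $T_2=\EE[D(\hateigv\|p^*)]$ is the analogous quantity with $p^*$ in the denominator, and controlling it via the bounded-difference variance bound $\Var(\eighati{i})\le 1/\ns$ gives $\sum_i \Var(\eighati{i})/p^*_i$, which is not obviously $O(\dims^2/\ns)$ when some $p^*_i$ are small.

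\textbf{Summary of the paper's decomposition.} Rather than splitting through $H(p^*)$, the paper writes
\[
\bigl|\entmst-\EE[\empents{\lamb}]\bigr|\le\Bigl|\sum_i(\eigvi{i}-p^*_i)\log\tfrac{1}{\eigvi{i}}\Bigr|+\EE\Bigl[\sum_i\tfrac{(\eighati{i}-\eigvi{i})^2}{\eigvi{i}}\Bigr],
\]
and bounds the two pieces by $9\dims/\sqrt{\ns}$ and $\dims^2/\ns$ respectively, both via results of~\cite{OW17}. To repair your approach you would need, at minimum, to import the coordinate-wise mean bound of Lemma~\ref{lem:conc-emp} into the $T_1$ optimization.
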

%\todo{Aaron changed the constant in the previous lemma to 9. 3 is too small for what we prove}

\begin{lemma}
\label{variance-von-neumann}
\[
\Var\Paren{\empents{\lamb}} = O\Paren{\frac{\log^2 \ns}{\ns}}.
\]
\end{lemma}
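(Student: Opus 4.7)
The plan is to apply the bounded-differences variance inequality (Lemma~\ref{lem:bdd-diff-var}) to the estimator, viewed as a deterministic function $F(X^\ns)=\empents{\lamb(X^\ns)}$ of the i.i.d.\ samples $X_1,\ldots,X_\ns$ drawn from the spectrum distribution $\eigv$; recall from Section~\ref{sec:wss} that $\lamb$ is obtained from $X^\ns$ via the longest-disjoint-non-decreasing-subsequences (RSK) procedure, so this view is legitimate. It suffices to show a per-coordinate stability bound $|F(X^\ns)-F(X'^\ns)|\le c = O(\log\ns/\ns)$ whenever $X^\ns,X'^\ns$ differ in exactly one position; the conclusion will then follow from $\Var(F)\le \tfrac14\ns c^2 = O(\log^2 \ns/\ns)$.

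To bound $c$, I first observe that for each $k$ the partial sum $f_k(X^\ns)=\sum_{j\le k}\lambi_j$, which by Greene's characterization equals the maximum total length of $k$ disjoint non-decreasing subsequences of $X^\ns$, changes by at most $1$ under a single substitution: any $k$ disjoint optimal subsequences for $X^\ns$ can be made valid for $X'^\ns$ by dropping the at most one contribution from the perturbed position, and vice versa. Combined with the fact that $\lamb,\lamb'$ are partitions of $\ns$ and the RSK bijective property that a single row insertion/deletion alters the Young diagram by exactly one addable/removable corner, this yields $\sum_j|\lambi_j-\lambi_j'|\le 2$, i.e., the two diagrams differ by moving at most one box.

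I then invoke the standard $\ell_1$-Lipschitz estimate for Shannon entropy: for distributions $p,q$ supported on at most $k$ atoms with $\|p-q\|_1=\eta\le 1/2$, $|H(p)-H(q)|\le \eta\log(k/\eta)+h_2(\eta)$. Applied with $p=\lamb/\ns$, $q=\lamb'/\ns$, $\eta\le 2/\ns$, and effective support size $k\le \ns$, this gives $|\empents{\lamb}-\empents{\lamb'}|=O(\log\ns/\ns)$, uniformly over the substituted position and the replacement value. Feeding $c=O(\log\ns/\ns)$ into Lemma~\ref{lem:bdd-diff-var} closes the argument.

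The main obstacle is the second step, namely the bound $\sum_j|\lambi_j-\lambi_j'|\le 2$. Partial-sum stability alone is \emph{not} enough: one can exhibit partitions of the same integer whose partial row sums agree up to $\pm 1$ at every level yet whose $\ell_1$ distance is $4$ (for instance $\lamb=(4,2,2)$ versus $\lamb'=(3,3,1,1)$), so the argument must genuinely use the combinatorial bijection underlying RSK. The cleanest route is to couple $\lamb(X^\ns)$ with $\lamb(X'^\ns)$ via the common length-$(\ns-1)$ sequence obtained by deleting the perturbed coordinate, and to apply the one-corner-growth property of row insertion separately to the two different insertions; this bounds the symmetric difference of the two Young diagrams by $2$, as needed.
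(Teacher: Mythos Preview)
Your high-level plan coincides with the paper's: view $\empents{\lamb}$ as a function of the i.i.d.\ sample $X^\ns$ through RSK and feed a per-coordinate bounded-difference constant $c=O(\log\ns/\ns)$ into Lemma~\ref{lem:bdd-diff-var}. The divergence is in how you establish that constant. The paper (Lemma~\ref{lem:lipschitz}) uses only the cumulative bound $\bigl|\sum_{j\le k}(\lambi_j-\lambi_j')\bigr|\le 1$ and then carries out a Taylor-type expansion of $x\log x$ to control $|\empents{\lamb}-\empents{\lamb'}|$ directly; it never asserts $\sum_j|\lambi_j-\lambi_j'|\le 2$.

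Your route instead hinges on the $\ell_1$ bound $\sum_j|\lambi_j-\lambi_j'|\le 2$, which you justify by coupling through the common deleted sequence $Y$ and invoking ``one-corner growth of row insertion.'' This justification is incorrect. Row insertion adds one corner box when an element is \emph{appended} to the word, but your coupling requires that inserting an element at an \emph{arbitrary} position $i$ (and then re-running RSK on the longer word) also adds exactly one corner. That is false. Take $w=(2,4,2,1,3)$: RSK gives $\lamb(w)=(3,1,1)$. Deleting position $3$ yields $Y=(2,4,1,3)$ with $\lamb(Y)=(2,2)$. Here $\lamb(Y)\not\subseteq\lamb(w)$ (the box $(2,2)$ lies in $\lamb(Y)$ but not in $\lamb(w)$), and $\sum_j|\lambi_j(w)-\lambi_j(Y)|=3$, not $1$. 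Thus neither ``$\lamb(Y)\subseteq\lamb(X^\ns)$'' nor ``$\lamb(X^\ns)=\lamb(Y)+\text{one box}$'' holds in general, and the triangle-through-$Y$ argument collapses.

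You correctly flagged that partial-sum stability alone cannot yield $\ell_1\le 2$; the example above shows that your proposed combinatorial fix does not close that gap either. Whether the substitution bound $\sum_j|\lambi_j-\lambi_j'|\le 2$ is nonetheless true is a separate (and nontrivial) question, but you have not proved it. The paper's approach is safer precisely because it never needs this $\ell_1$ control: it works directly from the cumulative bound, at the cost of a more delicate case analysis of the entropy increment.
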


\subsubsection{Bounding the Bias (Proof of Lemma~\ref{lem:bias-von-neumann})}

The bias of the empirical estimate can be bounded as:
\begin{align}
\absv{ \entmst-\expectation{\empents{\lamb} } }
= & \absv{\expectation{\sum_{i=1}^{\dims} \Paren{ \eigvi{i}\log \frac1{\eigvi{i}} - \eighati{i}\log\frac{1}{\eighati{i}}}}}\nonumber\\
=& \absv{\expectation{\sum_{i=1}^{\dims} \Paren{ \eigvi{i}\log\frac{1}{\eigvi{i}} - \eighati{i}\log \frac1{\eigvi{i}}
+ \eighati{i}\log\frac{1}{\eigvi{i}} - \eighati{i}\log \frac1{\eighati{i}}}}}\nonumber\\
\le & \absv{\expectation{\sum_{i=1}^{\dims} (\eigvi{i} - \eighati{i})\log\frac{1}{\eigvi{i}} }}
+ \absv{\expectation{\sum_{i=1}^{\dims}\Paren{ \eighati{i}\log\frac{\eighati{i}}{\eigvi{i}}}}}\nonumber\\
\le & \absv{\sum_{i=1}^{\dims} (\eigvi{i} - \expectation{\eighati{i}})\log\frac{1}{\eigvi{i}} }
+ \expectation{\sum_{i=1}^{\dims}\frac{(\eighati{i}-\eigvi{i})^2}{\eigvi{i}}}.\label{eqn:vonneumann-ub}
\end{align}

The second term is the expected $\chi^2$-distance of the empirical distance and the underlying distribution. Theorem 4.7 of~\cite{OW17} states that
\[
\expectation{\sum_{i=1}^{\dims}\frac{(\eighati{i}-\eigvi{i})^2}{\eigvi{i}}}\le \frac{\dims^2}{\ns},
\]
which bounds the second term of~\eqref{eqn:vonneumann-ub}. We now bound the first term. We again use the following result from~\cite{OW17} that bounds the expected value of $\eighati{i}$ around $\eigvi{i}$. 
\begin{lemma}
[Theorem 1.4 of~\cite{OW17}]
\[
\absv{{\eigvi i-\expectation{\Paren{\eighati{i}}}}}\le 2\sqrt{\frac{\min\left\{1, \eigvi i\dims\right\}}{\ns}}.
\]
\label{lem:conc-emp}
\end{lemma}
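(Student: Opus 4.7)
The statement is cited as Theorem~1.4 of O'Donnell--Wright~\cite{OW17}, so my plan is to invoke it as a black box. Reproving it from scratch would essentially require importing the full Schur--Weyl asymptotic machinery of~\cite{OW17}, which would be disproportionate given that this bound is only a stepping stone in the bias computation of Lemma~\ref{lem:bias-von-neumann}.

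If I were forced to derive the bound independently, my starting point would be the identities produced by Lemma~\ref{lem:power-sums}: applied with $\mu=(1)$ and $\mu=(2)$ these yield exact expressions for $\expectation{\sum_i \lambi{i}}$ and $\expectation{\sum_i \lambi{i}^2}$ in terms of $\Mapeigv{1}=1$ and $\Mapeigv{2}$, and with a little more bookkeeping in Kerov's algebra one can isolate second-moment estimates on the individual row deviations $\lambi{i}/\ns-\eigvi{i}$. Combined with Lemma~\ref{lem:emp-eyd}, which says that $\lamb/\ns$ majorizes the sorted empirical distribution, this already tells us that the bias should be nonnegative at the top of the spectrum and nonpositive at the bottom. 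A Bernstein-type split would then produce the two regimes of the stated bound: when $\eigvi{i}\dims \ge 1$, the classical variance term $\eigvi{i}(1-\eigvi{i})/\ns \le 1/\ns$ dominates and yields the $2/\sqrt{\ns}$ bound; when $\eigvi{i}\dims<1$, a combinatorial correction of order $\sqrt{\eigvi{i}\dims/\ns}$ emerges, reflecting that in the RSK procedure the mass associated with $\eigvi{i}$ can be redistributed across $\Theta(\dims)$ rows of the Young diagram.

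The main obstacle in such a from-scratch proof is sharpness in the second regime: the Kerov-algebra bounds that drive our integer-$\alpha$ variance analysis in Section~\ref{sec:int-alpha} are not tight enough to recover the $\sqrt{\eigvi{i}\dims/\ns}$ scaling with the correct constant, and one really needs the finer coupling and martingale arguments developed in~\cite{OW17}. Since nothing downstream in our paper hinges on reproving Theorem~1.4, I would simply import it verbatim and then plug the resulting per-coordinate bias bound into the first term of~\eqref{eqn:vonneumann-ub}, splitting that sum according to whether $\eigvi{i}\ge 1/\dims$ to obtain the $O(\dims/\sqrt{\ns})$ contribution that appears in Lemma~\ref{lem:bias-von-neumann}.
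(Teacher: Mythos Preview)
Your proposal is correct and matches the paper exactly: the paper also imports Theorem~1.4 of~\cite{OW17} as a black box without reproving it, and then uses the resulting per-coordinate bias bound inside the first term of~\eqref{eqn:vonneumann-ub}. There is nothing to add here.
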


Let $c_1, \ldots, c_\dims$ be the constants such that
$\eigvi{i}-\expectation{\eighati{i}} = c_i \sqrt{\frac{\dims\eigvi{i}}{\ns}},$
then, by Lemma~\ref{lem:conc-emp}, $\absv{c_i}\le 2$. 
Since $\sum_{i=1}^{\dims}\eigvi{i}={\sum_{i=1}^{\dims}\eighati{i}}=1$, 
$$\sqrt{\frac \dims\ns}\Paren{\sum_{i=1}^{\dims} c_i\sqrt{\eigvi{i}}} = \sum_{i=1}^{\dims}\Paren{\eigvi{i}-\expectation{\eighati{i}}} =0,$$
implying that $\sum_{i=1}^{\dims} c_i\sqrt{\eigvi{i}}=0$. 
Therefore, 
\begin{align}
	\sum_{i=1}^{\dims} (\eigvi{i} - \expectation{\eighati{i}})\log\frac{1}{\eigvi{i}}
=  \sqrt{\frac{\dims}{\ns}}\cdot \Paren{\sum_{i=1}^{\dims} c_i\sqrt \eigvi{i}\log\frac{1}{\eigvi{i}}}.\label{eqn:opt-problem}
\end{align}
Since $\sqrt{\frac{\dims}{\ns}}$ is a constant, to bound the first term of~\eqref{eqn:vonneumann-ub} it will suffice to upper bound the following maximization problem. 
\begin{align}
{\bf P1:}\qquad&\text{maximize}\ \ \absv{\sum_{i=1}^{\dims} c_i\sqrt \eigvi{i}\log\frac{1}{\eigvi{i}}}\nonumber\\
& \text{subject to } \absv{c_i}\le 2, \text{ and }\ \sum_{i=1}^{\dims} c_i\sqrt{\eigvi{i}}=0.\nonumber
\end{align} 
By the triangle inequality, 
\begin{align}
\absv{\sum_{i=1}^{\dims} c_i\sqrt \eigvi{i}\log\frac{1}{\eigvi{i}}}
\le \absv{\sum_{i=1}^{\dims} c_i\sqrt \eigvi{i}\log\frac{1}{c_i^2\eigvi{i}}} + \absv{\sum_{i=1}^{\dims} c_i\sqrt \eigvi{i}\log{c_i^2}}\label{eqn:two-parts-bias}
\end{align}
We bound the terms individually. We first consider the second term. Since $|c_i|\le 2$, the largest value of $|c_i \log c_i^2|$ is $2\log 4$. Therefore, 
\[
\absv{\sum_{i=1}^{\dims} c_i\sqrt \eigvi{i}\log{c_i^2}}
\le 2\log 4\cdot \Paren{\sum_{i=1}^{\dims} \sqrt \eigvi{i}} \le (2\log 4) \cdot\sqrt \dims,
\]
where we use that $\sum_{i=1}^{\dims} \sqrt \eigvi{i}<\sqrt{\dims}$ by concavity of square root.

Let $x_i=c_i\sqrt \eigvi{i} $, then $\sum_i x_i =0$, and since $\sum \eigvi{i} =1$, $\sum_i x_i^2 \le 4$. Therefore, to bound the first term of~\eqref{eqn:two-parts-bias}, it will suffice to solve {\bf P2} below. 
\begin{align}
{\bf P2:}\qquad &  \text{maximize}\ \ \sum_{i=1}^\dims x_i \log \frac1{x_i^2}\\
&  \text{subject to}\ \  \sum_{i=1}^\dims x_i = 0, \text{ and }\ \sum_{i=1}^\dims x_i^2 \le 4.
\end{align}

\noindent We show in Appendix~\ref{sec:app-bias} that 
\begin{lemma}
\label{lem:opt-problem}
The maximum value of the optimization problem {\bf P2} is at most $\frac{16}e\sqrt \dims.$
\end{lemma}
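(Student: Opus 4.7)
The plan is to reduce the $\dims$-variable optimization to a two-parameter problem via Lagrangian/KKT analysis, and then bound the resulting univariate expression. The KKT stationarity condition, with multipliers $\lambda \ge 0$ for $\sum x_i^2 \le 4$ and $\mu$ for $\sum x_i = 0$, reads $\log|x_i| = -1 - \mu/2 - \lambda x_i$. On the positive branch, the map $x \mapsto x e^{\lambda x}$ is strictly increasing on $\RR_{>0}$ (since $\lambda \ge 0$), forcing all positive $x_i$ to equal a single value $a > 0$ (with some multiplicity $k$). On the negative branch, the condition a priori permits up to two distinct values $-b_1, -b_2 < 0$, but a merging-and-rescaling argument --- exploiting that uniform scaling of all $x_i$ by $c$ scales the objective by $c$ and $\sum x_i^2$ by $c^2$ (the latter using $\sum x_i = 0$) --- reduces the optimum to the case where all negative entries equal a common $-b < 0$ of multiplicity $\dims - k$.

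With this reduction, $\sum x_i = 0$ gives $ka = (\dims - k) b \eqdef P$, and $\sum x_i^2 \le 4$ becomes $P^2 \dims / \bigl(k(\dims-k)\bigr) \le 4$, i.e.\ $P \le 2\sqrt{k(\dims-k)/\dims}$. A direct computation gives
\[
\sum_i x_i \log\frac{1}{x_i^2} \;=\; -2ka\log a + 2(\dims-k) b\log b \;=\; 2P\log(b/a) \;=\; 2P\log\!\bigl(k/(\dims-k)\bigr).
\]
Substituting $t = (\dims-k)/\dims \in (0,1)$, the objective is at most $4\sqrt{\dims}\cdot \sqrt{t(1-t)}\,\log((1-t)/t)$.

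To finish, I would use $\log(1-t) \le 0$ and $\sqrt{1-t} \le 1$ for $t \in (0, 1)$ to obtain $\sqrt{t(1-t)}\log((1-t)/t) \le \sqrt{t}\log(1/t)$, and the elementary one-variable fact that $\sqrt{t}\log(1/t)$ is maximized over $(0,1)$ at $t = e^{-2}$ with value $2/e$. This yields an objective bound of $8\sqrt{\dims}/e$, which is a fortiori at most $16\sqrt{\dims}/e$. The main obstacle is the first step: the KKT conditions alone permit two distinct negative values, so one needs the rescaling/merging argument (or an equivalent second-order/symmetrization check) to rule out the three-value configuration and justify the reduction to the clean two-value problem. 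The remaining calculation is then a routine one-variable calculus exercise.
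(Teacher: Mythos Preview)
Your KKT setup and the reduction to at most one positive value and at most two negative values is correct and matches the paper's first step. The gap you correctly flag---collapsing the two possible negative values into one via ``merging and rescaling''---is real, and the argument you sketch does not close it: merging the two negative levels to their weighted average preserves $\sum x_i=0$ and decreases $\sum x_i^2$, but it also \emph{decreases} the objective on the negative branch (since $b\mapsto b\log b$ is convex on $(0,\infty)$), and there is no reason the subsequent rescaling by $c>1$ recovers more than was lost.

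The paper avoids this entirely by \emph{not} reducing to two values. With one positive level $a$ (multiplicity $k$) and up to two negative levels $-b_1,-b_2$ (multiplicities $m_1,m_2$), it uses $ka=m_1b_1+m_2b_2$ to rewrite the objective as
\[
2m_1 b_1\log\frac{b_1}{a}+2m_2 b_2\log\frac{b_2}{a},
\]
and bounds each summand separately: from $m_j b_j\le ka$ one gets $b_j/a\le k/m_j$, from $m_j b_j^2\le 4$ one gets $\sqrt{m_j}\,b_j\le 2$, and then
\[
2m_j b_j\log\frac{b_j}{a}\;\le\;2\sqrt{k}\cdot\Bigl(\sqrt{\tfrac{m_j}{k}}\,\log\tfrac{k}{m_j}\Bigr)\cdot\sqrt{m_j}\,b_j\;\le\;2\sqrt{\dims}\cdot\frac{2}{e}\cdot 2\;=\;\frac{8}{e}\sqrt{\dims},
\]
using the same one-variable fact $\sqrt{u}\log(1/u)\le 2/e$ that you invoke. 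Summing the two terms gives $\frac{16}{e}\sqrt{\dims}$. Your two-value computation is exactly this argument in the special case $m_2=0$, which is why you arrive at $\frac{8}{e}\sqrt{\dims}$; the factor of two in the stated bound is there precisely to absorb the second negative level rather than to eliminate it.
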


Plugging this in~\eqref{eqn:two-parts-bias}, the maximum of {\bf P1} is at most $(16/e+2\log 4)\sqrt \dims$. Therefore,  
\[
\absv{\sum_{i=1}^{\dims} (\eigvi{i} - \expectation{\eighati{i}})\log\frac{1}{\eigvi{i}} } \le \Paren{\frac {16}e+2\log 4}\frac{\dims}{\sqrt\ns} \le \frac{9\dims}{\sqrt{\ns}}.
\]

Plugging this in turn into~\eqref{eqn:vonneumann-ub} yields
\[
\absv{ \entmst-\expectation{\empents{\lamb}}}\le \frac{\dims^2}{\ns}+\frac{9\dims}{\sqrt\ns},
\]
thus bounding the bias.

\subsubsection{Proof of Lemma~\ref{variance-von-neumann}.}
We will use the bounded difference variance bound (Lemma~\ref{lem:bdd-diff-var}). In particular, we consider the non-decreasing subsequence interpretation of weak Schur sampling. Let $X^\ns\in[\dims]^\ns$,  and let $\lamb$ be the shape of its young tableaux through the RSK correspondence. Let $\lamb'$ be the shape of the Young tableaux corresponding to a sequence with Hamming distance at most one from $X^n$. Let $\ent{\lamb}$, and $\ent{\lamb'}$ denote their respective empirical von Neumann entropy. The next lemma states that changing one of the $n$ symbols has \emph{small effect} on the empirical entropy. 

\begin{lemma} Let $\lamb$, and $\lamb'$ be two Young tableaux shapes obtained from the LIS of two length-$\ns$ samples that differ in at most one symbol. 
If $n \ge 27$, then
\[
\absv{\empents{\lamb}-\empents{\lamb'}} \le \frac{15\log \ns}{\ns}.
\]
\label{lem:lipschitz}
\end{lemma}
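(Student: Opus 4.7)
My plan is to bound the empirical entropy difference by exploiting partial-sum control on the two Young diagrams, rather than attempting a termwise Lipschitz estimate. First I will show that since $X^{\ns}$ and $X'^{\ns}$ differ in at most one coordinate, $|\sum_{i=1}^k \lambda_i - \sum_{i=1}^k \lambda'_i| \le 1$ for every $k$, where $\lambda_i$ and $\lambda'_i$ denote the parts of $\lamb$ and $\lamb'$ respectively. This is immediate from the max-disjoint-non-decreasing-subsequence description of these partial sums given in Section~\ref{sec:qff}: any optimal decomposition of $X^{\ns}$ into $k$ non-decreasing subsequences becomes a valid decomposition of $X'^{\ns}$ after deleting the differing position from whichever subsequence contains it, losing at most one unit of total length, and symmetrically in the reverse direction. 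Greene's theorem applied to the dual statement about strictly decreasing subsequences then yields the analogous bound $|\sum_{j=1}^J \mu_j - \sum_{j=1}^J \mu'_j| \le 1$ on the conjugate partitions $\mu = \lamb^*$, $\mu' = (\lamb')^*$, and in particular $|\mu_j - \mu'_j| \le 2$ pointwise.

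Next I will use the layer-cake identity
\[
\empents{\lamb} \;=\; \sum_k F\!\left(\frac{\lambda_k}{\ns}\right) \;=\; \int_0^1 F'(p)\, M(p)\, dp, \qquad F(p) = -p \log p,
\]
where $M(p) \ed |\{k : \lambda_k/\ns > p\}|$ and $M'$ is defined analogously from $\lamb'$. Setting $\Psi(p) \ed \int_0^p (M(q) - M'(q))\, dq$, we have $\Psi(1) = 0$ (both diagrams have total area $\ns$) and $F'(p)\Psi(p) \to 0$ as $p \to 0^+$ (since $M, M'$ are both constant on $(0, 1/\ns)$, giving $|\Psi(p)| \le 2p$ there). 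Integration by parts against $F''(p) = -1/p$ then yields
\[
\empents{\lamb} - \empents{\lamb'} \;=\; \int_0^1 \frac{\Psi(p)}{p}\, dp.
\]

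At the lattice points $p = j/\ns$ a direct calculation gives $\Psi(j/\ns) = (\sum_{i \le j}(\mu_i - \mu'_i))/\ns$, so the conjugate partial-sum bound gives $|\Psi(j/\ns)| \le 1/\ns$; since $\Psi$ is piecewise linear between consecutive lattice points with slope $\mu_{j+1} - \mu'_{j+1}$ of magnitude at most $2$, we conclude $|\Psi(p)| \le 3/\ns$ throughout $[1/\ns, 1]$. Splitting the integral at $p = 1/\ns$, the head contributes at most $\int_0^{1/\ns} |\Psi(p)/p|\,dp \le 2/\ns$ (using $|\Psi(p)/p| \le 2$), and the tail contributes at most $(3/\ns)\int_{1/\ns}^1 dp/p = 3(\log\ns)/\ns$. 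These sum to well under $15(\log\ns)/\ns$ once $\ns \ge 27$.

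The main obstacle is that a naive termwise Lipschitz bound $|F(\lambda_k/\ns) - F(\lambda'_k/\ns)| = O((\log\ns)/\ns)$, summed over the $\Theta(\dims)$ indices at which $\lambda_k$ and $\lambda'_k$ can differ, yields only the vacuous estimate $O(\dims (\log\ns)/\ns)$. The integration-by-parts step is the key maneuver: it replaces the pointwise perturbation estimate by an integral against the \emph{cumulative} partial-sum difference $\Psi$, which is uniformly $O(1/\ns)$ rather than $O(1)$ per coordinate, thereby absorbing the dimension and recovering the $(\log\ns)/\ns$ rate.
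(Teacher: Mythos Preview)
Your proof is correct and takes a genuinely different route from the paper's.

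The paper proceeds by a termwise Taylor expansion of $x\log x$ around each $\lambi{i}/\ns$, writes $\Delta_i = \lamb'_i - \lambi{i}$, and then breaks the resulting sum into several pieces that are handled by a sequence of combinatorial claims: an Abel-summation-type bound on $\sum \Delta_i \log(\lambi{i}/\ns)$ using the row partial-sum constraint $|\sum_{i\le j}\Delta_i|\le 1$; a counting argument that at most four $\Delta_i$ can be nonzero among rows of any fixed length; a bound on the number of distinct row lengths by $\sqrt{2\ns}$; and a separate treatment of the rows with $\lambi{i}=1$ (which is where the hypothesis $\ns\ge 27$ enters). Only the row partial-sum bound is used.

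Your argument instead invokes the dual Greene theorem to obtain the analogous partial-sum bound on the \emph{conjugate} partitions, then passes to a layer-cake representation and integrates by parts so that the entropy difference becomes $\int_0^1 \Psi(p)/p\,dp$ with $\Psi$ controlled uniformly by the column partial sums. This is cleaner: it avoids all the case analysis, yields the sharper bound $2/\ns + 3(\log\ns)/\ns$ (in fact $|\Psi|\le 1/\ns$ on $[1/\ns,1]$, since $\Psi$ is affine on each subinterval and hence $|\Psi|$ is maximized at the lattice endpoints, so you could get $2/\ns + (\log\ns)/\ns$), and the condition $\ns\ge 27$ becomes immaterial. The price is that you need the column version of the partial-sum bound, which the paper does not state but which follows from the same one-symbol-deletion argument applied to strictly decreasing subsequences.
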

\noindent This lemma is proved in Appendix~\ref{sec:app-lipschitz-entropy}.

We invoke the bounded difference inequality (Lemma~\ref{lem:bdd-diff-var}) along with Lemma~\ref{lem:lipschitz}. The empirical entropy estimate changes by at most $15\log \ns/\ns$ when one symbol is changed. Therefore, the variance is at most
\[
\Var\Paren{\empents{\lamb}} \le\frac14\ns\cdot \Paren{\frac{15\log \ns}{\ns}}^2\le \frac{225\log^2 \ns}{4\ns}.
\]

\section{Non Integral $\alpha$}
\label{sec:non-int}
\subsection{$\alpha>1$}

We prove the following sample complexity bound for estimating $\rentprmmst$ for $\renprm>1$. 
\begin{theorem}
\label{thm:non-int-upper-large-alpha}
For $\renprm>1$, the empirical estimator of $\rentprmmst$ outputs a $\pm \eps$ estimate with  $O\Paren{{\frac{\dims^2}{\eps^2}}}$ copies of $\mst$ with probability at least 0.9.
\end{theorem}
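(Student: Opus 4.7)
The plan is to follow the template of the proof of Theorem~\ref{thm:von-neumann-empirical} for the von Neumann entropy. By the reduction preceding Theorem~\ref{thm:alpha-int}, it suffices to obtain a $(1\pm\tilde\epsilon)$ multiplicative estimator of $M_\alpha(\eta)$, with $\tilde\epsilon=\Theta(\epsilon)$ since $\alpha$ is a constant. I will analyze the empirical estimator $M_\alpha(\hat\eta)=\sum_i\hat\eta_i^\alpha$ via the bias/variance decomposition
\[
\mathbb{E}\!\left[(M_\alpha(\hat\eta)-M_\alpha(\eta))^2\right] = \bigl(\mathbb{E}[M_\alpha(\hat\eta)]-M_\alpha(\eta)\bigr)^2+\mathrm{Var}(M_\alpha(\hat\eta)),
\]
and conclude by Markov's inequality after normalizing by $M_\alpha(\eta)^2\ge d^{2-2\alpha}$ (Lemma~\ref{lem:bnd_moments}(i)).

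For the bias, I will expand $\hat\eta_i^\alpha$ around $\eta_i$ by a first-order Taylor expansion with Lagrange remainder, so that
\[
\mathbb{E}[\hat\eta_i^\alpha]-\eta_i^\alpha=\alpha\,\eta_i^{\alpha-1}\bigl(\mathbb{E}[\hat\eta_i]-\eta_i\bigr)+\tfrac{\alpha(\alpha-1)}{2}\,\mathbb{E}\!\left[\xi_i^{\alpha-2}(\hat\eta_i-\eta_i)^2\right].
\]
The first-order term is controlled by the Lemma~\ref{lem:conc-emp} bound $|\mathbb{E}[\hat\eta_i]-\eta_i|\le 2\sqrt{\min(1,\eta_i d)/n}$; after Cauchy--Schwarz and the moment comparison inequalities of Lemma~\ref{lem:bnd_moments}, its contribution summed over $i$ is at most $C_\alpha\sqrt{d/n}\cdot M_\alpha(\eta)$. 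For the remainder term I will use the $\chi^2$-type bound $\mathbb{E}\!\left[\sum_i(\hat\eta_i-\eta_i)^2/\eta_i\right]\le d^2/n$ (Theorem~4.7 of~\cite{OW17}), splitting the sum according to whether $\hat\eta_i\lessgtr\eta_i$ so that $\xi_i^{\alpha-2}\cdot\eta_i$ can be absorbed into $\eta_i^{\alpha-1}\le M_\alpha(\eta)/\eta_i^{1-\alpha}$ appropriately; this case split is needed because $x^{\alpha-2}$ behaves differently for $1<\alpha\le 2$ and $\alpha>2$. Both terms are $O(\tilde\epsilon)\cdot M_\alpha(\eta)$ once $n=\Omega(d^2/\epsilon^2)$.

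For the variance I apply the bounded-differences inequality (Lemma~\ref{lem:bdd-diff-var}) to $X^n\mapsto M_\alpha(\hat\eta)$, where $\hat\eta=\lambda/n$ and $\lambda$ is obtained from the non-decreasing subsequence (RSK) description of weak Schur sampling from Section~\ref{sec:wss}. The key combinatorial input is that altering one coordinate of $X^n$ modifies $\lambda$ by at most $O(1)$ boxes, so only $O(1)$ entries $\hat\eta_i$ change, each by at most $1/n$. Since $x\mapsto x^\alpha$ has derivative bounded by $\alpha$ on $[0,1]$, the refined Lipschitz estimate $|(\hat\eta_i)^\alpha-(\hat\eta_i\pm 1/n)^\alpha|\le\alpha(\hat\eta_i+1/n)^{\alpha-1}/n$ together with $\hat\eta_i\le 1$ yields a bounded-difference constant of $O(1/n)$ for $M_\alpha(\hat\eta)$. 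Hence $\mathrm{Var}(M_\alpha(\hat\eta))=O(1/n)$, so the relative variance is $O(d^{2\alpha-2}/n)$. I will sharpen this using the always-valid lower bound $M_\alpha(\hat\eta)\ge d^{1-\alpha}$ (power-mean inequality) to handle the log; equivalently, passing to $\log M_\alpha(\hat\eta)$ via the mean value theorem divides the bounded-difference constant by $M_\alpha(\hat\eta)$, so that the Lipschitz constant of $S_\alpha(\hat\eta)$ itself is $O(1/n)$, giving $\mathrm{Var}(S_\alpha(\hat\eta))=O(1/n)\le\tilde\epsilon^2$ for $n=\Omega(1/\epsilon^2)$.

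The hard part will be the bias analysis: cleanly bounding the second-order Taylor remainder requires a careful case split on the sign of $\hat\eta_i-\eta_i$ and on whether $\alpha\lessgtr 2$, and tying the resulting expressions back to $M_\alpha(\eta)$ via Lemma~\ref{lem:bnd_moments}. A secondary subtlety is justifying the ``$O(1)$ rows change under a single-coordinate perturbation'' property for RSK that underlies the variance Lipschitz constant; here I would adapt the argument used for Lemma~\ref{lem:lipschitz} in the von Neumann proof. Combining the $O(\tilde\epsilon^2\,M_\alpha(\eta)^2)$ bias-squared with the $O(\tilde\epsilon^2)$ variance of $S_\alpha(\hat\eta)$ and invoking Markov's inequality gives the claimed $\pm\epsilon$ estimate with probability at least $0.9$ using $n=O(d^2/\epsilon^2)$ copies.
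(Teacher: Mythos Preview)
Your plan differs structurally from the paper's: you split into bias-squared plus variance, whereas the paper bounds the first absolute moment $\mathbb{E}\bigl|M_\alpha(\hat\eta)-M_\alpha(\eta)\bigr|$ directly via Lemma~\ref{lem:diff-moments} and then applies Markov's inequality. Their route avoids both of the difficulties below.

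\medskip
\textbf{The RSK Lipschitz claim is false.} Changing one coordinate of $X^n$ does \emph{not} modify only $O(1)$ entries of $\lamb$. What \cite[Prop.~2.2]{OW17} gives (and what the proof of Lemma~\ref{lem:lipschitz} uses) is only that all partial sums satisfy $\bigl|\sum_{i\le j}(\lambi{i}-\lamb'_i)\bigr|\le 1$; arbitrarily many rows can change, each by at most $2$, and in fact up to $\Theta(\sqrt{n})$ rows can move. So your derivation of an $O(1/n)$ bounded-difference constant for $M_\alpha(\hat\eta)$ does not go through, and neither does the subsequent claim that $S_\alpha(\hat\eta)$ has Lipschitz constant $O(1/n)$: even granting your $O(1/n)$ for $M_\alpha$ and dividing by $M_\alpha(\hat\eta)\ge d^{1-\alpha}$ gives only $O(d^{\alpha-1}/n)$. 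A correct argument combines the partial-sum constraint with Abel summation and the bound $M_\alpha(\hat\eta)\ge(\lambi{1}/n)^\alpha$ to obtain a Lipschitz constant $O(d/n)$ for $S_\alpha(\hat\eta)$; that still yields $\mathrm{Var}(S_\alpha(\hat\eta))=O(d^2/n)$, which suffices, but it is not the argument you wrote.

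\medskip
\textbf{The $\chi^2$ bound does not control the Taylor remainder.} For $\alpha>2$ and $\hat\eta_i>\eta_i$ one has $\xi_i^{\alpha-2}\le\hat\eta_i^{\alpha-2}$, a \emph{random} quantity that the deterministic weight $\eta_i^{\alpha-1}$ cannot absorb; crudely bounding $\hat\eta_i^{\alpha-2}\le 1$ and using $\mathbb{E}\sum_i(\hat\eta_i-\eta_i)^2/\eta_i\le d^2/n$ forces $n\gtrsim d^{1+\alpha}/\eps$ against the worst case $M_\alpha(\eta)=d^{1-\alpha}$, which is worse than $d^2/\eps^2$. For $1<\alpha<2$ the Lagrange remainder is ill-behaved when $\hat\eta_i\ll\eta_i$ (or $\eta_i=0$), and the right substitute is the H\"older-type bound $|x^\alpha-y^\alpha-\alpha y^{\alpha-1}(x-y)|\le C_\alpha|x-y|^\alpha$; but then you need $\sum_i\mathbb{E}|\hat\eta_i-\eta_i|^\alpha$, which requires exactly the sub-Gaussian moment bounds on $\lambi{i}-\eta_i n$ that drive Lemma~\ref{lem:diff-moments}. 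At that point you are reproducing the paper's term-by-term first-moment estimate inside a longer argument.
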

\begin{proof}
Recall that $\ns = \sum \lambi{i}$. Define 
\[
\Malamb{\renprm} \ed \sum_{i=1}^{\dims}\Paren{\frac{\lambi{i}}{\ns}}^{\renprm}.
\]
We show that when $\ns$ is large enough, $\Malamb{\renprm}$ is within a small multiplicative factor of $\Mapeigv{\renprm}$. The following result shows each term $(\lambi{i}/\ns)^\renprm$ concentrates around $\eigvi{i}^\renprm$. 
\begin{lemma}
Let $\beta>1$, and further suppose that the sorted probabilities are $\eigvi{i}$. Then there is a constant $C_{\beta}$ such that 
\[
\expectation{\absv{\lambi{i}^{\beta}-(\eigvi{i}\ns)^\beta}}< C_{\beta}\cdot \Paren{n^{\beta/2} + \sqrt n(\eigvi{i}\ns)^{\beta-1}}.
\]
\label{lem:diff-moments}
\end{lemma}
\noindent This lemma is proved in Appendix~\ref{sec:large-alpha}.

Then,
\begin{align}
\expectation{\absv{\Malamb{\renprm}-\Mapeigv{\renprm}}} 
& = \expectation{\absv{\sum_{i=1}^{\dims}\Paren{\Paren{\frac{\lambi{i}}{\ns}}^{\renprm}-\eigvi{i}^\renprm}}}\nonumber\\
&\le \frac1{\ns^{\renprm}}\sum_{i=1}^{\dims}\expectation{\absv{\lambi{i}^{\renprm}-(\eigvi{i}\ns)^\renprm}}\nonumber\\
&\le \frac{C_\renprm}{\ns^{\renprm}}\sum_{i=1}^{\dims} \Paren{\ns^{\renprm/2} + \sqrt \ns(\eigi{i}\ns)^{\renprm-1}} \label{eqn:moment-bound-large-alpha}\\
& = C_\renprm\Paren{\frac{\dims}{\ns^{\renprm/2}} + \frac1{\sqrt \ns}\sum_{i=1}^{\dims}\eigvi{i}^{\renprm-1}} \nonumber\\
& = C_\renprm\Paren{\frac{\dims}{\ns^{\renprm/2}} + \frac{\Mapeigv{\renprm-1}}{\sqrt \ns}},\label{eqn:non-int-bias}
\end{align}
where~\eqref{eqn:moment-bound-large-alpha} uses Lemma~\ref{lem:diff-moments}.

By Lemma~\ref{lem:bnd_moments}, for $\renprm>1$, $\Mapeigv{\renprm}\ge \dims^{1-\renprm}$, and 
$\Mapeigv{\renprm-1}\le \dims\Mapeigv{\renprm}$. Substituting in~\eqref{eqn:non-int-bias},
\begin{align}
\expectation{\absv{\Malamb{\renprm}-\Mapeigv{\renprm}}} 
& \le C_\renprm \Paren{\frac{\dims}{\ns^{\renprm/2}} + \frac{\Mapeigv{\renprm-1}}{\sqrt \ns}}\nonumber\\
& \le C_\renprm \Paren{\frac{\dims^{\renprm}\Mapeigv{\renprm}}{\ns^{\renprm/2}} + \frac{\dims\Mapeigv{\renprm}}{\sqrt \ns}}\nonumber\\
& \le C_\renprm \Paren{\frac{\dims^{\renprm}}{\ns^{\renprm/2}} + \frac{\dims}{\sqrt \ns}}\Mapeigv{\renprm}.\nonumber
\end{align}
By Markov's Inequality, 
\begin{align}
\probof{\absv{\Malamb{\renprm}-\Mapeigv{\renprm}}>\eps \Mapeigv{\renprm}}
\le \frac{\expectation{\absv{\Malamb{\renprm}-\Mapeigv{\renprm}}}}{\eps\Mapeigv{\renprm}} \le \frac{C_\renprm}{\eps}\Paren{\frac{\dims^{\renprm}}{\ns^{\renprm/2}} + \frac{\dims}{\sqrt \ns}}.\nonumber
\end{align}
Therefore, when $\ns > C \dims^2\Paren{\frac1{\eps^2}+\frac1{\eps^{2/\renprm}}}$, the result follows. Since $\renprm>1$, the first term dominates. 
\end{proof}

\subsection{$\renprm<1$}

In this section, we will prove the following:
\begin{theorem}
\label{thm:non-int-upper-small-alpha}
The empirical estimator of $\rentprmmst$ outputs a $\pm \eps$ estimate with  $O\Paren{\Paren{\frac{\dims}{\eps}}^{2/\renprm}}$ copies.
\end{theorem}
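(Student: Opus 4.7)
The plan is to mimic the structure of the proof of Theorem~\ref{thm:non-int-upper-large-alpha}, replacing the second-moment bound of Lemma~\ref{lem:diff-moments} (which is only stated for $\beta>1$) with the Hölder-continuity inequality $|x^{\renprm}-y^{\renprm}|\le|x-y|^{\renprm}$, valid for all $x,y\ge0$ and every $\renprm\in(0,1)$. Define $\Malamb{\renprm}\ed\sum_i(\lambi{i}/\ns)^{\renprm}$; as explained at the end of Section~\ref{sec:preliminaries}, it suffices to produce a $(1\pm\Theta(\eps))$ multiplicative estimate of $\Mapeigv{\renprm}$, and this will give a $\pm\eps$ additive estimate of $\rentprmmst$.

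First, apply Hölder continuity termwise:
\[
|\Malamb{\renprm}-\Mapeigv{\renprm}|=\Big|\sum_{i=1}^{\dims}\Paren{\Paren{\frac{\lambi{i}}{\ns}}^{\renprm}-\eigvi{i}^{\renprm}}\Big|\le\frac{1}{\ns^{\renprm}}\sum_{i=1}^{\dims}|\lambi{i}-\eigvi{i}\ns|^{\renprm}.
\]
Next, since $\renprm/2<1$, the function $t\mapsto t^{\renprm/2}$ is concave on $[0,\infty)$, so Jensen's inequality yields
\[
\EE\big|\lambi{i}-\eigvi{i}\ns\big|^{\renprm}=\EE\big((\lambi{i}-\eigvi{i}\ns)^{2}\big)^{\renprm/2}\le\Paren{\EE(\lambi{i}-\eigvi{i}\ns)^{2}}^{\renprm/2}.
\]
I would then invoke a second-moment bound of the form $\EE(\lambi{i}-\eigvi{i}\ns)^{2}=O(\ns)$, which follows from the existing concentration of the EYD rows around $\eigvi{i}\ns$ (using Lemma~\ref{lem:conc-emp} together with the sub-Gaussian tail bounds of~\cite{OW17,Wright16}). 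Plugging back, summing over $i$, and dividing by $\ns^{\renprm}$ gives
\[
\EE|\Malamb{\renprm}-\Mapeigv{\renprm}|\le C_{\renprm}\cdot\frac{\dims}{\ns^{\renprm/2}}.
\]

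Finally, Lemma~\ref{lem:bnd_moments}(i) gives $\Mapeigv{\renprm}\ge 1$ whenever $\renprm<1$, so the above bound is automatically a bound on the expected \emph{relative} error $\EE|\Malamb{\renprm}/\Mapeigv{\renprm}-1|$. Markov's inequality then shows that with probability at least $0.9$ one has a $(1\pm\hat\eps)$ multiplicative estimate of $\Mapeigv{\renprm}$ with $\hat\eps=O(\dims/\ns^{\renprm/2})$. Choosing $\ns=\Omega((\dims/\eps)^{2/\renprm})$ makes $\hat\eps\le\eps(1-\renprm)/2$, and the logarithmic calculation at the end of Section~\ref{sec:preliminaries} converts this into the desired $\pm\eps$ estimate of $\rentprmmst$.

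The main obstacle is the second-moment bound in the middle step. One cannot simply call Lemma~\ref{lem:diff-moments} at $\beta=2$, because that lemma is stated only for $\beta>1$ and its $\eigvi{i}$-dependent term $\sqrt{\ns}(\eigvi{i}\ns)^{\beta-1}$ is too loose once it is summed over $i$. I expect the cleanest remedy is to prove a companion lemma for $\renprm\in(0,1)$,
\[
\EE\big|\lambi{i}^{\renprm}-(\eigvi{i}\ns)^{\renprm}\big|\le C_{\renprm}\,\ns^{\renprm/2},
\]
which the Hölder-continuity inequality reduces to a single $O(\ns)$ second-moment bound on $\lambi{i}-\eigvi{i}\ns$. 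Such a bound either follows from the McDiarmid-type concentration already used elsewhere in the paper (applied to the $1$-Lipschitz dependence of $\lambi{i}$ on the underlying $\Xon$ coming from the RSK description of Section~\ref{sec:wss}) or from the variance estimates in~\cite{OW17}; with this companion lemma in hand the rest of the argument is the short calculation above.
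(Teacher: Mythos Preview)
Your proposal is correct and matches the paper's proof almost exactly: the paper introduces precisely the companion lemma you anticipate, $\EE\big|\lambi{i}^{\renprm}-(\eigvi{i}\ns)^{\renprm}\big|\le C_{\renprm}\,\ns^{\renprm/2}$ (Lemma~\ref{lem:diff-moment-small-alpha}), sums it over $i$, uses $\Mapeigv{\renprm}\ge 1$, and applies Markov. The only cosmetic difference is in how that lemma is proved: the paper splits $\lambi{i}-\eigvi{i}\ns=Z_i+B$ and uses subadditivity of $x\mapsto x^{\renprm}$ to get $|\lambi{i}^{\renprm}-(\eigvi{i}\ns)^{\renprm}|\le|Z_i|^{\renprm}+|B|^{\renprm}$, whereas you use H\"older continuity plus Jensen to reduce to a second moment---and the second-moment bound you flag as the ``main obstacle'' is in fact immediate from the tools already in the paper (Lemma~\ref{lem:bdd-diff-var} with $c_i=2$ gives $\Var(\lambi{i})\le \ns$, and Lemma~\ref{lem:conc-emp} gives $|B|\le 2\sqrt{\ns}$).
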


Similar to the case of large $\renprm$, we need the following result, which is proved in Appendix~\ref{sec:app-small-beta}.
\begin{lemma}
\label{lem:diff-moment-small-alpha}
Let $\beta<1$ and suppose that the sorted probabilities are $\eigvi{i}$. Then there is a constant $C_\beta$ such that 
\[
\expectation{\absv{\lambi{i}^{\beta}-(\eigi{i}\ns)^\beta}}< C_{\beta}\cdot \ns^{\beta/2}.
\]
\end{lemma}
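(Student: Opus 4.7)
The plan is to exploit the sub-linear growth of $x \mapsto x^\beta$ for $\beta < 1$, which is what makes the bound here much cleaner than in Lemma~\ref{lem:diff-moments}. The key elementary fact is that for $\beta \in (0,1]$ and $a,b \ge 0$, we have $|a^\beta - b^\beta| \le |a-b|^\beta$; this follows from the subadditivity of $x \mapsto x^\beta$, since writing (WLOG) $a \ge b$ and $t = b/a \in [0,1]$ reduces the claim to $1 \le t^\beta + (1-t)^\beta$, which is immediate from $(t + (1-t))^\beta \le t^\beta + (1-t)^\beta$. Applying this pointwise with $a = \lambi{i}$ and $b = \eigvi{i} \ns$ yields
\[
\expectation{\absv{\lambi{i}^\beta - (\eigvi{i}\ns)^\beta}} \le \expectation{\absv{\lambi{i} - \eigvi{i}\ns}^\beta}.
\]

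Next, I would apply Jensen's inequality. Since $\beta/2 < 1$, the map $y \mapsto y^{\beta/2}$ is concave on $[0,\infty)$, so
\[
\expectation{\absv{\lambi{i} - \eigvi{i}\ns}^\beta} = \expectation{\Paren{(\lambi{i} - \eigvi{i}\ns)^2}^{\beta/2}} \le \Paren{\expectation{(\lambi{i} - \eigvi{i}\ns)^2}}^{\beta/2}.
\]
It therefore suffices to show $\expectation{(\lambi{i} - \eigvi{i}\ns)^2} = O(\ns)$, and the conclusion $C_\beta \ns^{\beta/2}$ follows.

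The final ingredient is a bias-variance decomposition $\expectation{(\lambi{i} - \eigvi{i}\ns)^2} = \Var(\lambi{i}) + \Paren{\expectation{\lambi{i}} - \eigvi{i}\ns}^2$. For the bias term, Lemma~\ref{lem:conc-emp} (the O'Donnell--Wright bound stated in the von Neumann section) gives $\absv{\expectation{\lambi{i}} - \eigvi{i}\ns} \le 2\sqrt{\ns \min\{1,\eigvi{i}\dims\}} \le 2\sqrt{\ns}$, so its square is $O(\ns)$. For the variance, I would use the RSK/non-decreasing-subsequence characterization of $\lamb$ from Section~\ref{sec:wss}: changing one coordinate of $X^\ns$ changes each partial sum $\sum_{j \le k}\lambi{j}$ (the maximum total length of $k$ disjoint non-decreasing subsequences) by at most $1$, hence changes each individual $\lambi{i}$ by at most $2$. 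Invoking the bounded-difference variance bound (Lemma~\ref{lem:bdd-diff-var}) with $c_i = 2$ yields $\Var(\lambi{i}) \le \ns$. Combining these gives $\expectation{(\lambi{i} - \eigvi{i}\ns)^2} \le 5\ns$, completing the proof with $C_\beta = 5^{\beta/2}$.

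I do not expect a significant obstacle here: the subadditivity of $x^\beta$ collapses the analysis to a second-moment estimate, and both the bias (from Lemma~\ref{lem:conc-emp}) and the variance (from Efron--Stein / bounded differences applied to the RSK shape) are already at hand. The main thing to be careful about is the Lipschitz constant of $\lambi{i}$ under single-coordinate perturbations of $X^\ns$, where one needs to pass through partial sums rather than working with $\lambi{i}$ directly, giving the factor of $2$ rather than $1$.
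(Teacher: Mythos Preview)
Your proof is correct. Both you and the paper start from the subadditivity of $x\mapsto x^\beta$ on $[0,\infty)$ for $\beta\in(0,1)$, so the reduction to controlling $|\lambi{i}-\eigvi{i}\ns|$ is the same. The execution diverges in how the $\beta$-th moment of this difference is bounded: the paper splits $\lambi{i}-\eigvi{i}\ns$ into $Z_i=\lambi{i}-\expectation{\lambi{i}}$ and $B=\expectation{\lambi{i}}-\eigvi{i}\ns$, applies subadditivity again to get $|Z_i|^\beta+|B|^\beta$, and then invokes the McDiarmid tail bound~\eqref{eqn:conc-lambi} together with Lemma~\ref{lem:bounding-moments} (integrating the tail) to control $\expectation{|Z_i|^\beta}$. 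You instead go through Jensen to land on the second moment, and handle that by a bias--variance split, using Lemma~\ref{lem:conc-emp} for the bias and Lemma~\ref{lem:bdd-diff-var} directly for the variance. Your route is slightly more elementary---it avoids the tail-integration lemma and yields an explicit constant $C_\beta=5^{\beta/2}$---while the paper's route reuses machinery (Lemma~\ref{lem:bounding-moments}) already developed for the $\beta>1$ case. Both rest on the same bounded-difference property of $\lambi{i}$ with constant~$2$, which you identify correctly via partial sums.
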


We now prove the copy complexity bound assuming this result. 
\begin{proof}[Proof of Theorem~\ref{thm:non-int-upper-small-alpha}]
Recall that $\ns = \sum \lambi{i}$. Define, 
\[
\Malamb{\renprm} \ed \sum_{i=1}^{\dims}\Paren{\frac{\lambi{i}}{\ns}}^{\renprm}.
\]
Then by the triangle inequality, 
\begin{align}
\expectation{\absv{\Malamb{\renprm}-\Mapeigv{\renprm}}} 
&\le \frac1{\ns^{\renprm}}\sum_{i=1}^{\dims}\expectation{\absv{\lambi{i}^{\renprm}-(\eigvi{i}\ns)^\renprm}}\nonumber\\
&\le \frac{C_\renprm}{\ns^{\renprm}}\sum_{i=1}^{\dims} \ns^{\renprm/2} \label{eqn:smallb}\\
& = C_\renprm{\frac{\dims}{\ns^{\renprm/2}}},  \label{eqn:non-int-bias-small}
\end{align}
where~\eqref{eqn:smallb} follows from Lemma~\ref{lem:diff-moment-small-alpha}.
For $\renprm<1$, $\Mapeigv{\renprm}\ge1$. Substituting in~\eqref{eqn:non-int-bias-small},
\begin{align}
\expectation{\absv{\Malamb{\renprm}-\Mapeigv{\renprm}}} 
 \le C_\renprm{\frac{\dims}{\ns^{\renprm/2}}}  \le C_\renprm{\frac{\dims}{\ns^{\renprm/2}}} \Mapeigv{\renprm}. \nonumber
\end{align}
By Markov's Inequality, 
\begin{align}
\probof{\absv{\Malamb{\renprm}-\Mapeigv{\renprm}}>\eps \Mapeigv{\renprm}}
\le \frac{\expectation{\absv{\Malamb{\renprm}-\Mapeigv{\renprm}}}}{\eps\Mapeigv{\renprm}}
 \le \frac{C_\renprm}{\eps}\Paren{\frac{\dims}{\ns^{\renprm/2}}}.\nonumber
\end{align}
Therefore, when $\ns > C \Paren{\frac{\dims}{\eps}}^{2/\renprm}$, the result follows.
\end{proof}

\section{Lower bound on the performance of empirical entropy estimate}
\label{sec:lb-large}

\subsection{Stronger error probability bounds for performance of EYD}
The EYD algorithm for estimating the multiset of probability elements simply outputs the empirical probabilities of the Young tableaux, namely $\eigvi{i}=\lambi{i}/\ns$. The performance of the EYD algorithm has been well studied. We will consider the special case of the uniform distribution, and the performance metric of total variation. It is known~\cite{OW16} that the EYD algorithm using $O(\dims^2/\eps^2)$ samples from the uniform distribution satisfies with high probability,  
\[
\expectation{\dtv{\eigv}{u}}<\eps. 
\]
The best known lower bounds for the performance of the EYD algorithm is the following result of~\cite{ODonnellW15}. 
\begin{theorem}
\label{thm:od-bound}
There is a constant $\eps_0>0$, such that for $\eps\le\eps_0$, 
\[
\probof{\sum_{i=1}^{\dims}\absv{\frac{\lambi{i}}{\ns}-\frac1\dims}>\eps}>0.01
\]
unless $\ns = \Omega(\dims^2/\eps^2)$.
\end{theorem}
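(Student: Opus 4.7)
The plan is to prove Theorem~\ref{thm:od-bound} by combining a lower bound on the expected EYD deviation from uniform with a concentration-of-measure argument. Write $\mu \ed \expectation{\sum_i |\lambda_i/\ns - 1/\dims|}$ and $X \ed \sum_i |\lambda_i/\ns - 1/\dims|$, both taken under the uniform spectrum $\eigv = (1/\dims, \ldots, 1/\dims)$. I would first establish $\mu = \Omega(\dims/\sqrt{\ns})$ in the regime $\dims \le \ns \le \dims^2/\eps^2$, and then show that $X$ concentrates sharply around $\mu$.

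For the expectation bound, a natural starting point is Lemma~\ref{lem:power-sums} with $\boldsymbol\mu=(2)$: since $\Map{2}{\eigv}=1/\dims$, we have $\expectation{\proh{(2)}{\lamb}} = \ns(\ns-1)/\dims$. Combined with the classical character-ratio identity $\proh{(2)}{\lamb}=\sum_i \lambda_i(\lambda_i - 2i + 1)$, this gives an equation relating $\expectation{\sum_i \lambda_i^2}$ to $\expectation{\sum_i i \lambda_i}$; rearranging and using $\sum_i \lambda_i = \ns$ yields $\expectation{\sum_i (\lambda_i/\ns - 1/\dims)^2} = \Omega(1/\ns)$. Unfortunately this $\ell_2^2$ bound is a factor of $\dims$ too small to yield $\mu = \Omega(\dims/\sqrt{\ns})$ directly via Cauchy--Schwarz. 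Closing the gap requires exploiting the one-sided nature of the EYD bias --- the first $\Theta(\dims)$ rows are systematically larger than $\ns/\dims$ and the last $\Theta(\dims)$ rows are systematically smaller. One route is to use Jensen's inequality to reduce to showing $\sum_i |\expectation{\lambda_i} - \ns/\dims| = \Omega(\dims\sqrt{\ns})$, which can then be attacked via row-by-row Schur-function asymptotics or the Vershik--Kerov limit shape of the Schur--Weyl distribution in~\eqref{eq:WSSdist}. This is the main obstacle and is precisely the heart of the O'Donnell--Wright argument.

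For the concentration step, I would apply McDiarmid's inequality (Lemma~\ref{lem:mcdiarmid}) to $X$ viewed as a function of the i.i.d.\ samples $X_1,\dots,X_\ns$ drawn from $\eigv$ via the combinatorial WSS description of Section~\ref{sec:wss}. The RSK interlacing property ensures that changing a single symbol of $X^\ns$ alters $\lambda$ in at most $O(1)$ boxes (one removal plus one reinsertion), so by the triangle inequality $X$ satisfies the bounded-difference condition~\eqref{eqn:conc} with $c_i = O(1/\ns)$. Lemma~\ref{lem:mcdiarmid} then gives $\Prb\bigl[|X-\mu|>t\bigr]\le 2\exp(-\Omega(\ns t^2))$. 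Taking $t=\mu/2=\Omega(\dims/\sqrt{\ns})$ makes the exponent $\Omega(\dims^2)$, so with probability $1-\exp(-\Omega(\dims^2))$ the variable $X$ exceeds $\mu/2$, dominating the $0.01$ threshold. If $\ns$ is less than a sufficiently small constant times $\dims^2/\eps^2$, then $\mu/2 > \eps$, which proves the contrapositive. As a bonus, the concentration step is so sharp that it immediately yields the $1-\exp(-\Omega(\dims^2))$ probability guarantee advertised as a strengthening in the paper's introduction.
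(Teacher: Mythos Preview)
The paper does not prove Theorem~\ref{thm:od-bound}; it is quoted as a result of O'Donnell and Wright~\cite{ODonnellW15}. What the paper \emph{does} prove is the strengthening, Theorem~\ref{thm:emp-tv-unif}, and its proof has exactly your two-step structure: (i) an expectation lower bound $\expectation{Z(\lamb)}>\eps$, obtained by invoking Theorem~\ref{thm:od-bound} as a black box, and (ii) McDiarmid concentration via a bounded-difference property of $Z$. So you have independently rediscovered the paper's strengthening argument; the circularity is that the ingredient you are missing --- the $\ell_1$ expectation lower bound $\mu=\Omega(\dims/\sqrt{\ns})$ --- is essentially Theorem~\ref{thm:od-bound} itself. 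Your honest assessment that the $\proh{(2)}{\cdot}$ route only yields an $\ell_2^2$ bound a factor $\dims$ too weak is correct; closing that gap requires the per-row bias or limit-shape analysis of~\cite{ODonnellW15}, which neither you nor the paper reproduces.

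Two smaller points on the concentration half. First, your justification ``changing one symbol alters $\lamb$ in at most $O(1)$ boxes'' is not the right mechanism: many rows can change. What holds is the interlacing $\bigl|\sum_{i\le j}(\lambi{i}-\lamb'_i)\bigr|\le 1$ for every $j$ (Greene's theorem), and one then uses the specific form of $Z(\lamb)=2\sum_{\lambi{i}/\ns>1/\dims}(\lambi{i}/\ns-1/\dims)$ to get the $O(1/\ns)$ bounded difference; this is the content of the paper's Lemma~\ref{lem:dtv-lipschitz}. Second, your claimed exponent $\Omega(\dims^2)$ overshoots. The asymptotic $\mu=\Theta(\dims/\sqrt{\ns})$ is only valid once $\ns\gtrsim\dims^2$; for $\dims\le\ns\ll\dims^2$ one has $\mu=\Theta(1)$ (it cannot exceed $2$), and then $\ns t^2=\Theta(\ns)$ can be as small as $\Theta(\dims)$. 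That is why the paper states only $1-\exp(-c\,\dims)$ in Theorem~\ref{thm:emp-tv-unif} and handles small $\ns$ separately via Lemma~\ref{lem:emp-quant}.
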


We will strengthen their error probability bound as follows. 
\begin{theorem}
There are constants $\eps_0>0$,  $c_1$, and $c_2$ such that when $\eps\le\eps_0$, and $\ns<c_1\dims^2/\eps^2$, and $\rho$ is maximally mixed,
\[
\probof{\sum_{i=1}^{\dims}\absv{\frac{\lambi{i}}{\ns}-\frac1\dims}>\eps}>1-\exp(-c_2\cdot d).
\]
\label{thm:emp-tv-unif}
\end{theorem}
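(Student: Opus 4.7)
The plan is to promote Theorem~\ref{thm:od-bound} from a constant-probability lower bound to an $\exp(-c_2 \dims)$ lower bound by combining it with McDiarmid's inequality (Lemma~\ref{lem:mcdiarmid}). View $\lambda$ as a deterministic function $\lambda(X^{\ns})$ of an i.i.d.\ uniform sample $X^{\ns}\in[\dims]^{\ns}$ via the non-decreasing-subsequence procedure in Section~\ref{sec:wss}, and set $f(X^{\ns}) := \sum_{i=1}^{\dims}|\lambda_i/\ns - 1/\dims|$. The first step is to establish that $f$ has bounded differences of order $c/\ns$ for a universal constant $c$: if $X^{\ns}$ and $(X^{\ns})'$ differ in exactly one coordinate, then the stability of the RSK correspondence under single-symbol modifications gives $\|\lambda(X^{\ns}) - \lambda((X^{\ns})')\|_1 \le c$, hence $|f(X^{\ns}) - f((X^{\ns})')| \le c/\ns$ by the reverse triangle inequality.

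With bounded differences in hand, McDiarmid yields $\probof{|f - \expectation{f}| > t} \le 2\exp(-2\ns t^2/c^2)$ for every $t>0$. Apply Theorem~\ref{thm:od-bound} at the slightly larger threshold $2\eps$ (absorbing a factor of $4$ into $c_1$): for $\ns < c_1 \dims^2/\eps^2$, one has $\probof{f > 2\eps} > 0.01$. Combined with the upper McDiarmid tail, this forces $\expectation{f} \ge 2\eps - O(1/\sqrt{\ns})$; for $\ns$ above a modest threshold ($\ns \ge C/\eps^2$, say), this gives $\expectation{f} \ge 3\eps/2$, and the lower McDiarmid tail then yields $\probof{f < \eps} \le 2\exp(-\ns(\expectation{f}-\eps)^2/O(1))$. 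A brief case analysis closes the argument: (i) if $\ns \le \dims/2$, then $\ell(\lambda)\le \ns\le \dims/2$, so $\lambda$ has at least $\dims/2$ zero entries and $f \ge 1/2 > \eps$ deterministically; (ii) if $\ns \ge \Omega(\dims/\eps^2)$, the exponent is $\ns\eps^2/O(1) = \Omega(\dims)$; (iii) in the intermediate regime $\dims/2 \le \ns \le O(\dims/\eps^2)$, $\expectation{f}$ is bounded below by a constant, so $\ns(\expectation{f}-\eps)^2 = \Omega(\ns) = \Omega(\dims)$ again dominates.

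The main obstacle is the bounded-differences bound on $f$. Greene's theorem readily gives $|\lambda^k(X^{\ns}) - \lambda^k((X^{\ns})')| \le 1$ for every partial sum $\lambda^k = \lambda_1+\cdots+\lambda_k$ (since an optimal collection of $k$ disjoint non-decreasing subsequences can use the flipped coordinate at most once), but this only bounds $\sum_i |\lambda_i - \lambda_i'|$ by the total variation of the partial-sum difference sequence, which a priori could be as large as $\Theta(\dims)$. Establishing the sharper constant $\ell_1$ bound $\|\lambda - \lambda'\|_1 \le c$ on the parts themselves requires exploiting the deeper structural rigidity of the RSK correspondence under single-letter modifications---in essentially the same spirit as Lemma~\ref{lem:lipschitz}, which bounds the change in the empirical entropy by $O(\log \ns/\ns)$ under the analogous single-symbol perturbation.
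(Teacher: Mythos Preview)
Your overall architecture matches the paper's: bound the differences of $f$ under single-coordinate changes, invoke McDiarmid, and combine with the constant-probability statement of Theorem~\ref{thm:od-bound} to control the expectation. The case split into very small $\ns$ versus larger $\ns$ also mirrors the paper.

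The genuine gap is in the bounded-differences step. You reduce it to the claim $\|\lambda - \lambda'\|_1 \le c$ for a universal constant $c$, flag this as the main obstacle, and leave it unresolved. This claim is almost certainly false as stated: Greene's theorem gives $\bigl|\sum_{i\le k}(\lambda_i - \lambda'_i)\bigr| \le 1$ (and hence $|\lambda_i - \lambda'_i| \le 2$ per coordinate), but the paper's own analysis of single-symbol perturbations (the claims inside the proof of Lemma~\ref{lem:lipschitz}) shows only that at most $O(\sqrt{\ns})$ coordinates can have $\Delta_i\ne 0$, so the $\ell_1$ deviation one can extract from these tools is $O(\sqrt{\ns})$, not $O(1)$. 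Passing through $\|\lambda - \lambda'\|_1$ and then the reverse triangle inequality is the wrong decomposition.

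The paper sidesteps this entirely. Its Lemma~\ref{lem:dtv-lipschitz} observes that since the $\lambda_i/\ns$ are sorted, the total variation to uniform is the \emph{one-sided} sum
\[
Z(\lamb)\;=\;2\sum_{i\le j}\Bigl(\tfrac{\lambi{i}}{\ns}-\tfrac1\dims\Bigr),
\]
where $j$ is the last index with $\lambda_j/\ns>1/\dims$. Now the partial-sum bound $\bigl|\sum_{i\le k}(\lambda_i-\lambda'_i)\bigr|\le 1$ is directly usable: the partial sums up to the respective crossing indices $j,j'$ differ by at most $1$, one shows $|j-j'|\le 3$, and the boundary corrections contribute at most $2$ each, yielding $|Z(\lambda)-Z(\lambda')|\le 14/\ns$. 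This uses exactly the Greene's-theorem information you already have and requires no ``deeper structural rigidity of RSK''.

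A secondary point: your case~(iii) asserts that $\EE[f]$ is bounded below by an absolute constant on the whole range $\dims/2\le\ns\le O(\dims/\eps^2)$, but the argument you give only yields $\EE[f]\ge 3\eps/2$ there, so the exponent $\ns(\EE[f]-\eps)^2$ is $\Omega(\ns\eps^2)$, not $\Omega(\ns)$. The paper handles the small-$\ns$ regime by a different route (Lemmas~\ref{lem:l-one-error} and~\ref{lem:emp-quant}): it uses the majorization $Z(\lambda)\ge Z(\hat p)$ together with a direct classical concentration bound for the empirical distribution $\hat p$, which delivers the $\exp(-\Omega(\dims))$ bound for all $\ns=O(\dims/\eps^2)$; only once $\ns\ge C\dims/\eps^2$ does it switch to the McDiarmid argument.
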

\begin{proof}
	\noindent Let
\[
Z(\lamb) = \sum_{i=1}^{\dims} \absv{\frac{\lambi{i}}{\ns}-\frac1\dims}.
\]
By the LIS interpretation of the Young tableaux, we will show that if $\lamb$, and $\lamb'$ are two Young tableaux corresponding to sequences that differ at at most one position, then the difference of their total variation distances from the uniform distribution is small. In particular, 
\begin{lemma}
\[
|Z(\lamb)-Z(\lamb')| \le \frac{14}{\ns}.
\]
\label{lem:dtv-lipschitz}
\end{lemma}
\noindent This lemma is proved in Section~\ref{app:lipschitz}. 

To prove Theorem~\ref{thm:emp-tv-unif}, we first show that it holds for small $\ns$ (at most $O(\dims/\eps^2)$). This is proved in the following two lemmas and relies only on results about the empirical distribution in the classical setting. These are proved in Appendix~\ref{app:lipschitz}.
\begin{lemma}
\label{lem:l-one-error}
Let $\hat p$ be the empirical distribution from $\ns$ draws from the uniform distribution $u$ over $[\dims]$. There are constants $\eps_0$, and $c$, such that for any $\eps<\eps_0$, unless $\ns=\Omega(\dims/\eps^2)$, 
\[
\probof{\dtv{\hat p}{u}<\eps/2}<\exp(-c\cdot\dims).
\]
\end{lemma}

Using a coupling argument, we prove the following lemma. 
\begin{lemma}
\label{lem:emp-quant}
Unless $\ns=\Omega(\dims/\eps^2)$, 
\[
\probof{Z(\lamb)<\eps}\le \probof{\dtv{\hat p}{u}<\eps/2}<\exp(-c\cdot\dims),
\]
where $c$ is the same constant as Lemma~\ref{lem:l-one-error}.
\end{lemma}
We henceforth assume that $\ns>C\dims/\eps^2$ for some constant $C$. 
By Theorem~\ref{thm:od-bound} with appropriate normalization, we can claim that for $\eps<\eps_0$ (for some constant $\eps_0$), unless $\ns=\Omega(\dims^2/\eps^2)$, 
\begin{align}
\expectation{Z(\lamb)}>\eps.\label{eqn:expectation-z}
\end{align}
Therefore, when~\eqref{eqn:expectation-z} is satisfied, and $\ns>C\dims/\eps^2$
\begin{align}
\probof{Z(\lamb)<\frac{\eps}2}
= &\probof{\expectation{Z(\lamb)}-Z(\lamb)>\expectation{Z(\lamb)}-\frac{\eps}2}\nonumber\\
\le &\probof{\expectation{Z(\lamb)}-Z(\lamb)>\frac{\eps}2}\nonumber\\
\le &\exp\Paren{-\frac{2(\eps/2)^2}{n\cdot (14/\ns)^2}} \nonumber\\
= &\exp\Paren{-\frac{\ns\eps^2}{392}}\nonumber\\
\le & \exp\Paren{-\frac{C\dims}{392}},
\end{align}
proving Theorem~\ref{thm:emp-tv-unif}.

\end{proof}

%By Lemma~\ref{lem:dtv-lipschitz}, $Z(\lamb)$ has bounded difference constant $14/\ns$. By the McDiarmid's Inequality, we have 
%\begin{align}
%\probof{Z(\lamb)-\expectation{Z(\lamb)}>\frac{\eps}4} < \exp\Paren{-\frac{2(\eps/4)^2}{n\cdot (14/\ns)^2}} = \exp\Paren{-\frac{\ns\eps^2}{1568}}.\nonumber
%\end{align}
%Therefore, whenever $\ns>10000/\eps^2$,  
%\begin{align}
%\probof{Z(\lamb)>\expectation{Z(\lamb)}+\eps/4} < 0.01.\label{eqn:one-lower}
%\end{align}
%But Theorem~\ref{thm:od-bound} states that there is a $c$ such that unless $\ns\ge c\dims^2/\eps^2$, 
%\begin{align}
%\probof{Z(\lamb)>\eps} > 0.01.\label{eqn:two-lower}
%\end{align}
%Suppose $\ns\le c\dims^2/\eps^2$. Then for large enough $\dims$,~\eqref{eqn:one-lower}, and~\eqref{eqn:two-lower} are contradictory unless $\expectation{Z(\lamb)}+\eps/4>\eps$, which is equivalent to $\expectation{Z(\lamb)}>3\eps/4$.}
%
%\new{McDiarmid's inequality in the reverse direction for $Z$ states that
%\begin{align}
%\probof{\expectation{Z(\lamb)}-Z(\lamb)>\gamma} < \exp(-n\gamma^2/98).
%\end{align}
%Now for $\Omega(\dims/\eps^2)=\ns<c_1\dims^2/\eps^2$, if we plug this value of $\ns$ along with $\gamma = \eps/4$ in the McDiarmid's inequality, 
%\[
%\probof{Z(\lamb)>\eps/2} \le \exp(-c_2\dims).
%\]}

\subsection{Lower bound for $\renprm\ge1$}

In this section we show that empirical estimation of entropy requires at least quadratic (in $\dims)$ samples for $\renprm\ge 1$. This shows that the analysis of empirical estimation is tight in the dimensionality. 

\begin{theorem}
There is a constant $\eps_0$, such that for $\eps<\eps_0$, the empirical estimate of entropy requires $\Omega(\dims^2/\eps)$ samples to estimate von Neumann entropy, and any R\'enyi entropy of order greater than one. 
\label{thm:lb-non-int-large}
\end{theorem}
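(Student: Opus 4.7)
The plan is to use the maximally mixed state $\mst = I/\dims$ as the hard instance and invoke Theorem~\ref{thm:emp-tv-unif} combined with Pinsker's inequality. For this state the spectrum $\eigv$ is uniform on $[\dims]$, so $\entmst = \rentprmmst = \log\dims$ for every $\renprm > 0$, while the EYD plug-in outputs are $\empents{\lamb}$ and the analogous $\rent{\renprm}{\hat{\eigv}}$, where $\hat{\eigv}_i = \lambi{i}/\ns$.

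First I would apply Theorem~\ref{thm:emp-tv-unif} with its parameter set to $\tilde{\eps} = \sqrt{2\eps}$; provided $\eps$ is small enough that $\sqrt{2\eps} \le \eps_0$, this yields constants $c_1,c_2 > 0$ such that whenever $\ns < c_1\dims^2/(2\eps)$ we have $\|\hat{\eigv} - u\|_1 > \sqrt{2\eps}$ with probability at least $1 - \exp(-c_2\dims)$. On this event Pinsker's inequality (Lemma~\ref{lem:distance-bounds}), combined with $\dtv{\hat{\eigv}}{u} = \tfrac12\|\hat{\eigv}-u\|_1$, gives
\[
d_{KL}(\hat{\eigv},u) \;\ge\; \tfrac12\|\hat{\eigv}-u\|_1^2 \;>\; \eps.
\]
Because the reference distribution is uniform, $d_{KL}(\hat{\eigv},u) = \log\dims - H(\hat{\eigv})$, so the EYD estimator of the von Neumann entropy undershoots the truth by more than $\eps$ with high probability, which settles the claim for $\renprm = 1$.

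To extend to R\'enyi orders $\renprm > 1$, I would use the standard monotonicity $\rent{\renprm}{q} \le S_1(q) = H(q)$ for $\renprm > 1$ and any distribution $q$. Applying this to $\hat{\eigv}$ on the above event gives $\rent{\renprm}{\hat{\eigv}} \le H(\hat{\eigv}) < \log\dims - \eps = \rent{\renprm}{u} - \eps$, so the EYD estimate of $\rentprmmst$ is also $\eps$-far from the truth. Thus for every $\renprm \ge 1$ the EYD output deviates from the true entropy by at least $\eps$ with probability $1 - \exp(-c_2\dims)$, well above $1/3$, which contradicts being a valid $\pm\eps$ estimator and forces $\ns = \Omega(\dims^2/\eps)$.

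The only substantive ingredient is Theorem~\ref{thm:emp-tv-unif}; the remainder is a routine Pinsker-plus-monotonicity conversion. Consequently, the main obstacle has already been handled upstream in the strengthened EYD concentration result, and nothing new beyond the elementary reductions above is required here.
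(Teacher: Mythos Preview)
Your proposal is correct and follows essentially the same route as the paper: the paper also takes the maximally mixed state, applies Theorem~\ref{thm:emp-tv-unif}, converts the $\ell_1$ gap into a KL gap via Pinsker (Lemma~\ref{lem:distance-bounds}) using that $d_{KL}(\hat{\eigv},u)=\log\dims-H(\hat{\eigv})$, and then invokes monotonicity of R\'enyi entropy in $\renprm$ to cover $\renprm>1$; the only cosmetic difference is that the paper packages this as an intermediate lemma with parameter $\eps$ yielding error $\eps^2$ and substitutes at the end, whereas you set $\tilde\eps=\sqrt{2\eps}$ up front.
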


The proof uses two claims.
The second is on the monotonicity of R\'enyi entropy (See~\cite{BeckS93}).
\begin{lemma}
$\rentprmmst$ is a non-increasing function of $\renprm$.
\label{lem:monotonicity}
\end{lemma}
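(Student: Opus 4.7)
The plan is to reduce to the classical claim that the Rényi entropy $H_\alpha(p) = \frac{1}{1-\alpha}\log\sum_i p_i^\alpha$ is a non-increasing function of $\alpha$ on $(0,\infty)$, since $S_\alpha(\rho) = H_\alpha(\eta)$ where $\eta = (\eta_1,\ldots,\eta_d)$ is the spectrum of $\rho$ (viewed as a distribution on $[d]$). Handling $\alpha=1$ by continuity (the defining limit $\lim_{\alpha\to 1} S_\alpha(\rho)=S(\rho)$ is already recorded in the introduction), it suffices to show $\frac{dH_\alpha}{d\alpha}\le 0$ for all $\alpha\in(0,\infty)\setminus\{1\}$.

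First I would set $T(\alpha)=\sum_i p_i^\alpha$ so that $H_\alpha = \log T(\alpha)/(1-\alpha)$, and differentiate to obtain
\[
\frac{dH_\alpha}{d\alpha} \;=\; \frac{(1-\alpha)\,T'(\alpha)/T(\alpha) + \log T(\alpha)}{(1-\alpha)^2},
\]
with $T'(\alpha)=\sum_i p_i^\alpha \log p_i$. Since $(1-\alpha)^2>0$, it suffices to show the numerator is non-positive. Introduce the \emph{escort} (or tilted) distribution $q_i \eqdef p_i^\alpha / T(\alpha)$, which is a bona fide probability vector. Then $\log p_i = \frac{1}{\alpha}(\log q_i + \log T(\alpha))$ and a direct substitution gives
\[
(1-\alpha)\frac{T'(\alpha)}{T(\alpha)} + \log T(\alpha) \;=\; \frac{1}{\alpha}\Bigl[\log T(\alpha) + (\alpha-1)\,H(q)\Bigr],
\]
where $H(q)=-\sum_i q_i\log q_i$ is the Shannon entropy of the escort.

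The final step is to show the bracketed quantity is $\le 0$, which I would do by invoking the non-negativity of the KL divergence $D(q\|p) = \sum_i q_i\log(q_i/p_i)\ge 0$. Expanding $\log p_i$ as above yields $\sum_i q_i \log p_i = \frac{1}{\alpha}(-H(q) + \log T(\alpha))$, so $D(q\|p)\ge 0$ rearranges to
\[
-H(q) \;\ge\; \tfrac{1}{\alpha}\bigl(-H(q)+\log T(\alpha)\bigr) \;\Longleftrightarrow\; \log T(\alpha)+(\alpha-1)H(q)\;\le\;0,
\]
where the equivalence uses $\alpha>0$. Combining the three displayed identities gives $dH_\alpha/d\alpha \le 0$, and continuity at $\alpha=1$ then yields monotonicity across the whole range.

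There is no real obstacle here; the only subtlety is bookkeeping the sign of $\alpha$ (to keep the inequality direction correct when multiplying by $\alpha$) and dealing cleanly with the removable singularity at $\alpha=1$, both of which are standard. The proof is entirely classical once we observe that $S_\alpha(\rho)$ depends only on the multiset of eigenvalues of $\rho$.
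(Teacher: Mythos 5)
Your argument is correct, and the computation checks out: with $T(\alpha)=\sum_i p_i^\alpha$ and the escort distribution $q_i=p_i^\alpha/T(\alpha)$, the numerator of $dH_\alpha/d\alpha$ equals $\frac{1}{\alpha}\bigl[\log T(\alpha)+(\alpha-1)H(q)\bigr]$, and $D(q\|p)\ge 0$ rearranges (using $\alpha>0$) exactly to the non-positivity of that bracket. Note, however, that the paper does not actually prove this lemma at all: it simply cites Beck and Schl\"ogl and records the statement as known. So your proposal is not a variant of the paper's proof but a self-contained replacement for a citation, which is a reasonable thing to supply. Two minor points of hygiene, both of which you essentially flag already: (i) if some eigenvalues vanish, one uses the convention $0\log 0=0$; since $q_i=0$ exactly when $p_i=0$, the divergence $D(q\|p)$ remains well-defined and non-negative, and the differentiation of $T$ is unaffected; (ii) the extension across $\alpha=1$ needs the (standard) fact that $H_\alpha(p)$ extends continuously to $\alpha=1$ with value $H(p)$, after which monotonicity on $(0,1)$ and $(1,\infty)$ separately gives monotonicity on all of $(0,\infty)$. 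With those remarks the proof is complete.
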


We show that when the state is maximally mixed, namely each eigenvalue is $\frac1\dims$, the empirical estimator cannot estimate the entropy unless we have enough samples. 
\begin{lemma}
\label{lem:emp-ent}
For any $\renprm\ge1$, there exist constants $c_1$, and $c_2$ such that when the distribution is maximally mixed, and $n<c_1\dims^2/\eps^2$, with probability at least $1-\exp(-c_2\dims^2)$, 
\[
{ S_{\renprm}\Paren{\frac{\lamb}{\ns}}<\log\dims - \eps^2}.
\]
\end{lemma}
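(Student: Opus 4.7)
\textbf{Proof plan for Lemma~\ref{lem:emp-ent}.}

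My plan is to reduce the statement for general $\renprm\ge 1$ to the Shannon case $\renprm=1$ and then link the empirical Shannon entropy to the trace (total variation) distance to the uniform distribution, where \autoref{thm:emp-tv-unif} has already done the hard work. Concretely:

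\emph{Step 1 (reduce to $\renprm=1$).} By the monotonicity of R\'enyi entropy (\autoref{lem:monotonicity}), for every $\renprm\ge 1$ and every probability vector $q$ we have $S_\renprm(q)\le S_1(q)=H(q)$. Applying this with $q=\lamb/\ns$ shows it suffices to prove the upper bound $H(\lamb/\ns)<\log\dims-\eps^2$ with the claimed probability.

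\emph{Step 2 (entropy gap via Pinsker).} Let $u$ denote the uniform distribution on $[\dims]$. A direct calculation gives $\log\dims-H(\lamb/\ns)=d_{KL}(\lamb/\ns,u)$, and then by Pinsker's inequality (\autoref{lem:distance-bounds}),
\[
\log\dims-H(\lamb/\ns)\;\ge\;2\,\dtv{\lamb/\ns}{u}^2 \;=\;\tfrac12\Paren{\sum_{i=1}^\dims \Absvlu{\tfrac{\lambi i}{\ns}-\tfrac1\dims}}^2.
\]
So once the total variation (equivalently the $\ell_1$) distance is at least $2\eps$, we are done: the right-hand side exceeds $2\eps^2>\eps^2$.

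\emph{Step 3 (invoke the sharp concentration of \autoref{thm:emp-tv-unif}).} Apply \autoref{thm:emp-tv-unif} with the parameter $2\eps$ in place of $\eps$. That theorem states that whenever $\ns<c_1' \dims^2/(2\eps)^2=(c_1'/4)\dims^2/\eps^2$, the $\ell_1$-distance $\sum_i|\lambi i/\ns-1/\dims|$ exceeds $2\eps$ with probability at least $1-\exp(-c_2\dims)$. Combining with Step 2 yields $H(\lamb/\ns)<\log\dims-2\eps^2<\log\dims-\eps^2$, and hence $S_\renprm(\lamb/\ns)<\log\dims-\eps^2$, with the same probability. Setting $c_1\ed c_1'/4$ finishes the argument.

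\emph{Where the difficulty is.} The only delicate point is matching the stated $1-\exp(-c_2\dims^2)$ probability: the route above, inherited from \autoref{thm:emp-tv-unif}, naturally delivers $1-\exp(-c_2\dims)$, which is already exponential in $\dims$ and consistent with the paper's stated contribution of ``exponential probability''. If the stronger $\dims^2$ exponent is actually needed, one can supplement Step 3 by applying McDiarmid's inequality (\autoref{lem:mcdiarmid}) directly to $H(\lamb/\ns)$, using the bounded-differences constant $O(\log\ns/\ns)$ from \autoref{lem:lipschitz} together with an expectation lower bound $\EE[\log\dims-H(\lamb/\ns)]=\Omega(\eps^2)$ derived by plugging the $\ell_1$ lower bound from \autoref{thm:emp-tv-unif} into Pinsker. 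I expect this extra concentration step, and in particular the expectation lower bound (which requires converting the high-probability $\ell_1$-bound of \autoref{thm:emp-tv-unif} into an in-expectation bound on the KL divergence), to be the main technical obstacle; the reduction of Steps 1--2 is essentially immediate.
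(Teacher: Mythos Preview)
Your proposal is correct and follows essentially the same route as the paper: reduce to the von Neumann case via \autoref{lem:monotonicity}, rewrite $\log\dims-H(\lamb/\ns)$ as a KL divergence and lower-bound it by Pinsker, then invoke \autoref{thm:emp-tv-unif}. Your observation about the probability exponent is apt: the paper's own proof also goes through \autoref{thm:emp-tv-unif} and therefore delivers $1-\exp(-c_2\dims)$ rather than $1-\exp(-c_2\dims^2)$, so the $\dims^2$ in the lemma statement appears to be a typo rather than a stronger claim you are failing to recover.
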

\begin{proof}
Suppose $u$ denotes the maximally mixed state. Then, 
\begin{align}
\ent{u}-\empents{\lamb} 
= & \sum_{i=1}^{\dims} \Paren{\frac{\lambi{i}}{\ns}\log\dims -\frac{\lambi{i}}{\ns}\log\frac{\ns}{\lambi i}}\nonumber\\
= & \sum_{i=1}^{\dims} \Paren{\frac{\lambi{i}}{\ns}\log\frac{\ns/{\lambi i}}{\dims}}\nonumber\\
=& d_{KL}\Paren{\frac{\lamb}{\ns}, u}\nonumber\\
\ge& 2\dtv{\frac{\lamb}{\ns}}{u}^2,\label{eqn:pinsker}
\end{align}
where the last step is from Lemma~\ref{lem:distance-bounds}. By Lemma~\ref{lem:monotonicity}, whenever the total variation of the empirical distribution from the uniform distribution is at least $\eps$, all R\'enyi entropies of order at least one at $\Omega(\eps^2)$ away from $\log\dims$. Combining with Theorem~\ref{thm:emp-tv-unif} gives the result. 
\end{proof}

\begin{proof}[Proof of Theorem~\ref{thm:lb-non-int-large}]
The last lemma says that to estimate the entropy to $\pm\eps^2$, the empirical estimate requires $\Omega(\dims^2/\eps^2)$ samples. Substituting $\eps=\eps^2$ gives the result. 
\end{proof}

\subsection{Lower bound for $\renprm<1$}

The lower bounds for the empirical algorithm for $\renprm\ge1$ were shown for the uniform distribution. However, for $\renprm<1$, the uniform distribution only provides a quadratic dependence on $\dims$. This is similar to the classical setting, where~\cite{AcharyaOST17} designed another distribution for analyzing the case $\renprm<1$. We will consider their distributions as our eigenvalues (for ease of notations we use the dimension to be $\dims+1$):
\begin{align}
 \eiglbi{1} =1-\frac{\eps}{\dims^{\frac1{\renprm}-1}}, \eiglbi{i} = {\frac{\eps}{\dims^{\frac1\renprm}}}, \text{ for } i=2,\ldots,\dims+1.\label{dist-lower}
\end{align}

{Thus we assume that $\eps < \dims^{1/\renprm - 1}$.}
\cite{AcharyaOST17} consider $\eiglb$ as a distribution for the classical setting and showed that the empirical plug-in estimator requires $\Omega((\dims/\eps)^{1/\renprm})$ samples to output an $\pm \eps$ estimate of $\rentprmmst$. In particular, they show that unless $\Omega(\dims^{1/\renprm}/\eps)$, the plug-in estimator of Renyi entropy (See Section~\ref{sec:eyd}) $H_{\renprm}(\hat {p})$ {is at most} $S_{\renprm}(\eiglb)-\eps^{\renprm}$. We can now invoke the following inequality from~\cite{HardyLP29}, and~\cite[Equation~(1), Chapter 1]{MarshallOA11}. 
\begin{lemma}
If $f$ is a concave function, and $x_1, \ldots,x_m$ majorizes $y_1, \ldots, y_m$, then 
\[
\sum_{i=1}^{m} f(x_i)\le \sum_{i=1}^{m} f(y_i).
\]
\end{lemma}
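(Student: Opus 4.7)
The plan is to deduce this standard Hardy--Littlewood--P\'olya inequality from the doubly-stochastic-matrix characterization of majorization. I would first invoke the classical fact (a textbook result, see~\cite{MarshallOA11}) that $x_1,\ldots,x_m$ majorizes $y_1,\ldots,y_m$ if and only if there exists a doubly stochastic matrix $D \in \RR^{m\times m}$ (nonnegative entries with every row sum and every column sum equal to $1$) such that $y_i = \sum_{j=1}^m D_{ij}\,x_j$ for all $i$. Given this representation, the proof collapses to a one-line Jensen argument.

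Concretely, since each row $(D_{i1},\ldots,D_{im})$ of $D$ is a probability vector and $f$ is concave, Jensen's inequality gives
\[
f(y_i) \;=\; f\!\Paren{\sum_{j=1}^m D_{ij}\, x_j} \;\ge\; \sum_{j=1}^m D_{ij}\, f(x_j)
\]
for every $i$. Summing over $i$ and switching the order of summation,
\[
\sum_{i=1}^m f(y_i) \;\ge\; \sum_{j=1}^m f(x_j) \sum_{i=1}^m D_{ij} \;=\; \sum_{j=1}^m f(x_j),
\]
where the last equality uses that the column sums of $D$ are all $1$. This is exactly the claimed inequality.

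As an alternative (and self-contained) route, one can bypass the doubly-stochastic representation via \emph{Robin Hood transfers} (T-transforms): whenever $x_i > x_j$, replacing the pair $(x_i, x_j)$ by $(x_i - t, x_j + t)$ for $t \in (0, (x_i - x_j)/2]$ only decreases the ``spread,'' and the two-point inequality $f(x_i - t) + f(x_j + t) \ge f(x_i) + f(x_j)$ follows immediately from concavity of $f$ applied around the midpoint. A standard constructive argument shows that any $y$ majorized by $x$ can be reached from $x$ by finitely many such transfers, and iterating the two-point inequality yields the result. The only genuine ``content'' in either route is the reduction to doubly stochastic matrices or to T-transforms; both are entirely classical, so no substantive obstacle is anticipated.
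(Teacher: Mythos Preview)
Your proof is correct; both routes (doubly stochastic + Jensen, and T-transforms) are standard and complete. The paper itself does not supply a proof of this lemma at all --- it simply cites it as the classical Hardy--Littlewood--P\'olya inequality from~\cite{HardyLP29} and~\cite[Chapter~1, Eq.~(1)]{MarshallOA11} --- so your argument is strictly more than what the paper provides.
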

Since $x^\renprm$ is a concave function, {and $\lamb/n$ majorizes the profile}, this implies that $S_\renprm(\lamb/\ns)$ is at most $H_\renprm(\hat {p})$. This proves a lower bound of $\Omega((\dims/\eps)^{1/\renprm})$ to estimate $\rentprmmst$ to $\pm\eps^{\renprm}$ in the present setting. We now provide an improved lower bound. Suppose $\mst$ is a density matrix with eigenvalues $\eiglb$. 
\begin{theorem}
%There is a constant $\eps_0$, such that for $\eps<\eps_0$, and for all $\dims>f(\renprm, \eps_0)$, 
The EYD algorithm requires $\Omega(\dims^{1+1/\renprm}/\eps^{1/\renprm})$ copies to estimate $\rentprmmst$ to $\pm\eps$. 
\label{thm:lb-non-int-small}
\end{theorem}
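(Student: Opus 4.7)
The plan is to show that the EYD estimator of $S_\alpha(\eiglb)$ underestimates the true R\'enyi entropy by more than $\eps$ with constant probability unless $n$ exceeds the stated threshold. First, expanding $M_\alpha(\eiglb) = (1-\delta)^\alpha + d\cdot(\delta/d)^\alpha$ with $\delta = \eps/d^{1/\alpha-1}$ yields $S_\alpha(\eiglb) = \eps^\alpha/(1-\alpha) + o(\eps^\alpha)$ to leading order. So the target is to produce a deficit in $S_\alpha(\lamb/n)$ of magnitude at least $\eps$, which (for $\alpha<1$, $\eps<1$) is much smaller than $\eps^\alpha$ and hence in principle a weaker quantitative demand than the classical $\eps^\alpha$-level bound of~\cite{AcharyaOST17}; the gain here has to come from the extra ``majorization'' bias incurred by RSK.

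Next I would exploit the label-invariance of the Schur--Weyl distribution to reduce to a uniform sub-problem. Conditioning on $N_1$, the count of the dominant symbol, the remaining $m = n - N_1 \approx n\delta$ samples are i.i.d.\ uniform on the $d$-element alphabet $\{2,\ldots,d+1\}$. Using the non-decreasing-subsequence characterization of WSS from Section~\ref{sec:wss}, the tail rows $(\lambi 2, \lambi 3,\ldots)$ of $\lamb$ are, up to an $O(1)$ additive correction coming from non-1 symbols appearing after the last 1 (which is negligible in our regime since $m \gg 1$), the Young-tableau shape $\mu$ of this uniform sub-sequence.

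The key step is to invoke the strengthened concentration in Theorem~\ref{thm:emp-tv-unif} applied to $\mu$: with probability at least $1-\exp(-cd)$, the rescaled tail $\mu/m$ is $L_1$-far from the uniform distribution on $d$ symbols by $\Omega(d/\sqrt{m})$. This is the genuinely quantum strengthening over the classical empirical distribution, which would deviate only by $\sqrt{d/m}$. Since $M_\alpha$ is Schur-concave for $\alpha<1$, this TV deviation produces an $M_\alpha$ deficit on the tail; combining a second-order Taylor expansion of $M_\alpha$ around uniform with the sorted/Tracy--Widom-style fluctuation profile of the EYD output for a uniform source yields a multiplicative deficit in $M_\alpha(\lamb/n)$ that, after scaling by the $\delta^\alpha$ prefactor and translating through $\log(\cdot)/(1-\alpha)$, gives an $S_\alpha$ deficit exceeding $\eps$ precisely when $n$ is below the claimed threshold $d^{1+1/\alpha}/\eps^{1/\alpha}$.

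The main obstacle is the quantitative conversion from the $L_1$ lower bound on $\|\mu/m - u\|_1$ to an $M_\alpha$ deficit on the full spectrum $\lamb/n$: a blunt Cauchy--Schwarz reduction to $\|\mu/m - u\|_2^2$ loses a factor of $d$ and only gives $\Omega(d^{1+1/\alpha}/\eps^{2-\alpha})$, so one has to exploit the specific sorted structure of the EYD fluctuations (which, unlike the classical empirical, spread the $L_1$ deviation across all $d$ rows in a correlated, Tracy--Widom-like pattern) in order to retain the extra factor of $d$ and reach the advertised $\eps^{1/\alpha}$ denominator. A secondary technical wrinkle is controlling the coupling between the tail of $\lamb$ and the uniform sub-tableau $\mu$, but the additive $O(1)$ slack there is subdominant whenever $m = n\delta \gg 1$, which is guaranteed in the target regime.
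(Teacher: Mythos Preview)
Your overall architecture is the same as the paper's: work with the spiked state $\eiglb$ from~\eqref{dist-lower}, condition on the event that $\lambi{1}$ equals the count of the dominant symbol so that the remaining rows are the EYD tableau of $M$ uniform samples on $[d]$, then invoke Theorem~\ref{thm:emp-tv-unif} to lower-bound the tail TV deviation and convert that to an $M_\alpha$ deficit.

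The gap is in your choice of the spike parameter. Taking $\delta = \eps/d^{1/\alpha-1}$ forces you to extract an $S_\alpha$-deficit of size $\eps$ out of a total tail contribution of size $\eps^\alpha$, i.e.\ a \emph{multiplicative} deficit of order $\eps^{1-\alpha}$ in the tail's $M_\alpha$. As you correctly compute, the TV-to-$M_\alpha$ conversion is quadratic ($\gamma \mapsto \gamma^2$), and in your regime $\gamma = \Theta(d/\sqrt{M})$ is not a constant unless $n \lesssim d^{1+1/\alpha}/\eps$, which is the wrong $\eps$-scaling. This is what drives you toward the Tracy--Widom heuristic, which you leave unproved.

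The paper sidesteps this entirely by choosing the spike parameter differently. Equivalently, set $\delta = \eps^{1/\alpha}/d^{1/\alpha-1}$ (in the paper this appears as: prove the result for additive error $\eps^\alpha$ with the instance~\eqref{dist-lower} as written, then substitute $\eps\mapsto\eps^{1/\alpha}$ at the very end). With this choice $S_\alpha(\eiglb) = \Theta(\eps)$, and $M\approx n\delta \lesssim d^2$ exactly in the target range $n\lesssim d^{1+1/\alpha}/\eps^{1/\alpha}$. Theorem~\ref{thm:emp-tv-unif} then gives a \emph{constant} TV deviation $\beta$ on the tail, and the blunt quadratic conversion (Lemma~\ref{lem:tv-renyi-small}: $\sum_i p_i^\alpha \le (1-\alpha(1-\alpha)\gamma^2)\,d^{1-\alpha}$ whenever $\|p-u\|_1=2\gamma$) already yields a constant-factor multiplicative deficit in the tail $M_\alpha$, hence an additive $\Theta(\eps)$ deficit in $S_\alpha(\lamb/n)$. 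No fine Tracy--Widom structure is needed; the extra factor of $d$ over the classical bound is already fully captured by the $d/\sqrt{M}$ (versus $\sqrt{d/M}$) TV deviation in Theorem~\ref{thm:emp-tv-unif}.

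On your secondary wrinkle: the paper does not carry an $O(1)$ coupling slack. It conditions on the event $\cE_1'$ that every suffix $X_i,\ldots,X_n$ contains more $1$'s than non-$1$'s (a biased-random-walk event of probability $\ge 0.98$ when $\delta$ is small); on $\cE_1'$ the tail rows $(\lambi{2},\ldots,\lambi{d+1})$ are \emph{exactly} the RSK shape of the $M$ uniform samples, so the reduction is exact rather than approximate.
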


We will consider the LIS interpretation of the Young tableaux, generated by a distribution over $[\dims+1]$ that assigns probability $\eiglbi{i}$ to symbol $i$. Recall that the ordering of eigenvalues does not affect the output distribution of Young tableaux. 

Let $\lamb$ be the Young tableaux generated from $\ns$ independent samples. 
{Fix $\beta > 0$}, and consider the following events. 

\noindent {\bf $\cE_1:$} $\lambi1$ is equal to the number of occurrences of symbol 1. 

\noindent {\bf $\cE_2:$} If $M=\ns-\lambi1=\sum_{i=2}^{\dims+1} \lambi{i}$
then
\begin{align}  
\sum_{i=2}^{\dims +1}\absv{\frac{\lambi{i}}{M}-{\frac{1}{\dims}}}>2\beta.
\end{align} 

\noindent {\bf $\cE_3:$} 
 \[\frac{M^\renprm}{\ns^{\renprm}}\dims^{1-\renprm}< \eps^{\renprm}\left(1+\beta^2\frac{\renprm(1-\renprm)}2\right).\]

The following lemma, proved in Appendix~\ref{sec:app-conditioning}, states that these events occur unless $\ns$ is large. 
\begin{lemma} 
%Suppose $\dims$ is large enough, 
{
If $\ns = \Omega(\dims^{1/\renprm-1}/\eps)$ and
$\ns=O(\dims^{1+1/\renprm}/\eps)$, then with probability at least 0.9, $\cE_1, \cE_2, \cE_3$ all occur.}
\label{lem:conditioning}
\end{lemma}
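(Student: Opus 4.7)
The plan is to establish each of $\cE_1, \cE_2, \cE_3$ with probability at least $1 - 1/30$ and then union bound. Let $N_1$ be the number of occurrences of symbol $1$ in $X^\ns$, which is distributed as $\mathrm{Bin}(\ns, \eiglbi{1})$, and let $M_{NO} = \ns - N_1$. Conditional on $M_{NO}$ and on the subset of positions where the non-heavy samples fall, those samples are i.i.d.\ uniform over $\{2,\ldots,\dims+1\}$, since each non-heavy symbol has the same probability $\eps\dims^{-1/\renprm}$.

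For $\cE_3$: the $N_1$ copies of symbol~$1$ form a non-decreasing subsequence, so $\lambi{1}\ge N_1$, hence $M=\ns-\lambi{1}\le M_{NO}$. The mean of $M_{NO}$ is $\mu \ed \ns\eps\dims^{1-1/\renprm}$, which the hypothesis $\ns = \Omega(\dims^{1/\renprm-1}/\eps)$ makes at least a sufficiently large constant. A Chernoff bound then yields $M_{NO} \le (1+\delta)\mu$ with high probability for any fixed small $\delta>0$; choosing $\delta$ so that $(1+\delta)^{\renprm}\le 1+\beta^2\renprm(1-\renprm)/2$ and multiplying by $\dims^{1-\renprm}/\ns^{\renprm}$ yields $\cE_3$.

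For $\cE_1$ and $\cE_2$: by the label-invariance of the Young-diagram distribution (Section~\ref{sec:wss}), we may relabel symbol~$1$ as the largest letter, whereupon $\lambi{1}=\max_j\{L^{NO}_j+B_j\}$, with $L^{NO}_j$ the LIS of the non-heavy samples at positions $\le j$ and $B_j$ the number of $1$s at positions $>j$. Thus $\cE_1=\{\lambi{1}=N_1\}$ is equivalent to $L^{NO}_j\le N_1(j)$ for every $j$, where $N_1(j)$ is the count of $1$s at positions $\le j$. Concentration of $N_1(j)$ around $j\eiglbi{1}$ (e.g.\ Azuma on the natural martingale) combined with the classical bound $L^{NO}_j = O(\sqrt{M_{NO}(j)}+M_{NO}(j)/\dims)$ for the LIS of i.i.d.\ uniform samples from $[\dims]$ lets us verify this inequality uniformly in $j$ in the stipulated range; the "short prefix" regime, where $N_1(j)$ may be zero, is handled by a separate union bound exploiting that the first few samples are $1$s with high probability. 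Once $\cE_1$ holds, $M=M_{NO}$ and $(\lambi{2},\ldots,\lambi{\dims+1})$ is precisely the WSS output from $M$ i.i.d.\ uniform samples over $[\dims]$; the strengthened EYD total-variation lower bound (Theorem~\ref{thm:emp-tv-unif}) with sample size $M$ and tolerance $2\beta$ then gives $\cE_2$ with probability $1-\exp(-c\dims)$, valid since $M=O(\ns\eps\dims^{1-1/\renprm})=O(\dims^2)$ by $\cE_3$ and the upper hypothesis $\ns=O(\dims^{1+1/\renprm}/\eps)$. A union bound over $\cE_1^c,\cE_2^c,\cE_3^c$ completes the proof. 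The main obstacle is the uniform-in-$j$ verification of $\cE_1$: it demands exact equality of $\lambi{1}$ and $N_1$ rather than just approximate domination, so the short-prefix analysis must be carried out quite carefully.
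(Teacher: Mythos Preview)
Your decomposition and your handling of $\cE_2$ and $\cE_3$ match the paper's: bound $M\le M_{NO}\sim\mathrm{Bin}(\ns,\eps\dims^{1-1/\renprm})$, use binomial concentration for $\cE_3$, and invoke Theorem~\ref{thm:emp-tv-unif} on the uniform tail for $\cE_2$ once $\cE_1$ is in hand.

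The difference is in $\cE_1$. Your reduction to ``$L^{NO}_j\le N_1(j)$ for every prefix $j$'' is correct, but you then reach for uniform-in-$j$ LIS bounds, Azuma on $N_1(j)$, and a separate short-prefix argument, which you yourself flag as delicate. The paper bypasses all of this with a one-line observation: since trivially $L^{NO}_j\le M_{NO}(j)=j-N_1(j)$, it suffices that $M_{NO}(j)\le N_1(j)$ for every $j$, i.e., that every prefix (equivalently, in the paper's labeling with $1$ the smallest symbol, every suffix) contains a majority of $1$'s. That is precisely the event that a biased $\pm1$ random walk with rightward probability $\eiglbi{1}>0.99$ stays strictly positive for all time, whose probability exceeds $0.98$ by the classical gambler's-ruin computation. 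So your LIS machinery is unnecessary, and the ``short-prefix regime'' you worry about dissolves into this single estimate. A further benefit is that the paper actually conditions on this stronger event $\cE_1'$ (majority of $1$'s in \emph{every} suffix), not merely on $\cE_1=\{\lambi{1}=N_1\}$, when arguing that $(\lambi{2},\ldots,\lambi{\dims+1})$ coincides with the Young diagram of the non-$1$ subsequence; your weaker conditioning on $\cE_1$ alone would need an additional argument to justify that step.
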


We now prove Theorem~\ref{thm:lb-non-int-small} assuming Lemma~\ref{lem:conditioning}.  
{By the reasoning before Theorem~\ref{thm:lb-non-int-small}, we may assume that $\ns = \Omega(\dims^{1/\renprm}/\eps)$, and thus $\ns$ is in the range assumed by Lemma~\ref{lem:conditioning}.}
We will use the following lemma, which states that if a distribution is far from the uniform distribution, its $\renprm$th moment is far from that of the uniform distribution. 
\begin{lemma}
\label{lem:tv-renyi-small}
Let $p$ be a distribution over $[\dims]$ such that $\sum_{i=1}^{\dims} |p(i)-\frac1\dims|=2\gamma$, then for $\renprm<1$,
\[
\sum_{i=1}^{\dims}p_i^{\renprm}\le\Paren{1-{\renprm}{(1-\renprm)}\cdot\gamma^2}\cdot\dims^{1-\renprm}.
\]
\end{lemma}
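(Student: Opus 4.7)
The plan is to second-order Taylor-expand $x\mapsto x^\alpha$ around $x=1/\dims$ and control the concavity deficit. Writing $p_i=1/\dims+\delta_i$, we have $\sum_i\delta_i=0$ and $\sum_i|\delta_i|=2\gamma$, and Taylor's theorem (integral form of the remainder) gives
\[
p_i^\alpha = \dims^{-\alpha}+\alpha\dims^{1-\alpha}\delta_i+\alpha(\alpha-1)R_i, \qquad R_i \ed \int_{1/\dims}^{p_i}(p_i-t)\,t^{\alpha-2}\,dt \ge 0,
\]
where the linear term cancels after summing. So it suffices to show $\sum_iR_i\gtrsim\dims^{1-\alpha}\gamma^2$ with an appropriate constant.

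Since $\alpha-2<0$, the integrand $t^{\alpha-2}$ on the integration interval is bounded below by $\max(p_i,1/\dims)^{\alpha-2}$, yielding $R_i\ge\tfrac12\max(p_i,1/\dims)^{\alpha-2}\delta_i^2$. I would then split indices by the sign of $\delta_i$: let $S_-=\{i:p_i\le 1/\dims\}$ and $S_+=\{i:p_i>1/\dims\}$, and note $\sum_{S_-}|\delta_i|=\sum_{S_+}|\delta_i|=\gamma$.

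On $S_-$ the weight is $\dims^{2-\alpha}$, and since $|S_-|\le\dims$, Cauchy--Schwarz gives $\sum_{S_-}\delta_i^2\ge\gamma^2/\dims$, contributing at least $\tfrac12\dims^{1-\alpha}\gamma^2$ to $\sum R_i$. On $S_+$ the weight is $p_i^{\alpha-2}$, and a weighted Cauchy--Schwarz gives $\gamma^2=\bigl(\sum_{S_+}|\delta_i|\bigr)^2\le\bigl(\sum_{S_+}p_i^{\alpha-2}\delta_i^2\bigr)\bigl(\sum_{S_+}p_i^{2-\alpha}\bigr)$. The key observation is that $2-\alpha>1$ and $p_i\le 1$ together imply $p_i^{2-\alpha}\le p_i$, so $\sum_{S_+}p_i^{2-\alpha}\le 1$ and the $S_+$-contribution to $\sum R_i$ is at least $\tfrac12\gamma^2$. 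Adding both contributions and plugging back in yields $\sum_i p_i^\alpha\le\dims^{1-\alpha}\bigl(1-\tfrac{\alpha(1-\alpha)}{2}\gamma^2\bigr)$.

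The main obstacle is sharpening the constant from $\alpha(1-\alpha)/2$ to the stated $\alpha(1-\alpha)$. The approach above is already strong enough for the application in Lemma~\ref{lem:conditioning}, which only requires some positive $\alpha$-dependent constant in the quadratic term. To match the exact stated constant, one can invoke the pairwise inequality $(1+u)^\alpha+(1-u)^\alpha\le 2-\alpha(1-\alpha)u^2$ for $|u|\le 1$ (which holds because the higher-order terms in the binomial expansion $2\sum_{k\ge 0}\binom{\alpha}{2k}u^{2k}$ have alternating-signed coefficients, all of which turn out to be nonpositive beyond $k=0$ for $\alpha\in(0,1)$) and combine it with a pairing/rearrangement argument that matches positive and negative deviations $\delta_i$ of equal magnitude.
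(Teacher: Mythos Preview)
Your Taylor-expansion argument with integral remainder is correct and cleanly yields the bound with constant $\alpha(1-\alpha)/2$; the Cauchy--Schwarz steps on $S_-$ and $S_+$ are both valid (in fact the $S_-$ contribution alone already gives $\sum_i R_i\ge\tfrac12\dims^{1-\alpha}\gamma^2$, while the $S_+$ term $\tfrac12\gamma^2$ is a lower-order correction since $\dims^{1-\alpha}\ge1$). This is a genuinely different route from the paper's. The paper instead first uses concavity of $x\mapsto x^\alpha$ to replace $p$ by the two-step distribution $\bar p$ obtained by averaging $p$ over $S_+=\{i:p_i>1/\dims\}$ and over its complement; this can only increase $\sum_i p_i^\alpha$ and preserves the total-variation distance $\gamma$. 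On the two-step distribution the sum is written in closed form and bounded via the truncated binomial series $(1-y)^\beta<1-\beta y-\tfrac{\beta(1-\beta)}{2}y^2$ applied to each of the two blocks; since both block sizes satisfy $x+\gamma\le1$ and $1-x\le1$, the two half-constants add to give the full $\alpha(1-\alpha)$. What the reduction buys is precisely this: after averaging there are only two probability values, so the two second-order contributions combine without loss.

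Your proposed fix via the pairwise inequality $(1+u)^\alpha+(1-u)^\alpha\le 2-\alpha(1-\alpha)u^2$ together with a ``pairing/rearrangement'' argument is not justified as written. After normalizing $p_i=\dims^{-1}(1+\dims\delta_i)$, the pairwise inequality needs $|\dims\delta_i|\le1$, which holds for negative deviations (since $p_i\ge0$) but can fail badly for positive ones (a single $p_i$ near $1$ gives $\dims\delta_i\approx\dims-1$). There is also no linearity allowing an arbitrary zero-sum deviation vector to be decomposed into matched $\pm$-pairs and the bounds summed. The clean way to recover the stated constant is exactly the paper's concavity reduction to two values; once there, your pairwise inequality (equivalently, the paper's binomial truncation) applies directly. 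As you note, the weaker constant already suffices for the downstream application.
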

\noindent The lemma is proved in Appendix~\ref{sec:app-far-from-uniform}.
We can now consider $\lambi{i}/M$ for $i=2\upto\dims+1$ as a distribution over $\dims$ elements. Applying the lemma, when $\cE_2$ holds, 
\[
\sum_{i=2}^{\dims+1} \Paren{\frac{\lambi{i}}{M}}^{\renprm} < (1 - {\renprm(1-\renprm)}\beta^2)\dims^{1-\renprm}.
\]
When this happens, 
\begin{align}
\sum_{i=1}^{\dims+1} \Paren{\frac{\lambi{i}}{\ns}}^{\renprm} = & 
\Paren{\frac{\lambi{1}}{\ns}}^{\renprm} + \Paren{\frac{M}{\ns}}^{\renprm}\cdot\Paren{\sum_{i=2}^{\dims+1} \Paren{\frac{\lambi{i}}{M}}^{\renprm}}\nonumber\\
< & 1 + \Paren{\frac{M}{\ns}}^{\renprm}\cdot (1 - {\renprm(1-\renprm)}{\beta^2})\dims^{1-\renprm}\nonumber\\
\le & 1 + \eps^{\renprm}\Paren{1+\beta^2\frac{\renprm(1-\renprm)}{2}}(1 - \beta^2{\renprm(1-\renprm)})\label{eqn:alpha-small-step}\\
\le & 1  + \eps^{\renprm}\Paren{1-\beta^2\frac{\renprm(1-\renprm)}{2}}\label{eqn:elb},
\end{align}
where~\eqref{eqn:alpha-small-step} follows from $\cE_3$. 

\noindent We now relate this to $\Map{\renprm}{\eiglb}$. 
\[
\Map{\renprm}{\eiglb} = \Paren{1-\frac{\eps}{\dims^{\frac1{\renprm}-1}}}^{\renprm} + \dims \cdot \Paren{\frac{\eps}{\dims^{\frac1\renprm}}}^{\renprm}.
\]
For any $x<0.5$, $(1-x)^\renprm>1-2\renprm x$, and therefore, for $\eps<0.1$, 
\begin{align}
\Map{\renprm}{\eiglb} \ge 1- 2\renprm\frac{\eps}{\dims^{\frac1{\renprm}-1}} +\eps^\renprm.\label{eqn:lb-mmt-small}
\end{align}

%Suppose $\dims^{1/\renprm-1}>\frac{10}{\beta^2\renprm(1-\renprm)}$, then 
{Suppose $\dims^{1/\renprm-1}>\frac{10}{\beta^2(1-\renprm)}$}. Then 
\begin{align}
\Map{\renprm}{\eiglb} \ge 1+ \eps^{\renprm}\cdot \Paren{1-\beta^2\frac{\renprm(1-\renprm)}{5}}\label{eqn:mlb}
\end{align}
Now simply comparing~\eqref{eqn:mlb}, and~\eqref{eqn:mlb}, and {using the fact that $\log(1+x) = x + \Theta(x^2)$} for $|x|<1/2$, we note that the EYD R\'enyi entropy is a factor $\Omega(\eps^{\renprm})$ away from the true entropy. Therefore,  by Lemma~\ref{lem:conditioning}, unless $\ns=\Omega(\dims^{1+1/\renprm}/\eps)$ we cannot estimate entropy up to $\pm \eps^\renprm$. Substituting $\eps$ with $\eps^{1/\renprm}$ gives the theorem.

\section*{Acknowledgements}

The authors would like to thank John Wright for detailed comments on a manuscript of the paper. In particular, he pointed out a mistake in the proof of the copy complexity lower bound for integral~$\renprm$, which has been fixed in this version. They also thank John Wright for sharing their results on von Neumann entropy estimation.  

\bibliographystyle{alpha}

\bibliography{abr,masterref}

\appendix

\section{Proof of Lemma~\ref{lem:lengthpartition}}
\label{app:lemma-partition}

Fix $j \in \{0 \upto \renprm-1\}$. First we prove that  each cycle in $\sigma = \sigma_1 \circ \sigma_2$ contains an element in $\{j+1,\cdots,\alpha \}$.
Let \begin{align*}
S = \{j+1,\cdots,\alpha \}, \qquad F_1 = \{ \alpha+1,\cdots,\alpha+j\}, \qquad  \text{and } F_2 = \{1, \cdots, j\}.
\end{align*}
$F_1$ is fixed under $\sigma_1$, and $F_2$ is fixed under $\sigma_2$. Now consider any cycle in $\sigma$, and pick an element $k$ in the cycle. If $k \in S$, then the claim is true. Otherwise: \\
\underline{Case 1:} $k \in F_2$. \\
Let $n_k$ be the largest integer such that $\sigma_1(k), \sigma_1^2(k), \dots, \sigma_1^{n_k} (k) \in F_2$. (If $\sigma_1(k) \notin F_2$, define $n_k = 0$.) Note that, since $\sigma_1$ performs a cycle on $F_2 \cup S$, there must exist $m$ such that $\sigma_1^m (k) \in S$. Hence, $n_k$ is finite. Then,
\begin{align*}
\sigma^{n_k+1} (k) = (\sigma_1 \circ \sigma_2)^{n_k+1} (k) 
\stackrel{\text{(a)}} = (\sigma_1 \circ \sigma_2)^{n_k} (\sigma_1 (k))  
\stackrel{\text{(b)}} = \sigma_1^{n_k+1} (k) \stackrel{\text{(c)}} \in S,
\end{align*}
where (a) follows from the definition of $n_k$ and the fact that points in $F_2$ are fixed under $\sigma_2$, (b) follows similarly (by induction), and (c) follows from the definition of $\sigma_1$ and $n_k$. \\
\underline{Case 2:} $k \in F_1$. \\
Let $n_k$ be the largest integer such that $\sigma_2(k), \sigma_2^2(k), \dots, \sigma_2^{n_k} (k) \in F_1$. (If $\sigma_2(k) \notin F_1$, define $n_k = 0$.) Note that, since $\sigma_2$ performs a cycle on $F_1 \cup S$, there must exist $m$ such that $\sigma_2^m (k) \in S$. Hence, $n_k$ is finite. Then,
\begin{align*}
\sigma^{n_k+1} (k) = (\sigma_1 \circ \sigma_2)^{n_k+1} (k) 
\stackrel{\text{(a)}} = (\sigma_1 \circ \sigma_2)^{n_k} (\sigma_2 (k))  
\stackrel{\text{(b)}} = \sigma_1 \circ \sigma_2^{n_k+1} (k),
\end{align*}
where (a) follows from the definition of $n_k$ and the fact that points in $F_1$ are fixed under $\sigma_1$, and (b) follows similarly (by induction). Now, by definition of $\sigma_2$ and $n_k$, $\sigma_2^{n_k+1} (k) \in S$. Then, by definition of $\sigma_1$, $ \sigma_1 \circ \sigma_2^{n_k+1} (k) \in F_2 \cup S$. If $ \sigma_1 \circ \sigma_2^{n_k+1} (k) \in S$, then the claim is true. Finally, if $ \sigma_1 \circ \sigma_2^{n_k+1} (k) \in F_2$, this falls back to case 1 which has been resolved.

Since $|\{j+1,\cdots,\alpha \}|=\alpha-j$ it follows that  $\ell(\mu)\le\alpha-j$.

\section{Proof of Lemma~\ref{lem:majorization}} \label{sec:app-majorization}
\label{app:lemma-majorization}
Let  $\ell= \ell(\mu_1)= \ell(\mu_2)$, $\mu_1= (x_1, \dots, x_{\ell})$, and $\mu_2 = (y_1, \dots, y_{\ell})$. Then
\begin{align*}
\Mapeigv{\mu_1} = \prod_{i=1}^\ell \Mapeigv{x_i} = \prod_{i=1}^\ell  \sum_{j=1}^{\dims} \eig_{j}^{x_i} = \sum_{ j_1, \dots, j_{\ell} \in [\dims]^\ell} \eta_{j_1}^{x_1} \dots \eta_{j_\ell}^{x_\ell}.
\end{align*}
We define an equivalence relation on $[\dims]^{\ell}$ as follows: $(j_1,\dots,j_\ell) \sim (\hat{j}_1,\dots,\hat{j}_{\ell})$ if there exists a permutation $\sigma$ on $[\ell]$ such that $\sigma (j_1,\dots,j_{\ell}) = (\hat{j}_1,\dots,\hat{j}_{\ell})$. We denote by $\mathcal{E}$ the set of equivalence classes created by this relation, and for each $E \in \mathcal{E}$ we pick a representative element and denote it by $(j_1,\dots,j_{\ell})_E$. For each $E$, define $g_E: E \rightarrow [\ell!]$ as 
\begin{align*}
g_E (j_1, \dots, j_{\ell}) = \Big| \left\lbrace \sigma: \sigma \big((j_1,\dots,j_{\ell})_E \big) = (j_1,\dots,j_{\ell}) \right\rbrace \Big|, \quad (j_1,\dots,j_{\ell}) \in E.
\end{align*}
Now note that, for each $E$, $g_E(.)$ is a constant function. Indeed, if $(j_1,\dots,j_{\ell})$ and $(\hat{j}_1,\dots,\hat{j}_{\ell})$ belong to $E$, then there exists $\sigma_1$ such that $\sigma_1(j_1, \dots, j_{\ell}) = (\hat{j}_1,\dots,\hat{j}_{\ell})$. Therefore if $\sigma ((j_1,\dots,j_{\ell})_E) = (j_1,\dots,j_{\ell})$, then $\sigma_1 \circ \sigma ((j_1,\dots,j_{\ell})_E) = (\hat{j}_1,\dots,{j}_{\ell})$. Similarly, if  $\sigma ((j_1,\dots,j_{\ell})_E) = (\hat{j}_1,\dots,\hat{j}_{\ell})$, then $\sigma_1^{(-1)} \circ \sigma ((j_1,\dots,j_{\ell})_E) = (j_1,\dots,j_{\ell})$. So define $g: \mathcal{E} \rightarrow [\ell!]$ as $g(E) = g_E( (j_1,\dots,j_{\ell})_E).$ Now consider
\begin{align*}
\Mapeigv{\mu_1} & = \sum_{ j_1, \dots, j_{\ell} \in [\dims]^\ell} \eta_{j_1}^{x_1} \dots \eta_{j_{\ell}}^{x_{\ell}} \\
& = \sum_{E \in \mathcal{E}} \sum_{ (j_1,\dots,j_{\ell}) \in E} \eta_{j_1}^{x_1} \dots \eta_{j_{\ell}}^{x_{\ell}} \\
& = \sum_{E \in \mathcal{E}} \frac{1}{g(E)} \sum_{ \sigma} \eta_{\sigma(j_{1,E})}^{x_1} \dots \eta_{\sigma(j_{\ell,E})}^{x_{\ell}} \\
& \geq \sum_{E \in \mathcal{E}} \frac{1}{g(E)} \sum_{ \sigma} \eta_{\sigma(j_{1,E})}^{y_1} \dots \eta_{\sigma(j_{\ell,E})}^{y_{\ell}} \\
& = \Mapeigv{\mu_2}, 
\end{align*}
where the inequality follows from Muirhead's theorem~\cite{muirhead1902}\cite[p. 125]{MarshallOA11}.

\newest{\section{Proof of Lemma~\ref{lem:falling}}
\label{app:lemma-falling}
Let $|\mu|=qd+r$, where $0<r<d$, and $q$ are non-negative integers. 
We will show that of all $\mu$ with $|\mu|=qd+r$ and $\ell(\mu)\le\dims$, the tableaux that has $q$ columns with $\dims$ boxes, and one last column with $r$ boxes, minimizes $d^{\overline{\mu}}$. 
Toward this end consider a tableaux that has at least two non-empty columns that have less than $\dims$ boxes in them. Then we can move a box from the last row with length $\mu_1$, and move it to the end of the first column that does not have length equal to $\dims$. This operation moves a box to the left and below, thereby decreasing the value of $c(\square)$ for it.}

\newest{We now assume that the partition $\mu$ has $q$ columns with $\dims$ boxes and one column with $r$ boxes. For this partition $\mu$, by Definition~\ref{def:falling},
\begin{align}
\dims^{\overline{\mu}} = & (d+q)^{\underline{r}} \prod_{j=0}^{q-1}(\dims+j)^{\underline d} \nonumber\\
\ge & (\dims!)^q\cdot (\dims)^{\underline{r}}.\label{eqn:falling}
\end{align}
We will show that for any integer $0\le t\le d$, 
\begin{align}
d^{\underline{t}}\ge\Paren{\frac \dims e}^t.\label{eqn:fall-bound}
\end{align}
Plugging this bound in~\eqref{eqn:falling}, and noting that $d! = d^{\underline{d}}$, we obtain, 
\begin{align}
\dims^{\overline{\mu}} \ge \Paren{\Paren{\frac \dims e}^d}^{q}\cdot \Paren{\frac{\dims}{e}}^r =\Paren{\frac \dims e}^{|\mu|}
\end{align}
We now prove~\eqref{eqn:fall-bound}. Let 
\[
f(t) = \frac{d^{\underline{t}}}{\Paren{\frac \dims e}^t}.
\]
Then for any $t \le d$, 
\[
\frac{f(t+1)}{f(t)} = \frac{(d-t)}{(d/e)},
\]
and this ratio is monotonically decreasing with $t$. Therefore, the smallest value of $f(t)$ occurs at either $t = 0$ or $t = d$. At $t=0$,~\eqref{eqn:fall-bound} is true since both sides are 1, and at $t=d$, we need to show that
\[
d!>\Paren{\frac de}^d,
\] 
which follows from Stirling's approximation.
}

\section{Proof of Lemma~\ref{lem:opt-problem}}
\label{sec:app-bias}

We will first show that at the maxima, there can be at most three distinct values that the $x_i$'s can take, of which at most one is positive. 

\noindent Consider $x_i>0$ and $x_j>0$. Then, by the concavity of logarithm, if we replace both of them by $(x_i+x_j)/2$ the objective value increases. The constraints, on the other hand, are still valid.

\noindent We now consider the negative values. Writing the Lagrangian, 
\begin{align}
\cL\Paren{x_1\upto x_\dims, \gamma_1, \gamma_2} = \sum_{i=1}^\dims \Paren{x_i \log \frac1{x_i^2}} + \gamma_1\Paren{4-\sum_i x_i^2} +\gamma_2\Paren{\sum_i x_i}.
\end{align}
Differentiating with respect to $x_i$, 
\[
\frac{\partial \cL\Paren{x_1\upto x_\dims, \gamma_1, \gamma_2}}{\partial x_i}
= -\log \frac1{x_i^2} - 2 -2\gamma_1 x_i +\gamma_2 = 0,
\]
and therefore, 
\[
2\gamma_1 x_i + \log\frac1{x_i^2} -\gamma_2 -2=0
\]
This function is strictly convex on $(-\infty,0)$, and therefore has at most two roots. 

Therefore, there are at most three distinct values that $x_i$'s can take, and at most one of them is positive. Let $y_1>0>-y_2>-y_3$ be these values, and let $d_1, d_2, d_3$ be the multiplicities of these. 
Therefore, the optimization problem can be written as:
\begin{align}
{\bf P3:}\qquad &  \text{maximize}\ \ d_1 y_1 \log \frac1{y_1^2}-d_2 y_2 \log \frac1{y_2^2}-d_3 y_3 \log \frac1{y_3^2} \nonumber\\
&  \text{subject to}\ 
d_1 y_1 -d_2y_2-d_3y_3 =0, \ d_1 y_1^2+d_2 y_2^2+d_3 y_3^2\le 4, \text{ and }\ d_1+d_2+d_3 \le \dims.\nonumber
\end{align}
Substituting $ d_1 y_1 =d_2y_2+d_3y_3$ the objective becomes
\begin{align}
\Paren{d_2y_2+d_3y_3}\log \frac{1}{y_1^2}-d_2 y_2 \log \frac1{y_2^2}-d_3 y_3 \log \frac1{y_3^2} 
= d_2 y_2 \log \frac{y_2^2}{y_1^2}+d_3 y_3 \log \frac{y_3^2}{y_1^2}.\nonumber
\end{align}
Since $d_1y_1\ge d_2y_2$, we have $y_2/y_1 \le d_1/d_2$, and 
\begin{align}
d_2 y_2 \log \frac{y_2^2}{y_1^2} 
\le 
2d_2 y_2 \log\frac{d_1}{d_2} = 2 \sqrt {d_1}\cdot\Paren{\sqrt{\frac{d_2}{d_1}}\log\frac{d_1}{d_2}}\cdot  \Paren{\sqrt {d_2}y_2}.\nonumber
\end{align}
Since $d_2y_2^2\le4$,  $\sqrt {d_2}y_2<2$. Moreover, for any $z>0$, 
\[
z\log \frac1{z^2} = 2z\log\frac1z \le \frac 2e.
\]
This shows that 
\[
d_2 y_2 \log \frac{y_2^2}{y_1^2} \le \frac 8e \sqrt {d_1} \le \frac 8e\sqrt \dims.
\]
By a similar argument, 
\[
d_3 y_3 \log \frac{y_3^2}{y_1^2}  \le \frac 8e\sqrt \dims.
\]
Summing up the two terms bounds the objective of {\bf P3}, and plugging in~\eqref{eqn:opt-problem}, we obtain 
\[
\absv{\sum_{i=1}^{\dims} (\eigvi{i} - \expectation{\eighati{i}})\log\frac{1}{\eigvi{i}} } \le \frac {16}e\sqrt \dims\cdot\sqrt{\frac\dims\ns} = \frac {16}e \frac{\dims}{\sqrt\ns}.
\]

\section{Proof of Lemma~\ref{lem:lipschitz}}
\label{sec:app-lipschitz-entropy}
\begin{proof}
In the classical setting, changing one element can change at most two probabilities of the empirical distribution, using which one can bound the variance of the empirical entropy estimator~\cite{paninski2003}. However, in our case, changing one symbol can change the length of more than one of the rows of the Young tableaux. \cite[Prop.~2.2]{OW17} showed that the cumulative row sums are bounded (see also \cite{BhatnagarL12}). In particular, \textit{for any $j=1, \ldots, d$,}
\[
\absv{\sum_{i=1}^{j} \lambi{i} - \sum_{i=1}^{j} \lamb'_i} \le1.
\]
Suppose $\Delta_i \ed\lambi{i}'-\lamb_i$, then for all $j=1,\ldots,\dims$,
%\todo{Aaron swapped the role of $\lamb$ and $\lamb'$ in the def of $\Delta_i$.}
\begin{align}
-1\le \sum_{i=1}^{j} \Delta_i \le1.\label{eqn:cum-sum}
\end{align}
This also implies that the for each $i$, $-2\le \Delta_i\le 2$. This proves a bounded difference condition on $\lambi{i}$, which can be used to prove its concentration using McDiarmid's inequality (Lemma~\ref{lem:mcdiarmid}). 

Note that $\lambi{i}$ changes by at most two when one of the inputs changes, and hence $c=2$. This gives
\begin{align}
\probof{\absv{\lambi{i}-\expectation{\lambi{i}}}>t}\le 2\cdot e^{-\frac{t^2}{2n}}.
	\label{eqn:conc-lambi}
\end{align}

Without loss of generality assume that $\ell(\lamb)\ge\ell(\lamb')$, i.e., the number of rows in $\lamb$ is at least the number of rows in $\lamb'$. 
%\todo{Aaron changed $\eigv$ to $\lamb'$ here}

By the Taylor series, for any $-x\le \delta\le x$,
\begin{align}
(x+\delta)\log(x+\delta) = x \log x+ \delta(1+\log x) + \sum_{j=2}^{\infty} \frac{\delta^j}{(j-1)j}\frac{(-1)^j}{x^{j-1}}.
\end{align}

Let $f(x) = x\log x$. Then
\begin{align}
&\empents{\lamb'}-\empents{\lamb}\nonumber\\
=&  \sum_{i=1}^{\ell(\lamb)}\frac{\lamb'_i}{\ns}\log\frac{\ns}{\lamb'_i}- \frac{\lambi{i}}{\ns}\log\frac{\ns}{\lambi{i}}  \nonumber\\
=& \sum_{i=1}^{\ell(\lamb)}   - f\Paren{\frac{\lambi{i}+\Delta_i}{\ns}} + f\Paren{\frac{\lambi{i}}{\ns}}\nonumber\\
=& \sum_{\lambi{i}>1} - \Brack{\frac{\Delta_{i}}{\ns}\left(1+\log\frac{\lambi{i}}{\ns}\right)+\frac1{\ns}\sum_{j=2}^{\infty} \Paren{\frac{\Delta_i}{\lambi{i}}}^{j-1}\frac{\Delta_i (-1)^j}{(j-1)j}} 
+ \sum_{\lambi{i}=1} \Brack{\frac{1+\Delta_i}{\ns}\log\frac{\ns}{1+\Delta_i} - \frac1{\ns}\log\ns}. \nonumber
\end{align}
%\todo{Aaron fixed some signs and included the $(-1)^j$ in this set of equations.}

We now consider the terms separately, and prove the following series of (simple) claims. 
\begin{enumerate}
\item If~\eqref{eqn:cum-sum} holds, and $\lambda_i$'s are non-increasing, then 
\[
\absv{\sum_{\lambi{i}>1} \Brack{\frac{\Delta_{i}}{\ns}\Paren{1+\log\frac{\lambi{i}}{\ns}}}}
\le \absv{\sum_{\lambi{i}>1} {\frac{\Delta_{i}}{\ns}}}+\absv{\sum_{\lambi{i}>1} {\frac{\Delta_{i}}{\ns}\log\frac{\lambi{i}}{\ns}}}  \le \frac1\ns + \frac2{\ns}\log \frac{n}{2}.
\]

\noindent\textit{Proof.} The first term is a direct consequence of~\eqref{eqn:cum-sum}. The second term follows from the following lemma.
\begin{lemma}
Let $x_1 \ge x_2 \cdots \ge x_m\ge 0$ be real numbers. Further, let $\Delta_1, \ldots, \Delta_m$ satisfy $-1\le\sum_{i=1}^j \Delta_i \le 1$. Then $\sum_{i=1}^m \Delta_i x_i \le x_1$. 
\end{lemma}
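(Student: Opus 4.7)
The plan is to prove this by Abel's summation by parts, exploiting monotonicity of the $x_i$'s together with the uniform bound on partial sums of the $\Delta_i$'s. Let me set $S_j \ed \sum_{i=1}^{j}\Delta_i$, so that $\Delta_i = S_i - S_{i-1}$ (with the convention $S_0 = 0$), and by hypothesis $|S_j| \le 1$ for every $j$.

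First, I would rewrite the target sum via summation by parts:
\[
\sum_{i=1}^{m}\Delta_i x_i = \sum_{i=1}^{m}(S_i - S_{i-1})x_i = S_m x_m + \sum_{i=1}^{m-1} S_i (x_i - x_{i+1}).
\]
This is the key algebraic identity; it transfers the differencing from the $x_i$'s (whose sign behavior we know) onto the cumulative $S_i$'s (whose magnitudes we know).

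Next I would upper-bound each piece. Since the $x_i$'s are non-increasing and non-negative, the differences $x_i - x_{i+1}$ are non-negative and $x_m \ge 0$. Thus $S_i \le 1$ gives $S_i(x_i - x_{i+1}) \le x_i - x_{i+1}$, and $S_m \le 1$ gives $S_m x_m \le x_m$. Summing:
\[
\sum_{i=1}^{m}\Delta_i x_i \;\le\; x_m + \sum_{i=1}^{m-1}(x_i - x_{i+1}) \;=\; x_1,
\]
where the last equality is a telescoping sum. This yields the claim.

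The argument is a one-line Abel summation once set up correctly, so there is no serious obstacle; the only care required is to check that (i) $S_m \le 1$ is indeed part of the hypothesis (taking $j = m$) and (ii) the monotonicity $x_i \ge x_{i+1}$ together with $x_m \ge 0$ is what makes the coefficients of $S_i$ non-negative, so that the upper bound $S_i \le 1$ may be applied term by term without the lower bound $-1 \le S_i$ being needed. (That lower bound would be needed only for a matching lower bound $\sum \Delta_i x_i \ge -x_1$, which is not claimed.)
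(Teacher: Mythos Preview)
Your proof is correct: the Abel summation identity $\sum_i \Delta_i x_i = S_m x_m + \sum_{i=1}^{m-1} S_i(x_i - x_{i+1})$ together with $S_i \le 1$, $x_i - x_{i+1} \ge 0$, and $x_m \ge 0$ immediately gives the telescoping bound $x_1$. The paper states this lemma without proof, treating it as a standard fact, so your argument is exactly the expected one and fills in the omitted details.
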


\item 
For $\lambda\in\NN$, let $I_{\lambda}=\{i:\lambi{i} = \lambda\}$ be the set of rows with length $\lambda$. Then, \[
\absv{\{i\in I_\lambda: \Delta_i\ne0\}}\le 4,
\]
\textit{i.e.}, there are at most four non-zero $\Delta_i$'s for each distinct value of $\lambi i$. 

\noindent\textit{Proof.} Let $\lambi{i}= \lambda$ for $i \in\{h_1, \ldots, h_2\}$.
However, since the $\lamb'_{i}$'s are non-increasing, $\Delta_i$'s are 
non-increasing
for all $i\in\{h_1,\ldots, h_2\}$. If more than four of these are non-zero, then there are at least three consecutive positive, or three consecutive negative $\Delta_i$'s. However, this would violate~\eqref{eqn:cum-sum}. 
\item
There are at most $\sqrt{2\ns}$ non-empty $I_{\lambda}$'s.

\noindent\textit{Proof.} The sum of row lengths equals $\ns$. Therefore, the maximum number of distinct row lengths is the largest value of $j$ such that $\sum_{i=1}^{j} i = j(j+1)/2$ is at most $\ns$. This proves that $j$ is at most $\sqrt {2\ns}$. 
\item
For $\lambi i\ge2$, 
\[
\absv{\sum_{j=2}^{\infty} \Paren{\frac{\Delta_i}{\lambi{i}}}^{j-1}\frac{\Delta_i(-1)^j}{(j-1)j}} \le 2.
\]

\noindent\textit{Proof.} Using $\absv{\Delta_i}\le 2$, and $\lambi i\ge2$, we have $\absv{\Delta_i}/\lambi i\le 1$. This implies that for any $j\ge1$, $(\absv{\Delta_i}/\lambi i)^j\le \absv{\Delta_i}/\lambi i$. This gives
\begin{align}
\absv{\sum_{j=2}^{\infty} \Paren{\frac{\Delta_i}{\lambi{i}}}^{j-1}\frac{\Delta_i(-1)^j}{(j-1)j}}\le& \sum_{j=2}^{\infty} \Paren{\frac{\absv{\Delta_i}}{\lambi{i}}}^{j-1}\frac{\absv{\Delta_i}}{(j-1)j}\nonumber \\
\le & \Paren{\frac{\absv{\Delta_i}}{\lambi{i}}}\cdot\sum_{j=2}^{\infty} \frac{\absv{\Delta_i}}{(j-1)j}\nonumber\\
= & \frac{\Delta_i^2}{\lambi i} \le 2.\nonumber
\end{align}
\item The second summation satisfies
\[
\absv{\sum_{\lambi{i}=1} \Brack{\frac{1+\Delta_i}{\ns}\log\frac{\ns}{1+\Delta_i} - \frac1{\ns}\log\ns}} \le \frac{8\log\ns}{n}
\]
whenever $\ns \ge 27$.
%\todo{Aaron thinks we need $\ns \ge 27$ here}.

\noindent\textit{Proof.}
\begin{align}
\absv{\sum_{\lambi{i}=1} \Brack{\frac{1+\Delta_i}{\ns}\log\frac{\ns}{1+\Delta_i} - \frac1{\ns}\log\ns}} =& 
\absv{\sum_{\lambi{i}=1,\Delta_i \ne 0} \Brack{\frac{1+\Delta_i}{\ns}\log\frac{\ns}{1+\Delta_i} - \frac1{\ns}\log\ns}} \nonumber\\
\le & \sum_{\lambi{i}=1,\Delta_i \ne 0} \absv{\Brack{\frac{1+\Delta_i}{\ns}\log\frac{\ns}{1+\Delta_i} - \frac1{\ns}\log\ns}} \nonumber\\
\le & 4\absv{\Brack{\frac{3}{\ns}\log\frac{\ns}{3} - \frac1{\ns}\log\ns}} \nonumber\\
\le &\frac{8\log\ns}{n},
\end{align}
where the middle inequality holds whenever $\ns \ge 27$.
\end{enumerate}
Using these five simple claims, we can bound the difference between $\empents{\lamb}$ and $\empents{\lamb'}$.
\end{proof}

\section{Proofs of Lemmas~\ref{lem:dtv-lipschitz} through~\ref{lem:emp-quant}}
\label{app:lipschitz}

\begin{proof}[Proof of Lemma~\ref{lem:dtv-lipschitz}]
The proof follows the same tools as that of proving the Lipschitzness of empirical entropy. 
By the relation between total variation and $\ell_1$ distance, we know that 
\[
Z(\lamb) = 2\cdot \sum_{i:\frac{\lambi{i}}{\ns}>\frac1\dims}\frac{\lambi{i}}{\ns}-\frac1\dims.
\]
Let $j$, and $j'$ be the largest indices such that $\lambi{j}/n>1/\dims$, and $\lamb{'}_{j'}/n>1/\dims$. From the fact that for any $\ell$, $-1\le \sum_{i=1}^{\ell} (\lambi{i}-\lamb{'}_i)\le1$, we conclude that $|j'-j|\le 3$. Moreover, at each of these three locations the value of $\lamb{'}_i$ and $\lambi{i}$ differ by at most two. Combining these results proves the bound. 
\end{proof}
\begin{proof}[Proof of Lemma~\ref{lem:l-one-error}]
Define
\[
Z(\hat p) \ed \sum_{i=1}^{\dims} \absv{\hat{p}(i)-\frac1\dims}.
\]
We will use following bound on the expected value of $Z(\hat p)$ for the uniform distribution. 
\begin{lemma} 
\label{lem:expected-lone}
\[
\expectation{Z(\hat p)}\ge \min\left\{\frac{1}{2}, \frac{8}{81}\sqrt{\frac{\dims}{\ns}}\right\}.
\]
\end{lemma}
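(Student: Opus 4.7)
The plan is to reduce the multi-dimensional $\ell_1$-deviation to a question about the mean absolute deviation of a single binomial, and then handle two complementary regimes. By symmetry of the uniform distribution and linearity of expectation, each marginal count $N_i$ is distributed as $\mathrm{Bin}(\ns,1/\dims)$, so
\[
\expectation{Z(\hat p)} \;=\; \dims\cdot\expectation{\Absvlu{\tfrac{N_1}{\ns}-\tfrac{1}{\dims}}} \;=\; \frac{\dims}{\ns}\,\expectation{\absv{N-\mu}},
\]
where $N\sim\mathrm{Bin}(\ns,1/\dims)$ and $\mu:=\ns/\dims$. After multiplying out, the lemma reduces to showing $\expectation{\absv{N-\mu}}\ge \min\bigl(\mu/2,\,(8/81)\sqrt{\mu}\bigr)$.

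In the sparse regime ($\ns\le\dims$, equivalently $\mu\le 1$), I would exploit that every nonempty bin satisfies $\hat p(i)=N_i/\ns\ge 1/\ns\ge 1/\dims$, which collapses the $\ell_1$ distance to a closed form. Writing $D=\absv{\{i:N_i\ge 1\}}$ for the empirical support size and splitting according to whether $N_i=0$, a direct computation using $\sum_i N_i=\ns$ yields $Z(\hat p)=2(1-D/\dims)$. Taking expectations with $\expectation{D}=\dims\bigl(1-(1-1/\dims)^\ns\bigr)$ gives the clean identity $\expectation{Z(\hat p)}=2(1-1/\dims)^\ns\ge 2e^{-2\ns/\dims}$, valid for $\dims\ge 2$. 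The lemma then reduces to the one-variable inequality $2e^{-2t}\ge\min(1/2,(8/81)/\sqrt t)$ on $t\in(0,1]$, which I would verify by examining the unimodal function $t\mapsto\sqrt t\,e^{-2t}$ at its endpoints and at its maximum $t=1/4$.

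In the dense regime ($\ns>\dims$, equivalently $\mu>1$), I would invoke a fourth-moment/Khintchine-type argument. With $\sigma^2=\mu(1-1/\dims)\ge\mu/2$ and $p=1/\dims$, the explicit binomial fourth central moment $\expectation{(N-\mu)^4}=\sigma^2\bigl[1+3(\ns-2)p(1-p)\bigr]\le\sigma^2+3\sigma^4$ is at most $4\sigma^4$ whenever $\sigma^2\ge 1$. Combining Cauchy-Schwarz in the form $\sigma^4=\expectation{(N-\mu)^2}^2\le\expectation{\absv{N-\mu}}\cdot\expectation{\absv{N-\mu}^3}$ with Jensen's inequality $\expectation{\absv{N-\mu}^3}\le\expectation{(N-\mu)^4}^{3/4}$ gives
\[
\expectation{\absv{N-\mu}} \;\ge\; \frac{\sigma^4}{(4\sigma^4)^{3/4}} \;=\; \frac{\sigma}{2\sqrt 2} \;\ge\; \frac{\sqrt\mu}{4},
\]
and $1/4>8/81$. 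The thin sliver $\mu\in(1,\dims/(\dims-1))$ where $\sigma^2<1$ can be covered by the exact identity $\expectation{\absv{N-\mu}}=2\mu(1-1/\dims)^{\ns-1}(\mu-1/\dims)$, obtained from $\expectation{N-\mu}=0$ and computing the only nonzero left-tail terms $(\mu-N)_+$ at $N\in\{0,1\}$; on this interval the bound is already $\Omega(1)$, comfortably beating $(8/81)\sqrt\mu$.

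The main obstacle is arithmetic bookkeeping of the constants. The Khintchine step comfortably yields $1/4>8/81$ in the dense regime, but the sparse-regime inequality $\sqrt t\,e^{-2t}\ge 4/81$ is reasonably tight near $t=1$, and bridging the two regimes across $\mu\approx 1$ for small $\dims$ forces us to fall back on the explicit binomial identity rather than a generic concentration bound.
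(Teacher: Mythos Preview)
Your proof is correct and takes a genuinely different route from the paper's. Both arguments begin with the same reduction
\[
\expectation{Z(\hat p)}=\frac{\dims}{\ns}\,\expectation{\absv{N-\mu}},\qquad N\sim\mathrm{Bin}(\ns,1/\dims),\ \mu=\ns/\dims,
\]
but diverge from there. The paper invokes De Moivre's mean absolute deviation identity $\expectation{\absv{N-\mu}}=2\ns p(1-p)\max_k b(\ns-1,p;k)$ and then lower-bounds the binomial mode via Chebyshev (at least $8/9$ of the mass lies on at most $9\sqrt{\mu}$ integers, so some point has mass $\ge (8/81)/\sqrt{\mu}$); the sparse case $\ns\le\dims/2$ is dispatched deterministically by counting empty bins. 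You instead compute $\expectation{Z(\hat p)}=2(1-1/\dims)^{\ns}$ exactly when $\ns\le\dims$ via the support-size identity, and in the dense regime use a Paley--Zygmund/Khintchine step (Cauchy--Schwarz on $\expectation{|X|^{1/2}\cdot|X|^{3/2}}$ combined with $\expectation{|X|^3}\le\expectation{X^4}^{3/4}$ and the explicit binomial fourth moment) to get $\expectation{\absv{N-\mu}}\ge\sigma/(2\sqrt2)\ge\sqrt{\mu}/4$. Your dense-regime constant $1/4$ is actually sharper than the paper's $8/81$, and your argument is self-contained in that it avoids the De Moivre identity; the price is an extra ``sliver'' case $\mu\in(1,\dims/(\dims-1))$ that you handle by the exact left-tail computation, whereas the paper's Chebyshev step covers all $\ns>\dims/2$ uniformly.
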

Assuming this result, we can prove Lemma~\ref{lem:l-one-error} in the following two cases. 
\paragraph{Case 1: $\ns < \dims/4$.} In this case, at least $3\dims/4$ $\hat p(i)$'s are zero. Therefore, with probability one, 
\[
Z(\hat p)\ge \frac{3\dims}4\cdot \frac1{\dims} = \frac 34.
\]
\paragraph{Case 2: $\ns \ge \dims/4$.} In this case, the second term in Lemma~\ref{lem:expected-lone} controls the minimum. Suppose $\eps<1/(2c_1)$. We can find a constant $c_3$ such that when $\ns<c_3\dims/\eps^2$, $\expectation{Z(\hat p)}>2\eps$, which implies $\expectation{Z(\hat p)}-\eps>\expectation{Z(\hat p)}/2$. Applying McDiarmid's inequality on $\hat p$, noting that changing one symbol changes $Z(\hat p)$ by at most $2/\ns$, 
\begin{align}
\probof{Z(\hat p)<{\eps}}
= &\probof{\expectation{Z(\hat p)}-Z(\hat p)>\expectation{Z(\hat p)}-{\eps}}\nonumber\\
\le &\probof{\expectation{Z(\hat p)}-Z(\hat p)>\frac{\expectation{Z(\hat p)}}{2}}\nonumber\\
\le &\exp\Paren{-\frac{2\Paren{\frac{\expectation{Z(\hat p)}}{2}}^2}{n\cdot (2/\ns)^2}} \nonumber\\
= & \exp\Paren{-\frac{n\cdot{{\expectation{Z(\hat p)}}}^2}{8}} \nonumber\\
\le & \exp\Paren{-\frac{\dims}{8c_2^2}}, \nonumber
\end{align}
where the last step used that $\expectation{Z(\hat p)}\ge\frac1{c_2}\sqrt{\frac{\dims}{\ns}}$.
\end{proof}

\begin{proof}[Proof of Lemma~\ref{lem:expected-lone}]
$\ns \hat p(i)$ is distributed ${\rm Bin}(\ns, \frac1\dims)$. Suppose $X\sim  {\rm Bin}(\ns, \frac1\dims)$, then by the linearity of expectations,
\begin{align}
\expectation{Z(\hat p)} = {\dims}\cdot\expectation{\absv{\frac X{\ns}-\frac{1}{\dims}}}
= \frac{\dims}{\ns}\cdot\expectation{\absv{X-\frac{\ns}{\dims}}}.\label{eq:l-one}
\end{align}
There is clean expression for the expected absolute deviations of Binomial random variables~\cite[Lemma 1.4]{mattner2003mean}:
\begin{lemma}[De Moivre’s mean absolute deviation identity]
\label{lem:demoivre}
Let $b(n, p; k) = {\ns \choose k}p^{k}(1-p)^{\ns-k}$ be the Binomial probability, and let $Y\sim{\rm Bin}(\ns, p)$. Then
\[
\expectation{\absv{X-{\ns p}}} = 2\ns p(1-p)\max_k b(\ns-1, p;k).
\] 
\end{lemma}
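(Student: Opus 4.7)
The plan is to reduce $E[|X-np|]$ to a single binomial term via two elementary identities for $b(n,p;k)$, and then identify that term as the mode of $\mathrm{Bin}(n-1,p)$. The starting observation is that $E[X-np]=0$, so the positive and negative deviations contribute equally. Setting $m=\lfloor np\rfloor$, this gives
\begin{equation*}
  E[|X-np|] \;=\; 2\sum_{k>m}(k-np)\,b(n,p;k),
\end{equation*}
and the task reduces to evaluating this one-sided tail sum in closed form.

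The first ingredient I would invoke is the factorial identity $k\,b(n,p;k)=np\,b(n-1,p;k-1)$, which shifts a factor of $k$ into a reindexing. Applying it to the $k$-term and pulling out $np$ from the $(k-np)$-term turns the sum above into
\begin{equation*}
  np\,\Bigl[\,\sum_{j\ge m} b(n-1,p;j)\;-\;\sum_{j\ge m+1}b(n,p;j)\,\Bigr].
\end{equation*}
The second ingredient is the Pascal-type recursion $b(n,p;j)=p\,b(n-1,p;j-1)+(1-p)\,b(n-1,p;j)$, which lets me rewrite the second tail sum in terms of tails of $\mathrm{Bin}(n-1,p)$. After cancellation the bracket collapses to the single boundary term $(1-p)\,b(n-1,p;m)$, so that $E[|X-np|]=2np(1-p)\,b(n-1,p;m)$.

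The last step is to recognize that $b(n-1,p;\cdot)$ is maximized exactly at $k=m=\lfloor np\rfloor$: the ratio $b(n-1,p;k)/b(n-1,p;k-1)=\tfrac{(n-k)p}{k(1-p)}$ exceeds $1$ iff $k<np$, pinpointing the mode. This identifies $b(n-1,p;m)=\max_k b(n-1,p;k)$ and yields the claimed identity.

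The whole argument is essentially a telescoping computation, so the only mildly tricky part is keeping the index shifts straight when combining the two identities; both identities are standard, and no analytic tools beyond finite summation are required.
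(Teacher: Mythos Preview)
Your argument is correct: the telescoping via $k\,b(n,p;k)=np\,b(n-1,p;k-1)$ and the Pascal recursion $b(n,p;j)=p\,b(n-1,p;j-1)+(1-p)\,b(n-1,p;j)$ cleanly collapses the one-sided tail sum to $(1-p)\,b(n-1,p;\lfloor np\rfloor)$, and the mode identification is accurate (including the boundary case $np\in\mathbb{Z}$, where $\lfloor np\rfloor$ and $\lfloor np\rfloor-1$ are joint modes, so $\lfloor np\rfloor$ still attains the maximum).

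The paper, however, does not prove this lemma at all---it is quoted as a known identity (attributed to De~Moivre) with a citation to Mattner's 2003 note, where it appears as Lemma~1.4. So your write-up supplies strictly more than the paper does here; the classical derivation is essentially the one you outline.
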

We will be interested in $p = \frac1\dims$, and would like to bound $\max_k b(\ns-1, p;k)$. It is possible to use Stirling's approximation for specific values of $k$, but we find it easier to simply apply Chebychev's inequality. By Chebychev's inequality, for any $a>0$, 
\[
\probof{\absv{X-\frac{\ns}{\dims}}>a\cdot \sqrt{\ns\frac1\dims\Paren{1-\frac1{\dims}}}}<\frac1{a^2}.
\]
Suppose $a=3$. Then
\[
\probof{\absv{X-\frac{\ns}{\dims}}> 3 \cdot \sqrt{\frac\ns\dims}}<\frac1{9}.
\]
When $\ns \le \dims/2$, then with probability one, $Z(\hat p)\ge\frac12$.
When $\ns>\dims/2$, then at least one integer lies in the interval $\left[\frac{\ns}{\dims}- 3 \cdot \sqrt{\frac\ns\dims}, \frac\ns\dims+ 3 \cdot \sqrt{\frac\ns\dims}\right]$. The number of integers in this interval is at most $6\sqrt{\frac\ns\dims}+1<9\sqrt{\frac\ns\dims}$. Therefore, there is some $\ell$ such that 
\[
\probof{X=\ell}> \frac89\cdot \frac1{9}\sqrt{\frac\dims\ns}.
\]
We can plug in this expression into Lemma~\ref{lem:demoivre} to obtain
\[
\expectation{\absv{X-{\frac\ns \dims}}} \ge 2\ns\frac1\dims\left(1-\frac1\dims\right)\cdot \frac8{81}{\sqrt{\frac\dims \ns}}\ge \frac8{81}\cdot \frac{\ns}{\dims}\cdot \sqrt{\frac{\dims}{\ns}}.
\]
Plugging this in~\eqref{eq:l-one} proves the lemma.
\end{proof}

%We divide the proof into two cases. 
%\paragraph{Case 1: $\ns<5\dims$.}
%We need the following lemma for this result. 
%\begin{lemma}
%
%\end{lemma}

\begin{proof}[Proof of Lemma~\ref{lem:emp-quant}]
Recall the LIS interpretation of WSS. Suppose the underlying distribution is uniform, namely the state is maximally mixed. Let $\hat{p}$, be the \emph{sorted} plug-in distribution, and $\lamb/n$ is the EYD distribution. Then, Lemma~\ref{lem:emp-eyd} states that $\lamb/\ns$ majorizes $\hat{p}$. Let $j$ be largest index such that $\hat{p}(i)>1/\dims$. Then, 
\begin{align}
Z(\hat p)  = &2\cdot \left(\sum_{i=1}^{j} {\hat{p}(i)-\frac1\dims}\right)\nonumber\\
\le &2 \cdot \Paren{\sum_{i=1}^{j} \absv{\frac{\lambi{i}}{\ns}-\frac1\dims}}\label{eq:maj}\\
\le &Z(\lamb),
\end{align}
where~\eqref{eq:maj} follows from Lemma~\ref{lem:emp-eyd}, and the last step uses Definition~\ref{def:distances}. Invoking Lemma~\ref{lem:l-one-error} proves the claim. 
\end{proof}

\section{Proof of Lemma~\ref{lem:diff-moments}}
\label{sec:large-alpha}

We first show two results that will be used to prove Lemma~\ref{lem:diff-moments}. The first is a simple application of the mean value theorem.
\begin{lemma}
Let $\beta>1$, and $x> y>0$. Then
\[
x^{\beta}-y^{\beta}\le \Paren{x-y}\beta x^{\beta-1}.
\]
\label{lem:diff-moment}
\end{lemma}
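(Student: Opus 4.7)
The plan is to apply the mean value theorem to the function $f(t) = t^\beta$ on the interval $[y,x]$. Since $\beta > 1$, $f$ is continuously differentiable on $(0,\infty)$ with derivative $f'(t) = \beta t^{\beta - 1}$, so the MVT yields some $\xi \in (y,x)$ such that
\[
x^\beta - y^\beta = (x - y)\, \beta\, \xi^{\beta - 1}.
\]

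The next step is to observe that because $\beta - 1 > 0$, the function $t \mapsto t^{\beta - 1}$ is strictly increasing on $(0,\infty)$. Combined with $\xi < x$, this gives $\xi^{\beta - 1} < x^{\beta - 1}$, hence
\[
x^\beta - y^\beta \;=\; (x - y)\, \beta\, \xi^{\beta - 1} \;\le\; (x - y)\, \beta\, x^{\beta - 1},
\]
using also that $x - y > 0$ so the inequality direction is preserved. This is exactly the claim.

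There is no real obstacle here: the whole argument is a one-line application of the MVT followed by monotonicity of $t^{\beta-1}$. The only thing to be mindful of is that we need $\beta \ge 1$ (so that $t^{\beta-1}$ is nondecreasing on $(0,\infty)$ and the derivative extends nicely to the endpoint $y=0$ if needed); the hypothesis $\beta > 1$ in the statement covers this. An alternative, purely calculus-free derivation would be to write
\[
x^\beta - y^\beta \;=\; \int_y^x \beta\, t^{\beta - 1}\, dt \;\le\; \beta\, x^{\beta - 1} \int_y^x dt \;=\; (x - y)\, \beta\, x^{\beta - 1},
\]
which makes the bound transparent and avoids invoking the existence of $\xi$. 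I would present this integral version since it is self-contained and immediately yields the claim.
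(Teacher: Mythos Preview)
Your proof is correct and follows essentially the same approach as the paper: the paper also applies the mean value theorem to $t^\beta$ on $[y,x]$ and bounds the derivative by its maximum $\beta x^{\beta-1}$. Your integral version is a harmless rephrasing of the same idea.
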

\noindent\textit{Proof.} 
By the mean value theorem, 
\[
\frac{x^{\beta}-y^{\beta}}{x-y}\le \max_{z\in[y,x]} \frac{d z^{\beta}}{dz}
\le \beta x^{\beta-1}. \eqed
\]

The next result bounds the moments of a random variable that has exponential tail decay probability. 
\begin{lemma}
Suppose $Z$ is a random variable such that 
\begin{align}
\probof{\absv{Z}>t}<2e^{-t^2/2n},\nonumber
\end{align}
then for any $\beta>0$, there is a constant $C_\beta$ such that
\[
\expectation{|Z|^\beta} < C_\beta n^{\beta/2}.
\]
\label{lem:bounding-moments}
\end{lemma}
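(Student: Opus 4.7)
The plan is to use the layer-cake (tail integral) representation of the expectation and then substitute to reduce the integral to a Gamma function.

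First, I would write
\[
\EE[|Z|^\beta] \;=\; \int_0^\infty \beta\, t^{\beta-1}\, \Pr(|Z|>t)\, dt,
\]
which holds for any nonnegative random variable. Plugging in the hypothesized tail bound yields
\[
\EE[|Z|^\beta] \;\le\; 2\beta \int_0^\infty t^{\beta-1}\, e^{-t^2/(2n)}\, dt.
\]
Next I would perform the substitution $u = t^2/(2n)$, so that $t = \sqrt{2nu}$ and $dt = \sqrt{n/(2u)}\, du$. After collecting the powers of $2$ and $n$, the integral becomes
\[
2\beta\, (2n)^{(\beta-1)/2}\, \sqrt{n/2}\, \int_0^\infty u^{\beta/2 - 1} e^{-u}\, du \;=\; \beta\, 2^{\beta/2}\, \Gamma(\beta/2)\, n^{\beta/2}.
\]
This gives the claim with the explicit constant $C_\beta = \beta\, 2^{\beta/2}\, \Gamma(\beta/2)$, which depends only on $\beta$.

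There is no real obstacle here — this is a standard tail-bound-to-moment-bound argument. The only thing to be careful about is that the bound $2e^{-t^2/(2n)}$ is actually vacuous (exceeds $1$) for small $t$, but that is harmless because the layer-cake formula allows $\Pr(|Z|>t)\le 1$ to be replaced by any upper bound; the integral over small $t$ contributes at most $\int_0^{t_0} \beta t^{\beta-1}\, dt = t_0^\beta$ to $\EE[|Z|^\beta]$, which is absorbed into $C_\beta n^{\beta/2}$ for any fixed $t_0 = O(\sqrt n)$. So the clean substitution above already yields the claimed inequality.
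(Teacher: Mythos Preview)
Your proof is correct and follows essentially the same route as the paper: layer-cake representation of $\EE[|Z|^\beta]$, insertion of the sub-Gaussian tail bound, and the substitution $u=t^2/(2n)$ to reduce to a Gamma integral, yielding the identical constant $C_\beta=\beta\,2^{\beta/2}\Gamma(\beta/2)$.
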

\begin{proof}
For a non-negative random variable $X$,
\[
\expectation{X} = \int_0^{\infty} \probof{X>t}dt.
\]
Using this for $|Z|^\beta$, 
\begin{align}
\expectation{|Z|^{\beta}} &= \int_{0}^{\infty} \probof{\absv{Z}^{\beta}>t^{\beta}}dt^{\beta}\nonumber\\
& = \int_{0}^{\infty}\beta t^{\beta-1} \probof{\absv{Z}>t}dt\nonumber\\
& \le \int_{0}^{\infty}2\beta t^{\beta-1} e^{-t^2/2n}dt.\nonumber
\end{align}
Using the transformation $u = t^2/2n$, we have $dt = \sqrt{2n/u}du/2$. This implies 
\begin{align}
\expectation{|Z|^{\beta}}
& \le \int_{0}^{\infty}\beta \Paren{\sqrt{2nu}}^{\beta-1} e^{-u}\sqrt{\frac{2n}u}du.\nonumber\\
& = \beta\Paren{{2n}}^{\beta/2}\int_{0}^{\infty}u^{\beta/2-1}  e^{-u}du\nonumber\\
& = \Paren{2^{\beta/2}\beta\Gamma\Paren{\frac\beta2}}{n}^{\beta/2},\nonumber
\end{align}
where $\Gamma(x)$ is the Gamma function. Choosing $C_\beta = 2^{\beta/2}\beta\Gamma(\frac\beta2)$ proves the lemma.
\end{proof}

\begin{proof}[Proof of Lemma~\ref{lem:diff-moments}]
Theorem 1.4 of~\cite{OW17} implies that
\[
\eigvi{i}\ns -2\sqrt{\ns}\le\expectation{\lambi{i}} \le  \eigvi{i}\ns +2\sqrt{\ns}.
\]
Let $\expectation{\lambi{i}} = \eigvi{i}\ns +\diffmean$, then $\absv{\diffmean}\le 2\sqrt\ns$. 
By the triangle inequality, 
\begin{align}
\expectation{\absv{\lambi{i}^{\beta}-(\eigvi{i}\ns)^\beta}}
\le\ & \expectation{\absv{\lambi{i}^{\beta}-\Paren{\expectation{\lambi{i}}}^{\beta}}+\absv{\Paren{\expectation{\lambi{i}}}^{\beta}-(\eigvi{i}\ns)^\beta}}\nonumber\\
=\ & \expectation{\absv{\lambi{i}^{\beta}-\Paren{\eigvi{i}\ns +\diffmean}^{\beta}}}+\absv{\Paren{\eigvi{i}\ns +\diffmean}^{\beta}-(\eigvi{i}\ns)^\beta}.\nonumber
\end{align}
Consider the second term. By Lemma~\ref{lem:diff-moment}, and $|B|<2\sqrt\ns$, 
\begin{align}
\absv{\Paren{\eigvi{i}\ns +\diffmean}^{\beta}-(\eigvi{i}\ns)^\beta} 
\le &\ \beta \cdot 2\sqrt\ns \Paren{\eigvi{i}\ns +2\sqrt\ns}^{\beta-1}\nonumber\\
< &\ \beta2^{\beta}\cdot  \Paren{\sqrt\ns\cdot (\eigvi{i}\ns)^{\beta-1}+\Paren{2\sqrt\ns}^{\beta}}.\nonumber
\end{align}
 Let $Z_i = \lambi{i}-\expectation{\lambi{i}} = \lambi{i}-\eigvi{i}\ns -\diffmean$. Then $Z_i$ is a zero mean random variable. Recall from~\eqref{eqn:conc-lambi},
$\probof{\absv{Z_i}>t}\le e^{-t^2/2\ns}.$
Again applying Lemma~\ref{lem:diff-moment} to the first term, 
\begin{align}
\expectation{\absv{\lambi{i}^{\beta}-\Paren{\eigvi{i}\ns +\diffmean}^{\beta}}} 
= &\ \expectation{\absv{\Paren{Z_i+\eigvi{i}\ns +\diffmean}^{\beta}-\Paren{\eigvi{i}\ns +\diffmean}^{\beta}}} \nonumber\\
\le  & \ \beta\expectation{\absv{Z_i}\Paren{\absv{Z_i}+\eigvi{i}\ns +\absv{B}}^{\beta-1}}\nonumber\\
\le & \ \beta 3^{\beta-1} \expectation{\absv{Z_i}\Paren{\absv{Z_i}^{\beta-1}+(\eigvi{i}\ns)^{\beta-1} +\absv{B}^{\beta-1}}}\nonumber\\
= & \ \beta 3^{\beta-1} \left(\expectation{\absv{Z_i}^{\beta}}+(\eigvi{i}\ns)^{\beta-1}\cdot\expectation{\absv{Z_i}} +\absv{B}^{\beta-1} \cdot\expectation{\absv{Z_i}} \right)\nonumber
\end{align}
%\todo{Aaron added the last $\expectation{\absv{Z_i}}$ factor in the above equation}
Applying Lemma~\ref{lem:bounding-moments} to $Z_i$ proves the lemma. 
\end{proof}

\section{Proof of Lemma~\ref{lem:diff-moment-small-alpha}}
\label{sec:app-small-beta}
For $x_1, x_2\ldots x_n\in\RR$ such that $\sum_ix_i\ge0$ and $\beta<1$, 
\begin{align}
(x_1+\ldots +x_n)^\beta \le \absv{x_1}^\beta+\ldots+\absv{x_n}^\beta.\label{eqn:smallbeta}
\end{align}
Using this and the notation from the proof of Lemma~\ref{lem:diff-moments}, we first show that 
\[
\absv{\lambi{i}^{\beta}-(\eigvi{i}\ns)^\beta} = \absv{{\Paren{Z_i+ \eigvi{i}\ns +B}^{\beta}-(\eigvi{i}\ns)^\beta}} \le |Z_i|^{\beta}+\absv{B}^{\beta}.
\]
If $Z_i+ \eigvi{i}\ns +B \ge \eigvi{i}\ns$, then this follows
by \eqref{eqn:smallbeta} because
\[
\Paren{Z_i+ \eigvi{i}\ns +B}^{\beta}\le (\eigvi{i}\ns)^\beta +|Z_i|^{\beta}+\absv{B}^{\beta}.
\]
If $Z_i+ \eigvi{i}\ns +B \le \eigvi{i}\ns$, then this follows
by \eqref{eqn:smallbeta} because
\[
(\eigvi{i}\ns)^\beta= \Paren{Z_i+ \eigvi{i}\ns +B -Z_i - B}^{\beta} \le \Paren{Z_i+ \eigvi{i}\ns +B}^{\beta} +|Z_i|^{\beta}+\absv{B}^{\beta}.
\]
Therefore, 
\begin{align}
\expectation{\absv{\lambi{i}^{\beta}-(\eigvi{i}\ns)^\beta}}
\le \expectation{{|Z_i|^{\beta}+\absv{B}^{\beta}}} 
\le C_{\beta}\cdot \ns^{\beta/2},
\end{align}
where we used by Lemma~\ref{lem:bounding-moments} that $\expectation{|Z_i|^{\beta}}<C\ns^{\beta/2}$, and that $\absv{B}\le 2\sqrt\ns$.

\section{Proof of Lemma~\ref{lem:conditioning}}
\label{sec:app-conditioning}

We first show that for $\dims$ large enough, with probability at least 0.98, $\lambi{1}$ is equal to the number of 1's. In other words, the longest non-decreasing subsequence simply corresponds to all the 1's in the sequence. 

The proof uses the following result on the probability that a biased random walk never returns to the origin. 
\begin{lemma}
Let $S_1,\ldots,$ be a biased random walk of length $\ns$ starting at the origin. Let $p>0.99$ be the probability of taking a step to the right. Then with probability at least 0.98 $S_i>0$ for all $i>0$. 
\end{lemma}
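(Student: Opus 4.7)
The plan is to reduce the problem to a standard gambler's ruin calculation. The event that $S_i > 0$ for all $i > 0$ forces the first step to be $+1$, which occurs with probability $p$. Conditioned on $S_1 = 1$, what remains is to lower bound the probability that a biased random walk starting at $1$ never returns to the origin over the remaining $n-1$ steps, and this is at least the probability of never returning to the origin over an infinite horizon.

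For the infinite horizon analysis, I would invoke the standard result that for a biased random walk with rightward probability $p > 1/2$, the probability of ever hitting $0$ starting from state $k \ge 1$ equals $\bigl((1-p)/p\bigr)^k$. This is proved by solving the one-step recurrence $f(k) = p\,f(k+1) + (1-p)\,f(k-1)$ with boundary conditions $f(0)=1$ and $\lim_{k\to\infty} f(k)=0$; the latter boundary condition is justified by the strong law of large numbers, since $S_n/n \to 2p-1 > 0$ almost surely, so the walk drifts to $+\infty$.

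Specializing to $k = 1$, the probability of hitting $0$ is $(1-p)/p$, so the probability of never returning to $0$ is $1 - (1-p)/p = (2p-1)/p$. Multiplying by the probability $p$ of the correct first step gives
\[
\Pr\bigl(S_i > 0 \text{ for all } 1 \le i \le n\bigr) \;\ge\; p \cdot \frac{2p-1}{p} \;=\; 2p - 1 \;>\; 2(0.99) - 1 \;=\; 0.98,
\]
where we used the assumption $p > 0.99$.

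There is no real obstacle here beyond carefully invoking the gambler's ruin formula and noting that the finite-horizon event is implied by (hence at least as likely as) the infinite-horizon event of never hitting the origin, so the finite-$n$ bound follows directly from the infinite-horizon computation without any additional work.
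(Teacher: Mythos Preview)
Your proof is correct and is essentially the same gambler's ruin (``drunkard on the cliff'') argument that the paper invokes, only spelled out in full. The paper's one-line proof simply cites the classical result and quotes the probability as $1-0.01/0.99>0.98$, whereas your explicit decomposition into the first step (probability $p$) and the subsequent never-return event (probability $(2p-1)/p$) yields the exact value $2p-1>0.98$; both establish the lemma.
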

\begin{proof}
This is ``the drunkard on the cliff'' problem \cite[Section 2.12]{Jacobs2012Discrete}. The probability is  $1-0.01/0.99>0.98$.
\end{proof}

{Let $\cE_1'$ denote the event that there are more 1's than all other elements combined in $X_i, \ldots, X_\ns$, for all $i\ge1$. By the lemma, it follows
that when $1-\frac{\eps}{\dims^{\frac1{\renprm}-1}}>0.99$,
 $\cE_1'$ has probability at least 0.98. The event $\cE_1'$ in turn
implies that $\lamb_1$ is equal to the number of 1's, as desired.
It follows that on $\cE_1'$ all the rows except the first one are determined by the appearances of the remaining symbols, which are from a uniform distribution.}

We now bound the probability of $\cE_2$. 
We first note that $\lambi{1}$ is always at least the number of occurrences of 1. Therefore, $M$ is at most a Binomial $Bin(\ns, \eps/\dims^{1/\renprm-1})$. Now, if $\ns$ is at most $c\dims^{1+1/\renprm}/\eps$, then $\expectation{M} \le c\cdot\dims^2$. Therefore, by the Chebychev's inequality we can assume that $M<2c\dims^2$ with probability at least 0.98. When we condition on $\cE_1'$, using the fact that the $\eiglbi i$'s are all the same for $i>1$, the distribution of $\lambi{2},\ldots, \lambi{\dims+1}$ is the same as obtained by $M$ independent draws from a uniform distribution over $\dims$ symbols. Therefore, we can invoke Theorem~\ref{thm:emp-tv-unif} to prove that $\cE_2$ happens. 

Finally, $\cE_3$ holds when

\[{M}<\frac{\eps\ns}{\dims^{1/\renprm-1}} (1+\beta^2\renprm(1-\renprm)/2)^{1/\renprm}.\]

Note that $M$ has expected value of at most $\ns\eps/\dims^{1/\renprm-1}$. Therefore, by Binomial concentration bounds, there is a constant $C$ such that when $\ns\eps/\dims^{1/\renprm-1}>C$, the equation above holds with probability at least 0.98. {This only requires that $n>C\dims^{1/\renprm-1}/\eps$, which is guaranteed by the hypotheses of the lemma.}

\section{Proof of Lemma~\ref{lem:tv-renyi-small}}
\label{sec:app-far-from-uniform}
 Let $S=\{i: p_i>\frac1\dims\}$, and $|S|=j$. Then, 
\[
\dtv{p}{u} = \sum_{i\in S}\Paren{p_i -\frac1\dims} = p_S -\frac j\dims.
\]
Let $\bar p$ be a distribution over $[\dims]$ defined as follows. For $i\in S$, let 
$\bar p_i = \frac{p_S}j$, and for $i\in [\dims]\setminus S$, $\bar p_i = \frac{(1-p_S)}{\dims-j}$. Since $\dtv{p}{u} = \gamma$, we have $p_S = \frac j\dims+\gamma$, and for $i\in S$, 
$\bar p_i = \frac1{\dims}+\frac{\gamma}j$, and for $i\in [\dims]\setminus S$, $\bar p_i = \frac1{\dims} -\frac{\gamma}{\dims-j}$. Since probabilities are non-negative, $\dims - j\ge\gamma\dims$, implying that 
\begin{align}
1-\gamma\ge\frac j\dims.\label{eqn:small-renprm-req}
\end{align}

$\bar p$ is a two step distribution. Moreover, all elements with probability larger than $\frac1\dims$ in $p$ have a probability larger than $\frac1\dims$ in $\bar p$, and all elements with probability at most $\frac1\dims$ have probability at most $\frac1\dims$ in $\bar p$. Therefore,  
\[
\dtv{\bar p}{u} = \dtv{p}{u}=\gamma>\eps. 
\]
Note that $x^{\renprm}$ is a concave function in $x$, for $\renprm<1$. Therefore, 
\[
\sum_{i=1}^{\dims}\bar p_i^{\renprm}\ge\sum_{i=1}^{\dims} p_i^{\renprm}.
\]
Now, we bound $\sum_{i=1}^{\dims}\bar p_i^{\renprm}$. 
\begin{align}
\sum_{i=1}^{\dims}\bar p_i^{\renprm} =& \  j\cdot\Paren{\frac1{\dims}+\frac{\gamma}j}^{\renprm} + \Paren{\dims-j}\cdot \Paren{\frac1{\dims} -\frac{\gamma}{\dims-j}}^{\renprm}\nonumber\\
= & \ j^{1-\renprm}\cdot\Paren{\frac j{\dims}+{\gamma}}^{\renprm} + \Paren{\dims-j}^{1-\renprm}\cdot \Paren{\frac{\dims-j}{\dims} -{\gamma}}^{\renprm}\nonumber\\
= &\ \dims^{1-\renprm}\cdot\Paren{\Paren{\frac{j}{\dims}}^{1-\renprm}\cdot\Paren{\frac j{\dims}+{\gamma}}^{\renprm} + \Paren{1-\frac{j}{\dims}}^{1-\renprm}\cdot \Paren{1-\frac{j}{\dims} -{\gamma}}^{\renprm}}.\nonumber
\end{align}
Let $x = \frac j\dims$, then from~\eqref{eqn:small-renprm-req}, $0<x\le 1-\gamma$. Then, 
\begin{align}
\sum_{i=1}^{\dims}\bar p_i^{\renprm} =\dims^{1-\renprm}\cdot\Paren{x^{1-\renprm}\cdot\Paren{x+{\gamma}}^{\renprm} + \Paren{1-x}^{1-\renprm}\cdot \Paren{1-x -{\gamma}}^{\renprm}}.
\end{align}
To complete the proof, we show that the term in the parentheses is at most $1-\renprm(1-\renprm) \gamma^2$.
 Suppose $0<\renprm<1$, and $0<\eps<1$. Then, we show that for $x\in[0,1-\eps]$
\begin{align}
x^{1-\renprm}(x+\eps)^\renprm+(1-x)^{1-\renprm}(1-x-\eps)^\renprm <1-\renprm\Paren{1-\renprm} \eps^2.\label{eqn:calc}
\end{align}
\begin{proof}[Proof of~\eqref{eqn:calc}]
We use the generalized Binomial theorem (Binomial series). For $0<y\le1$, and $\renprm>0$, 
\[
(1-y)^\renprm = \sum_{\ell=0}^{\infty} \frac{\flnpwr{\renprm}{\ell}}{\ell!}\cdot (-y)^{\ell}, 
\]
where $\flnpwr{\renprm}{\ell}= \renprm(\renprm-1)\ldots (\renprm-\ell+1)$ as before denotes the falling powers. 
When $\renprm\in(0,1)$ and $y>0$, note that for all $\ell>1$, the signs of $\flnpwr{\renprm}{\ell}$ and $(-1)^\ell$ are different, implying that all the terms beyond the first $(\ell = 0)$ are at most zero. For $\renprm\in(0,1)$, and $y>0$, truncating beyond two terms above,
\begin{align}
(1-y)^\renprm < 1 -\renprm y - \frac{\renprm(1-\renprm)}2 y^2.\label{eqn:gen-bin-approx}
\end{align}
We now proceed to bound~\eqref{eqn:calc}. 
\begin{align}
 &\ x^{1-\renprm}(x+\eps)^\renprm+(1-x)^{1-\renprm}(1-x-\eps)^\renprm \nonumber\\
 = & \ (x+\eps)\cdot \Paren{\frac{x}{x+\eps}}^{1-\renprm} + (1-x)\cdot\Paren{\frac{1-x-\eps}{1-x}}^{\renprm}\nonumber\\
 = & \ (x+\eps)\cdot \Paren{1-\frac{\eps}{x+\eps}}^{1-\renprm} + (1-x)\cdot\Paren{1-\frac{\eps}{1-x}}^{\renprm}\nonumber\\
 < & \ (x+\eps)\Paren{1 - \frac{(1-\renprm)\eps}{x+\eps}  - \frac{\renprm(1-\renprm)}2 {\frac{\eps^2}{(x+\eps)^2}}} + 
 (1-x)\Paren{1 - \frac{\renprm\eps}{1-x}  - \frac{\renprm(1-\renprm)}2 {\frac{\eps^2}{(1-x)^2}}}\label{eqn:main-step-calc}\\
 = & \ 1 - \eps^2\cdot \frac{\renprm(1-\renprm)}2\Paren{\frac1{x+\eps}+\frac1{1-x}}\nonumber\\
 <& \ 1-{\renprm(1-\renprm)}\cdot\eps^2\label{eqn:final-calc}
\end{align}
where~\eqref{eqn:main-step-calc} uses~\eqref{eqn:gen-bin-approx}, and~\eqref{eqn:final-calc} uses $x+\eps\le1$  and $1-x\le1$. 
\end{proof}

\end{document}